\newtheorem{prop}[subsection]{Proposition}
\newtheorem{lemma}[subsection]{Lemma}
\newtheorem{theorem}[subsection]{Theorem}
\newtheorem{corollary}[subsection]{Corollary}
\newtheorem{conjecture}[subsection]{Conjecture}
\newtheorem{remark}[subsection]{Remark}
\theoremstyle{definition}
\newtheorem{definition}[subsection]{Definition}
\newcommand{\C}{\mathbb{C}}
\newcommand{\R}{\mathbb{R}}
\title{Clifford and quadratic composite operators with applications to non-Hermitian physics}
\author{Jos\'{e} J. Garcia$^1$ \\ josejgarcia@unm.edu}
\affil{$^1$Department of Mathematics and Statistics, University of New Mexico, Albuquerque, NM, USA}
\date{September 2025}
\begin{document}

\maketitle

\begin{abstract}
    A variety of physical phenomena, such as amplification, absorption, and radiation, can be effectively described using non-Hermitian operators.
    However, the introduction of non-uniform non-Hermiticity can lead to the formation of exceptional points in a system's spectrum, where two or more eigenvalues become degenerate and their associated eigenvectors coalesce causing the underlying operator or matrix to become defective.
    Here, we explore extensions of the Clifford and quadratic $\epsilon$-pseudospectrum, previously defined for Hermitian operators, to accommodate non-Hermitian operators and matrices, including the possibility that the underlying operators may possess exceptional points in their spectra. In particular, we provide a framework for finding approximate joint eigenvectors of a $d$-tuple of Hermitian operators $\bm{A}$ and non-Hermitian operators $\bm{B}$, and show that their Clifford and quadratic $\epsilon$-pseudospectra are still well-defined despite any non-normality.
    We prove that the non-Hermitian quadratic gap is local with respect to the probe location when there are perturbations to one or more of the underlying operators.
    Altogether, this framework enables the exploration of non-Hermitian physical systems' $\epsilon$-pseudospectra, including but not limited to photonic systems where gain, loss, and radiation are prominent physical phenomena.
\end{abstract}

\section{Introduction}

Localized states in physical systems have a variety of important technological applications.
In multidimensional systems, states that are localized to a single dimension are useful for enabling directed transport of energy or information.
Moreover, zero-dimensional cavity states have uses in storing energy and information.
In systems described by Hermitian Hamiltonians, finding such localized states is generally accomplished by identifying (or designing) one or more states to be isolated in the system's spectrum, whose corresponding local density of states (LDOS) is spatially localized.
Indeed, finding classes of systems that exhibit such boundary-localized states is one of the primary motivations for studying material topology (See \cite{bansil_lin_das_RevModPhys.88.021004},\cite{hasan_kane_2010_RevModPhys.82.3045},\cite{ozawa_et_all_RevModPhys.91.015006},\cite{xiao_chang_niu_RevModPhys.82.1959}), where bulk-boundary correspondence can guarantee such localized states.

Open systems have the property that the corresponding Hamiltonian is non-Hermitian, stemming from either a loss of energy or decoherence due to the environment.
As such, for an open system, the Hamiltonian $H$ can have complex eigenvalues whose imaginary parts correspond to the loss (or gain) of energy or information in each associated eigenvector.
A canonical class of open systems are photonic systems, which can exhibit amplification, absorption, and radiation.
Here, without including microscopic descriptions of the sources of amplification or absorption, or considering the infinite space into which photons can radiate, these phenomena are instead generally considered by introducing non-Hermitian terms into the photonic system's Hamiltonian. Nevertheless, finding low-loss localized states in photonics, and open systems more broadly, is a perennial goal, as these states exhibit Purcell enhancement of any latent nonlinear effects \cite{purcell_spontaneous_1946}. 

Previously, the search for system states that are approximately localized in energy and position, i.e., that are approximate joint eigenvectors of both a system's Hamiltonian and position operators, has been studied using the system's quadratic pseudospectrum \cite{cerjan_loring_vides_2023}.
In such an approach,
the idea is to choose a $(d+1)$-tuple $\bm{\lambda} = (x_1,...,x_d, E)= (\bm{x},E)$ consisting of position and energy and extracting information from the quadratic composite operator
\begin{equation}\label{eq:Hermitian_quadratic_composite_operator}
    Q_{(\bm{x},E)}(\bm{X},H) = \sum_{i=1}^d (X_i-x_i)^2 + (H-E)^2.
\end{equation}
The smallest eigenvalue corresponds to a unit state that is an approximate joint eigenvector of each of the matrices $X_i$ and $H$.
In particular, the `probe site' $\bm{\lambda}$ in $\R^{d+1}$, provides the corresponding approximate eigenvalues of the matrices.
This mathematical method gives rise to the quadratic $\epsilon$-pseudospectrum and is the main tool in \cite{cerjan_loring_vides_2023}.
Similar approaches based on the system's Clifford $\epsilon$-pseudospectrum have also been used to both identify localized states in realistic photonic systems and develop a class of invariants for material topology \cite{cerjan_loring_pho_2022,kahlil_cerjan_loring,cerjan_prl_2024}.
By itself, the inclusion of non-Hermiticity does not preclude the possibility of finding localized states generally, nor boundary localized states protected by material topology see \cite{Esaki_Sato_Hasebe_Kohmoto_PR_2011},  \cite{kawabata_shiozaki_ueda_sato_PhysRevX.9.041015_2019}, \cite{Kawabata_etal_nature_2019}, and \cite{Shen_Zhen_Fu_PRL_2018}.
However, when working with non-Hermitian systems, we can fail to find an approximately localized state when directly applying the method outlined above with the $\epsilon$-pseudospectrum. 
The problem is that the quadratic composite operator in Equation \ref{eq:Hermitian_quadratic_composite_operator} can fail to be normal for non-Hermitian Hamiltonians $H$, and consequently could have degenerate eigenspaces.
Thus, the degeneracy of the eigenspaces and eigenvalues of Equation \ref{eq:Hermitian_quadratic_composite_operator} could imply the quadratic gap is not small enough.
Thus we could fail to find approximately localized states, since the error bounds guaranteeing a good approximately localized eigenvector depend on the magnitude of the quadratic gap function.

Pseudospectral methods are being used for non-Hermitian systems, see \cite{kiorpelidis2024scaling}. Here, we propose pseudospectral methods for predicting approximately localized states and resonances in non-Hermitian systems in both position and energy.
We modify the quadratic composite operator in \cite{cerjan_loring_vides_2023} to a generalized version of the following,
\begin{equation}\label{eq:non_Hermitian_quadratic}
    Q_{(\bm{x},E)}\left(\bm{X},H\right) = \sum_{i}^d (X_i-x_i)^2 + (H-E)^\dagger (H-E)
\end{equation}
while allowing $E$ to be complex valued since $H$ being non-Hermitian gives us the possibility of complex eigenvalues.
This updates definitions for a probe site as well as the $\epsilon$-pseudospectrum.
We take this idea and expand it to allow for multiple non-Hermitian operators in the construction, in particular non-normal operators ($AA^\dagger \not= A^\dagger A$).
We prove Lipschitz continuity of the corresponding gap functions.
We also prove that under perturbations to a physical system's Hamiltonian, the quadratic gap function allows us to find approximately localized states with small loss based on the probe site.
Along the way, we investigate the non-Hermitian spectral localizer and argue why the radial gap can be especially helpful in non-Hermitian systems with exceptional points.
We also make a connection between the different gap functions and highlight the theory with numerical simulations.
The theory and numerical simulations allow us to consider a system with an exceptional point and how the radial gap function helps spot the exceptional point.
We also use the theory for numerical simulations on a line gapped system.
Overall, one of our goals is to use the non-Hermitian pseudospectral methods we develop to design systems where we have some level of control over these long-lived lossy states.
In particular, we anticipate that a modified quadratic $\epsilon$-pseudospectrum will have applications in designing photonic structures for enhancing light-matter interactions.

The remainder of the paper is outlined as follows:
In Section \ref{sec:approx_jes_Hermitian},
we introduce the idea of joint eigenvectors between two matrices or observables and review known results for when all matrices involved are Hermitian. We finalize by highlighting some ideas from \cite{cerjan_loring_vides_2023}.
In Section \ref{sec:nh_clifford_composite_operator}, we consider a specific construction of the non-Hermitian spectral localizer in \cite{cerjan_koekebier_Schulz_Baldes}, which we call the non-Hermitian spectral localizer $L_{(\bm{\lambda},\nu)} \left( \bm{A}, B \right)$, and note some salient properties.
We also briefly review the non-Hermitian Clifford linear gap and introduce a new gap called the non-Hermitian Clifford radial gap; the idea being that the Clifford linear gap will be more useful for line gapped physical systems,
while the clifford radial gap could be more useful for point gapped systems.
Additionally, we prove some results about the gaps and finalize with a conjecture.
In Section \ref{sec:nh_quadratic_composite}, we motivate the construction of the quadratic composite operator $Q_{(\bm{\lambda},\nu)}\left(\bm{A},\bm{B}\right)$ in Equation \ref{eq:non_Hermitian_quadratic} by computing 
\begin{equation*}
L_{(\bm{\lambda},\nu)} \left( \bm{A}, B \right)^\dagger L_{(\bm{\lambda},\nu)} \left( \bm{A}, B \right).
\end{equation*}
We define a quadratic gap $\mu_{(\bm{\lambda},\nu)}^\textup{Q}\left(\bm{A},B\right)$ and prove its equivalence to other numerical values.
We end the section by proving Lipschitz continuity of the quadratic gap as well as provide important corollaries about joint approximate eigenvectors.
In Section \ref{sec:rel_spec_and_quad_gap}, we provide error estimates between the non-Hermitian Clifford linear gap, non-Hermitian Clifford radial gap, and quadratic gap.
In Section $\ref{sec:local_nature_nh_quadratic}$,
we show that the quadratic gap is local, meaning that far away perturbations of $B$ relative to probe site $(\bm{\lambda},\bm{\nu})$ will have little effect on the corresponding non-Hermitian quadratic gap at that point. 
In Section \ref{sec:applications_to_point_gapped_systems}, we apply the non-Hermitian Clifford linear gap, radial gap, and non-Hermitian quadratic gap functions to a two level non-Hermitian system.
We argue why the non-Hermitian radial gap function should be used instead of the gap mentioned in \cite{cerjan_koekebier_Schulz_Baldes} in the case of point gapped systems.
In Section \ref{sec:HHS_Lossy_Modes},
we take a similar tight binding model as the one used in \cite{cerjan_koekebier_Schulz_Baldes}, and numerically compute the Clifford linear and quadratic gap functions to visualize were we might find approximately localized states in position and energy.
Finally, we compare the Clifford linear gap to the quadratic gap function which highlights how closely the functions agree and where.

\section{Pseudospectrum for joint approximate eigenvectors}\label{sec:approx_jes_Hermitian}

The use of pseudospectral methods is not a new one.
Over the years it has been used either implicitly explicity.
Some early uses of these methods with an eye towards physics is mentioned in \cite{Arnold1972},\cite{landau_1976}, and \cite{landau_1977}.
The ideas of pseudospectra became well known in the 90's after Trefethen and Embree's book \cite{trefethen_embree_2005_MR2155029}.
This was the culmination of previous papers written by Trefethen, see \cite{Trefethen1992},\cite{trefethen_1997_edsjsr.213303719970901}.
It is important to note that we only lightly touch on the history of the pseudospectrum and more can be read in \cite{trefethen_embree_2005_MR2155029}.
When we consider the eigenvalue problem. In other words it is equivalent to asking if $(A-\lambda)^{-1}$ exists given a square matrix $A$ and some $\lambda \in \C$.
However for numerical purposes it is better to ask if $\lVert(A-\lambda)^{-1}\rVert$ is large.
This is how Trefethen and Embree motivate the pseudospectra of the matrix $A$ in \cite{trefethen_embree_2005_MR2155029}.
In particular the proposed definition is that for some $\epsilon>0$ the pseudospectrum of $A$ is the set of complex numbers $\lambda$ such that $\lVert(A-\lambda)^{-1}\rVert>\epsilon^{-1}$ for some small $\epsilon>0$.
In their book \cite{trefethen_embree_2005_MR2155029}, Trefethen and Embree give alternative definitions that are all equivalent for matrices.
Here we include them to for completeness.
\begin{definition}\label{def:te_first}
Suppose $A$ in $M_n(\C)$ and $\epsilon>0$. The \emph{$\epsilon$-pseudospectrum} denoted $\text{Spec}_\epsilon(A)$ of $A$ is the set of $z$ in $\C$ that satisfy $\lVert (A-z)^{-1}\rVert>\epsilon^{-1}$.
\end{definition}
\begin{definition}\label{def:te_second}
Given $A$ in $M_n(\C)$ the set $\text{Spec}_\epsilon(A)$ is the set of all $z$ in $\C$ such that $z$ in $\text{Spec}(A+E)$ for $E$ in $M_n(\C)$ with $\lVert E\rVert < \epsilon$.
\end{definition}
\begin{definition}\label{def:te_third}
Suppose $A$ in $M_n(\C)$, $\text{Spec}_\epsilon(A)$ is the set of all $z$ in $\C$ such that $\lVert (A-z)\bm{v}\rVert<\epsilon$ for some non-zero $\bm{v}$ satisfying $\lVert\bm{v}\rVert=1$
\end{definition}
\begin{definition}\label{def:te_fourth}
    Suppose $A$ in $M_n(\C)$, $\text{Spec}_\epsilon(A)$ is the set of all $z$ in $\C$ such that $\sigma_{\min} (A-z)<\epsilon$.
\end{definition}
It is important to note that for definitions \ref{def:te_first},\ref{def:te_second},\ref{def:te_third} can use matrix norms other than the operator norm. However all four definitions (Definition \ref{def:te_first},\ref{def:te_second},\ref{def:te_third},\ref{def:te_fourth}) are only equivalent when using the operator norm on matrices.
Definition \ref{def:te_third} and \ref{def:te_fourth} in their book show up in an analogous way in the present paper.
Moving beyond the works of Trefethen, the topic of joint spectra has been further investigated by Jhanjee in their PhD Thesis, see \cite{JHANJEE2017}.
They investigate joint spectra for both matrices and operators in the commuting case.
In the present paper we emphasize what happens in the non-commuting case for the matrix case.

The main technique of pseudospectra in the multi matrix setting is to give equal treatment to each operator in a physical system.
Instead of computing eigenvectors for a Hamiltonian and then computing expectation values, pseudospectral techniques combine all of the incompatible observables/matrices in the system into a single composite operator.
The spectrum of said composite operator gives us information about states approximately localized at some position and energy when we consider a system's position and Hamiltonian operators, see
\cite{krejcirik_siegl_tater_viola_2015_10.1063/1.4934378},\cite{loring_schulz-baldes_2017_12504413820170101},\cite{loring_schulz-baldes_2020_edsgcl.62678504020200301},\cite{loring},\cite{trefethen_1997_edsjsr.213303719970901},\cite{trefethen_embree_2005_MR2155029},\cite{trefethen_reddy_driscoll_1993_6dce06aa-3a1d-396a-ad07-b01838717a11}.
When the matrices commute it can be shown that there is a unit state that is an eigenvector (with potentially different eigenvalue) for each observable.
This is a standard result in the theory of linear algebra.
However, it is also known that incompatible observables do not commute; in other words we are not guaranteed of the existence of joint eigenvectors of the corresponding observables.
The main idea however is that we are going to develop a framework in which we can find a unit eigenvector that is an approximate joint eigenvector of multiple matrices/observables through pseudospectral methods.
In order to motivate the study of approximate joint eigenvectors and as a result the study of $\epsilon$-pseudospectra; we begin with the study of joint eigenvectors from commuting matrices/observables.

It is known that a $d$ tuple of matrices $\bm{A}$ pairwise commute if and only if there there is a unitary that simultaneously transforms the $d$ matrices to upper traingular by a single unitary $U$ see Theorem 2.3.3 in \cite{horn_johnson}.
This gives us the existence of at least one simultaneous eigenvector.
Hence, if we have two matrices $A,B$ and the commutator is zero ($[A,B]=0$) then there is a state $\bm{\psi}$ such that $A\bm{\psi} = \lambda_A \bm{\psi}$ and $B\bm{\psi}= \lambda_B \bm{\psi}$.
In the situation where each matrix in $\bm{A}$ is normal ($A_iA_i^\dagger = A_i^\dagger A_i$) the theorem tell us that $A_i$ can be simultaneously diagonalized by the same unitary transformation. 

If $H$ is some Hamiltonian and $X$ is a position observable, we know that in general $[X,H]\not= 0$.
In the special case of an atomic limit, we have $[X,H]=0$.
However, most materials are not in the atomic limit.
We have $[X,H]\not= 0$ because $\left[\frac{d}{dx},x\right]\not=0$.
The physics comes into play when we note that a Hamiltonian is usually a differential operator.
Thus if we have a physical system the best we can hope for are states that are approximately localized in position and energy.
In other words if we have some number of position observables $\bm{X}$, and corresponding Hermitian Hamiltonian $H$, Proposition II.1 in \cite{cerjan_loring_vides_2023} answers the question of looking for $(\bm{x},E)$, and $\bm{\psi}$ so that
\begin{equation}\label{eq:approx_ev}
    X_i\bm{\psi} \approx x_i\bm{\psi} \mbox{ , }\hspace{0.1in}
    H\bm{\psi} \approx E\bm{\psi}.
\end{equation}

Indeed the approach that is taken is to utilize pseudospectral methods. Broadly speaking pseudospectra is a mathematical tool to determine approximate spectrum for non-commuting matrices.
To find approximate joint eigenvectors of matrices or observables, the idea is to consider various eigenvalue problems $(A_i-\lambda_i)\bm{\psi}$, with $A_i$ Hermitian.
Then we take a non trivial Clifford representation $\Gamma_i^\dagger = \Gamma_i$, $\Gamma_i^2 = I$, $\Gamma_i\Gamma_k = -\Gamma_k\Gamma_i$ for $i\not=k$ to construct the so called spectral localizer mentioned in Section I of \cite{cerjan_loring_vides_2023} as well as Section 1 of \cite{loring}.
Given $\bm{A}$ in $\Pi_{i=1}^{d+1}\mathcal{M}_n(\C)$ with each $A_i$ being Hermitian operators, and \emph{probe site} $\bm{\lambda}\in \R^{d+1}$,
\begin{equation*}
    L_{\bm{\lambda}}(\bm{A}) = \sum_{i=1}^{d+1} (A_i-\lambda_i)\otimes \Gamma_i
\end{equation*}
is the \emph{spectral localizer}.

By properties of the tensor product and the $A_i$ matrices, the spectral localizer is itself a Hermitian matrix.
From here a so called spectral localizer gap is defined and has alternative characterizations. It is defined explicitly in Section I of \cite{cerjan_loring_vides_2023}. However it first appears implicitly in Definition 1.1 of a so called Clifford $\epsilon$-pseudospectrum in \cite{loring}.

Note that for a $d=3$ dimensional system we could use the following as a Clifford representation
\begin{equation*}
\Gamma_1 = \sigma_x \otimes \sigma_x, \Gamma_2 = \sigma_y \otimes \sigma_x, \Gamma_3 = \sigma_z \otimes \sigma_x, \Gamma_4 = I_2 \otimes \sigma_z.
\end{equation*}
This tells us that
\begin{equation*}
    L_{(\kappa x,\kappa y,\kappa z,E)}^{\textup{3d}}\left(\kappa X,\kappa Y,\kappa Z,\kappa H\right) = \kappa(X-x)\otimes \Gamma_1 + \kappa(Y-y)\otimes \Gamma_2 + \kappa(Z-z)\otimes \Gamma_3 + (H-E)\otimes \Gamma_4.
\end{equation*}
We could have also used other matrices that satisfy the Clifford relations, such as 
\begin{equation*}
\Gamma_1 = \sigma_x \otimes \sigma_x, \Gamma_2 = \sigma_y \otimes \sigma_x, \Gamma_3 = \sigma_z \otimes \sigma_x, \Gamma_4 = I_2 \otimes \sigma_y
\end{equation*}
for example.
This is proven in \cite{cerjan_loring_2024_127892} by Cerjan and Loring.

We will call $\sigma_{\min}(A)$ the smallest singular value of a matrix $A$. We then say that $\textup{Spec}(A)$ is the set of eigenvalues of an operator $A$.
In particular we will take the convention that the spectrum of an arbitrary matrix $A$ will be enumerated so that the eigenvalues are ordered by their real parts.
We will also assume for the remainder of the paper unless otherwise stated that a matrix norm is assumed to be the spectral/operator norm.
In other words if $A$ is a matrix then the assumed norm for matrices is
\begin{equation*}
\left\lVert A \right\rVert = \max_{\left\lVert\bm{\psi}\right\lVert = 1} \lVert A\bm{\psi}\rVert.
\end{equation*}

The smallest singular value of the spectral localizer plays a major role in an attempt at finding states that are localized in its various observables/operators.
For this reason, in this context it is given a name.
Specifically given $\bm{A}$ in $\Pi_{i=1}^{d+1}\mathcal{M}_n(\C)$ with each $A_i$ being Hermitian operators, and \emph{probe site} $\bm{\lambda}$ in  $\R^{d+1}$, the \emph{spectral localizer gap} is defined as
\begin{equation*}\mu^{\text{C}}_{\bm{\lambda}}(\bm{A}) = \sigma_{\min}\left(L_{\bm{\lambda}}\left(\bm{A}\right)\right).
\end{equation*}

Given $\bm{A}$ in $\Pi_{i=1}^{d+1}\mathcal{M}_n(\C)$ with each $A_i$ being Hermitian operators the \emph{Clifford $\epsilon$-pseudospectrum} is defined as
\begin{equation*}\Lambda_\epsilon^{\text{C}}(\bm{A}) = \left\{\bm{\lambda}\in \R^{d+1} \mid \mu_{\bm{\lambda}}^{\text{C}}\left(\bm{A}\right)\leq\epsilon\right\}
\end{equation*}

With all this in place, one of the main results of \cite{loring} is finding a joint approximate state with corresponding probe site each corresponding to an element in the $\epsilon$-pseudospectrum.
Lemma 1.2 and 1.3 in \cite{loring} give formal error estimates, on the quantities $\left\lVert A_i\bm{\psi} - \lambda_k\bm{\psi}\right\rVert$.
We prove a generalized version of these two lemmas in Section \ref{sec:nh_clifford_composite_operator}.
The main idea being that a small gap or the pseudospectrum provides a recipe to get our hands on a unit state $\bm{\psi}$ that is approximately localized in position and space.

Now suppose that the operators in question are in $\mathcal{M}_n(\C)$, The dimension of the spectral localizer is $n2^{\lfloor\frac{d+1}{2}\rfloor}\times n2^{\lfloor\frac{d+1}{2}\rfloor}$. Then for every two more matrices you have in your system the dimension of the spectral localizer becomes $n2^{\lfloor\frac{d+1}{2}\rfloor+1}\times n2^{\lfloor\frac{d+1}{2}\rfloor+1}$. For two and three observables the spectral localizer is in $\mathcal{M}_{2n}(\C)$, for 4 and 5 observables the spectral localizer is in $\mathcal{M}_{4n}(\C)$, etc.
We can reduce the computational requirements by considering the fact that for hermitian matrices $A$, that $A\bm{\psi}=0$ iff $A^2\bm{\psi}=0$.
Indeed if we square the spectral localizer we get 
\begin{equation*}
\left(L_{\bm{\lambda}}(\bm{A})\right)^2 = \left(\sum_{i=1}^{d+1}(A_i-\lambda_i)^2\otimes I\right) + \left(\sum_{i<k}[A_i,A_k]\otimes \Gamma_i\Gamma_k\right).
\end{equation*}
When the commutator terms are small in operator norm, the predominant part of the squared non-Hermitian spectral localizer is the first sum.
This is a block diagonal matrix
\begin{equation}\label{eq:repeat_blocks}
    \left(\sum_{i=1}^{d+1}(A_i-\lambda_i)^2\right)\otimes I.
\end{equation}
The eigenvalues of Equation (\ref{eq:repeat_blocks}) are determined by the repeated block matrix in Equation (\ref{eq:herm_quadratic}).
\begin{equation}\label{eq:herm_quadratic}
    \sum_{i=1}^{d+1}(A_i-\lambda_i)^2
\end{equation} 
Equation (\ref{eq:herm_quadratic}) is defined to be the \emph{quadratic composite operator} in Section 1 of \cite{cerjan_loring_vides_2023}.
Given $\bm{A}$ in $\Pi_{i=1}^{d+1}\mathcal{M}_n(\C)$ where each of the $A_i$ are Hermitian operators and \emph{probe site} $\bm{\lambda}$ in $\R^{d+1}$, the \emph{quadratic gap} is defined to be
\begin{equation*}
        \mu_{\bm{\lambda}}^{\text{Q}}(\bm{A}) = \sqrt{\sigma_{\min}\left(\text{Q}_{\bm{\lambda}}(\bm{A})\right)}.
\end{equation*}
The square root is a consequence of proving this quantity is related to the smallest singular value of a related matrix $M_{\bm{\lambda}}(\bm{A})$ which is the topic of Proposition II.1 in \cite{cerjan_loring_vides_2023}.
Note that by construction, the quadratic composite operator has real eigenvalues and is positive semidefinite. 
Informaly a key idea with this gap function is that
\begin{equation*}
    \mu_{\bm{\lambda}}^{\text{C}}(\bm{A}) = \sigma_{\min}\left(L_{\bm{\lambda}}\left(\bm{A}\right)\right) \approx \sigma_{\min}\left(\text{Q}_{\bm{\lambda}}\left(\bm{A}\right)\right)\leq \mu_{\bm{\lambda}}^{\text{Q}}\left(\bm{A}\right)
\end{equation*}
when $\mu_{\bm{\lambda}}^Q(\bm{A})\leq 1$.
This also provides a corresponding state $\bm{\psi}$ that is an approximate eigenvector of the constituent observables/matrices.
For a comparison between the spectral localizer gap and quadratic gap function, see Proposition II.4 in \cite{cerjan_loring_vides_2023}.
The main idea is that the Clifford and quadratic gaps are close if the commutators $[A_i,A_k]$ are small, in other words the matrices $A_i$ nearly pairwise commute.

\section{Non-Hermitian spectral localizer}\label{sec:nh_clifford_composite_operator}

As mentioned in the introduction, a property of photonic systems is that they exhibit non-Hermitian Hamiltonians.
This motivates the question:
What happens when $A_{d+1}=B$ is non-Hermitian in the construction of the spectral localizer and in the quadratic composite operator? Or in a physical system setting, what happens when we have a non-Hermitian Hamiltonian $H$ in the construction of the corresponding composite operators?
We take inspiration from the constructions and definitions in \cite{cerjan_koekebier_Schulz_Baldes} to motivate this Section.
In the process we also introduce a new gap function and explain some of its properties.

One option for a construction of an equivalent spectral localizer is to consider a matrix  decomposition of $B$ into the Hermitian part $B_{\textup{H}}$ and its skew-Hermitian part $B_{\textup{S}}$ shown in Equation (\ref{eq:Hermitian_skew-Hermitian})
\begin{equation}\label{eq:Hermitian_skew-Hermitian}
    B=B_{\textup{H}} + iB_{\textup{S}}, \quad B_{\textup{H}}=(B+B^\dagger)/2, \quad B_{\textup{S}} = (B-B^\dagger)/2i .
\end{equation}
Both $B_{\textup{H}}$ and $B_{\textup{S}}$ are Hermitian, and thus we can use these together with position observables in the non-Hermitian spectral localizer, and thus we would be able to find a unit state $\bm{\psi}$ that is approximately localized in position and energy given the theory above.
In particular we could construct
\begin{equation*}
    L_{(\bm{x},\textrm{Re}(E),\textrm{Im}(E))}(\bm{X},B_{\textup{H}},B_{\textup{S}}).
\end{equation*}
However this goes against the intuition that the non-Hermitian spectral localizer should have similar properties as the Hamiltonian, in this case hermiticity or lack thereof.
Also, nothing necessarily stops us from using $B$ in the construction of the spectral localizer in Ref \cite{loring, cerjan_loring_vides_2023} with complex $E$.
In other words $L_{(\bm{x},E)}(\bm{X},B)$.
This makes it so that the resulting operator is not only not Hermitian, it is also not normal.
In fact, we can use the non-normality to show that the (right) eigenvector associated to the smallest eigenvalue ($L_{(\bm{x},E)}(\bm{X},B)\bm{\psi} = \lambda\bm{\psi}$) will give access to an approximate (right) eigenvector of all the matrices $X_i$ and $B$. Likewise, a left eigenvector associated with the smallest associated left eigenvalue ($\bm{\phi}^{\dagger} L_{(\bm{x},E)}(\bm{X},B) = \bm{\phi}^\dagger \alpha$) will give an approximate left eigenvector of all $X_i$ and $H$.
However, we will be following the approach in Ref \cite{cerjan_koekebier_Schulz_Baldes} for the construction of the non-Hermitian spectral localizer.

In the construction of the spectral localizer in Section I of \cite{cerjan_loring_vides_2023}, an arbitrary Clifford representation can be chosen. However, in the following construction for the non-Hermitian spectral localizer we use a specific Clifford representation.
This is because we can choose $d$ even so there will be a Clifford element representation that anticommutes with all the other $d$ elements and is diagonal in a suitable representation.
We will use such a representation and replace the diagonal element by a unique choice that allows us to extract information about both left and right eigenvectors.
The construction of said Clifford representation is as follows.
Note that $\sigma_x,\sigma_y,\sigma_z$ are the Pauli spin matrices.
For $d=2$ we use $\sigma_x$ and $\sigma_y$ and for $d=1$ we just drop $\sigma_y$.
For $d=3$ we use $\Gamma_1=\sigma_x, \Gamma_2 = \sigma_y, \Gamma_3 = \sigma_z$.
The Construction for $d=5$ is $\Gamma_1 = \sigma_x \otimes \sigma_x$, $\Gamma_2 = \sigma_y \otimes \sigma_x$, $\Gamma_3 =\sigma_z \otimes \sigma_x$, $\Gamma_4 = I_2 \otimes \sigma_y$, $\Gamma_5 = I_2 \otimes \sigma_z$.
Then suppose that $\{\gamma_i\}_{i=1}^d$ represents a Clifford representation for $d$ odd, we construct a Clifford representation for $d+2$ by computing 
\begin{align}\label{eq:cliff_rep_construction}
    \Gamma_i &= \gamma_i\otimes \sigma_x \text{ for all } 1\leq i\leq d \nonumber \\
    \Gamma_{d+1} &= I_m\otimes \sigma_y \\\quad \Gamma_{d+2} &= I_m \otimes \sigma_z \nonumber
\end{align} with
\begin{equation}\label{eq:cliff_dim}
    m = 2^{\left\lfloor d/2 \right\rfloor}.
\end{equation}
If we want the Clifford representation for $d$ even, we just construct the $d+2$ Clifford representation and discard $I_m\otimes \sigma_y$.
We now proceed with the definition of the non-Hermitian spectral localizer as follows.
\begin{definition}
    Given $\bm{A}$ in $\Pi_{i=1}^d\mathcal{M}_n(\C)$ where each of the $A_i$ are Hermitian operators, a non-Hermitian operator $B$, and a \emph{probe site} $(\bm{\lambda},\nu)$ in $\R^d \oplus \C$ we define the \emph{non-Hermitian spectral localizer} to be
    \begin{equation*}
        L_{(\bm{\lambda},\nu)} \left( \bm{A}, B \right) = \sum_{i=1}^d (A_i-\lambda_i)\otimes \Gamma_i + (B-\nu)\otimes \begin{bmatrix} I_m & \\ & 0_m \end{bmatrix}+ (B-\nu)^\dagger \otimes \begin{bmatrix} 0_m & \\ & -I_m \end{bmatrix}
    \end{equation*}
    Where the specific matrices $\{\Gamma_i\}_{i=1}^{d+1}$ are constructed as in Equations \ref{eq:cliff_rep_construction}.
\end{definition}

Again, this is a special case of the spectral localizer in \cite{cerjan_koekebier_Schulz_Baldes}.
From here we can manually compute and determine that the non-Hermitian spectral localizer is not normal.
This can be seen below when we compute the product
\begin{equation}\label{eq:ldagger_l_computation}
\begin{aligned}
    \left(L_{(\bm{\lambda},\nu)} \left( \bm{A}, B \right)\right)^{\dagger}\left(L_{(\bm{\lambda},\nu)} \left( \bm{A}, B \right)\right) &= \left(\sum_{i=1}^d(A_i-\lambda_i)^2\otimes I\right) + \left(\sum_{i<j} \left[ A_i, A_j \right] \otimes \Gamma_i \Gamma_j\right) + F\\
    & \quad +(B-\nu)^{\dagger}(B-\nu) \otimes \begin{bmatrix} I_m & \\ & 0_m \end{bmatrix} + (B-\nu)(B-\nu)^{\dagger}\otimes \begin{bmatrix} 0_m & \\ & I_m\end{bmatrix},
\end{aligned}
\end{equation}
with
\begin{equation}\label{eq:f_term}
    F = \left(\sum_{i=1}^d(A_i-\lambda_i)(B-\nu)\otimes \Gamma_i \begin{bmatrix} I_m & \\ & 0 \end{bmatrix} \right) + h.c. +\left(\sum_{i=1}^d(A_i-\lambda_i)(B-\nu)^{\dagger}\otimes \Gamma_i \begin{bmatrix} 0 & \\ & -I_m \end{bmatrix} \right) + h.c.
\end{equation}
Then again we compute $\left(L_{(\bm{x},E)} \left( \bm{X}, B \right)\right)\left(L_{(\bm{x},E)} \left( \bm{X}, B \right)\right)^{\dagger}$ to show that 
\begin{equation*}
    \left[L_{(\bm{x},E)} \left( \bm{X}, B \right) ,\left(L_{(\bm{x},E)} \left( \bm{X}, B \right)\right)^{\dagger}\right]\not=0.
\end{equation*}
Note that $h.c.$ in Equation \ref{eq:f_term} represents the Hermitian conjugate of the matrix preceding the term.

Despite the non-Hermitian spectral localizer being non-normal, we do have a property that is just as useful if not better.
If $B$ is close to being Hermitian then so too is the non-Hermitian spectral localizer.
Formally we have that for all $\epsilon>0$, $\left\lVert B-B^\dagger \right\rVert < \epsilon$ implies that if we label $L = L_{(\bm{\lambda},\nu)} \left( \bm{A}, B \right)$, then
\begin{align*}
    \left\lVert L - L^\dagger \right\rVert &= \left\lVert\left((B-\nu)-(B-\nu)^\dagger\right)\otimes \begin{bmatrix} I_m & \\ & 0_m \end{bmatrix} + \left((B-\nu)^\dagger - (B-\nu)\right) \otimes \begin{bmatrix}0_m & \\ & \!\!\!\!\!-I_m \end{bmatrix} \right\rVert \\
    &= \left\lVert\left((B-\nu)-(B-\nu)^\dagger\right)\otimes I_m \right\rVert \\
    &= \left\lVert(B-\nu)-(B-\nu)^\dagger\right\rVert \\
    &<\epsilon + 2\left\lvert\textrm{Im}(\nu)\right\rvert.
\end{align*}

This also tells us that the closeness of the non-Hermitian spectral localizer to Hermitian is also dependent on how big of a value we probe in energy.
Further more we have that,
\begin{align*}
    \left\lVert L^\dagger L - LL^\dagger \right\rVert &\leq \left\lVert (L - L^\dagger) L \right\rVert + \left\lVert L (L - L^\dagger) \right\rVert \\
    &\leq \left\lVert L \right\rVert \left\lVert (L - L^\dagger) \right\rVert + \left\lVert L \right\rVert \left\lVert  (L - L^\dagger) \right\lVert \\
    &\leq 2\left\lVert L \right\rVert \left(\epsilon + 2\left\lvert\textrm{Im}(\nu)\right\rvert\right).
\end{align*}
which tells us that the non-Hermitian spectral localizer being close to hermitian also implies it is close to normal.
In a way this is not too surprising from the point of view that if a matrix $A$ is Hermitian, then it follows that $A$ is normal, indeed $A=A^\dagger$ implies $A^\dagger A = AA = AA^\dagger$.
To define the gap function for the non-Hermitian spectral localizer, we take the definition in Section 2 of \cite{cerjan_koekebier_Schulz_Baldes} that is used while studying line gapped systems.
In particular it is called the spectral localizer gap, but we will refer to it by a different name.
\begin{definition}
    Given $\bm{A}$ in $\Pi_{i=1}^d\mathcal{M}_n(\C)$ where each of the $A_i$ are Hermitian operators, a non-Hermitian matrix $B$, and a \emph{probe site} $(\bm{\lambda},\nu)$ in $\R^d \oplus \C$ we define the \emph{Clifford linear gap} to be
    \begin{equation*}\bar{\mu}_{(\bm{\lambda},\nu)}^{\text{C}}\left(\bm{A},B\right) = \min\big\vert \textrm{Re}\big(\textup{Spec}\left(L_{(\bm{\lambda},\nu)}\left(\bm{A},B\right)\right) \big) \big\vert.
    \end{equation*}
\end{definition}
Based on numerical results we conjecture that the Clifford linear gap function is continuous.
\begin{conjecture}
    The Clifford linear gap function is pointwise continuous.
\end{conjecture}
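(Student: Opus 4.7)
The plan is to chain three continuity facts: (1) the map $(\bm{\lambda},\nu)\mapsto L_{(\bm{\lambda},\nu)}(\bm{A},B)$ is Lipschitz into $M_N(\C)$ equipped with the operator norm; (2) the spectrum of a matrix, regarded as a multiset in $\C$ under the optimal matching distance, depends continuously on the matrix; and (3) the functional that sends such a multiset to $\min_i|\textrm{Re}(\mu_i)|$ is Lipschitz with respect to that matching distance. Composing the three yields continuity of $\bar{\mu}^{\text{C}}_{(\bm{\lambda},\nu)}(\bm{A},B)$ as a function of the probe site $(\bm{\lambda},\nu)\in\R^d\oplus\C$.

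Step (1) is an immediate consequence of the affine dependence of the localizer on $(\lambda_1,\dots,\lambda_d,\textrm{Re}(\nu),\textrm{Im}(\nu))$: the difference $L_{(\bm{\lambda}',\nu')}(\bm{A},B)-L_{(\bm{\lambda},\nu)}(\bm{A},B)$ is a scalar combination of the fixed matrices $I\otimes\Gamma_i$ and $I\otimes\textrm{diag}(I_m,0_m)$, $I\otimes\textrm{diag}(0_m,-I_m)$, so its operator norm is bounded by a constant depending only on the Clifford dimension times $\lVert(\bm{\lambda}',\nu')-(\bm{\lambda},\nu)\rVert$. For step (2), I would invoke Elsner's theorem: for any $L_0,L_1\in M_N(\C)$ with eigenvalue multisets $\{\mu_i^0\}_{i=1}^N$ and $\{\mu_i^1\}_{i=1}^N$, there exists a permutation $\sigma\in S_N$ with $\max_i|\mu_i^0-\mu_{\sigma(i)}^1|\to 0$ as $\lVert L_0-L_1\rVert\to 0$ (equivalently, this is continuity of the roots of the characteristic polynomial, whose coefficients depend polynomially on matrix entries). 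Step (3) then follows from the elementary bounds
\begin{equation*}
    \Bigl|\min_i|\textrm{Re}(\mu_i^0)|-\min_i|\textrm{Re}(\mu_{\sigma(i)}^1)|\Bigr|\leq \max_i\bigl||\textrm{Re}(\mu_i^0)|-|\textrm{Re}(\mu_{\sigma(i)}^1)|\bigr|\leq \max_i|\mu_i^0-\mu_{\sigma(i)}^1|,
\end{equation*}
using $|\min_i a_i-\min_i b_i|\leq \max_i|a_i-b_i|$ for equally indexed tuples and $||a|-|b||\leq|a-b|$.

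Concatenating the three steps, as $(\bm{\lambda}',\nu')\to(\bm{\lambda},\nu)$, step (1) drives the operator norm difference of the localizers to zero, step (2) produces a matching permutation whose maximum eigenvalue displacement vanishes, and step (3) converts this into $|\bar{\mu}^{\text{C}}_{(\bm{\lambda}',\nu')}(\bm{A},B)-\bar{\mu}^{\text{C}}_{(\bm{\lambda},\nu)}(\bm{A},B)|\to 0$. The main obstacle is the careful formulation of step (2): individual eigenvalues of a non-normal matrix cannot be continuously labeled in the presence of exceptional points, where eigenvalues swap or collide, so continuity must be phrased via a matching permutation or, equivalently, the Hausdorff distance between multisets. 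Once Elsner's inequality is invoked in that form, the remaining ingredients are entirely elementary, and with a little more care the same chain in fact delivers local H\"older continuity with exponent $1/N$, which is strictly stronger than the conjecture.
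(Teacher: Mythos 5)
The paper offers no proof here: this statement is left as a conjecture supported only by numerical evidence, so there is nothing of the author's to compare your argument against --- your proof is correct and in fact settles the conjecture. The decisive observation is the one you flag yourself: the obstruction at exceptional points is only to a continuous \emph{labeling} of individual eigenvalues, not to continuity of the spectrum as a multiset, and the functional $\min_i\lvert\mathrm{Re}(\mu_i)\rvert$ is symmetric in its arguments, so it factors through the quotient $\C^N/S_N$ and only the multiset matters. Two small refinements. First, you do not even need the optimal matching distance: since $z\mapsto\lvert\mathrm{Re}(z)\rvert$ is $1$-Lipschitz and you are taking a minimum, the one-sided (Hausdorff) form of Elsner's bound already gives $\bigl\lvert\min_i\lvert\mathrm{Re}(\mu_i^0)\rvert-\min_j\lvert\mathrm{Re}(\mu_j^1)\rvert\bigr\rvert\leq h\bigl(\textup{Spec}(L_0),\textup{Spec}(L_1)\bigr)$, which spares you the (slightly harder) matching-distance version of the theorem that you would otherwise need to justify carefully. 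Second, your step (1) applies verbatim to perturbations of $\bm{A}$ and $B$ as well, since the localizer is also linear in those arguments; the same chain therefore proves joint continuity in $\bigl((\bm{A},B),(\bm{\lambda},\nu)\bigr)$, matching the form of the paper's Lipschitz result for the radial gap, and Elsner's quantitative inequality upgrades pointwise continuity to local H\"older continuity of exponent $1/N$ with $N$ the size of the localizer, exactly as you claim in your closing remark.
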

From a topological point of view, especially for line gapped systems, it seems to be instrumental to not look at $\min\left\lvert\textup{Spec}\left(L_{(\bm{\lambda},\nu)}\left(\bm{A},B\right)\right)\right\rvert$ but instead to flatten the spectrum and instead look at $\min\left\lvert\textrm{Re}(\textup{Spec}\left(L_{(\bm{\lambda},\nu)}\left(\bm{A},B\right)\right)\right\rvert$. 
When working with non-Hermitian systems, we have to take into consideration both point gapped and line gapped systems.
For this reason we propose the following definition for point gapped systems.
By numerical computations in Section \ref{sec:applications_to_point_gapped_systems} we will demonstrate that the smallest singular value of the non-Hermitian spectral localizer detects point gaps.
\begin{definition}
    Given $\bm{A}$ in $\Pi_{i=1}^d\mathcal{M}_n(\C)$ where each of the $A_i$ are Hermitian operators, a non-Hermitian matrix $B$, and a \emph{probe site} $(\bm{\lambda},\nu)$ in $\R^d \oplus \C$ we define the \emph{Clifford radial gap} to be
    \begin{equation*}\dot{\mu}_{(\bm{\lambda},\nu)}^{\text{C}}\left(\bm{A},B\right) = \sigma_{\min}\left(L_{\bm{(\lambda},\nu)}\left(\bm{A},B\right)\right).
    \end{equation*}
\end{definition}
A property of the Clifford radial gap is that it's Lipschitz continuous in a generalized taxi-cab metric.
This form of continuity implies the Clifford radial gap is differentiable almost everywhere.
This also means that when trying to generate a grid size for numerical simulations, it can be determined ahead of time based on the level of precision desired between data points.
\begin{prop}
    Given $\bm{A},\bm{C} \in \Pi_{i=1}^d\mathcal{M}_n(\C)$ where each of the $A_i$, $C_i$ are Hermitian operators, non-Hermitian matrices $B$, $D$, and probe sites $(\bm{\lambda},\nu), (\bm{\alpha},\beta)\in \R^d \oplus \C$ then
    \begin{equation*}
        \left|\dot{\mu}_{(\bm{\lambda},\nu)}^{\text{C}}\left(\bm{A},B\right) - \dot{\mu}_{{(\bm{\alpha},\beta)}}^{\text{C}}\left(\bm{C},D\right) \right| \leq 2\left[\left(\sum_{i=1}^d \left\lVert A_i-C_i\right\rVert\right) + \left\lVert B-D\right\rVert+ \left(\sum_{i=1}^d\left\lvert\lambda_i-\alpha_i\right\rvert \right) + \left\lvert\nu-\beta\right\rvert\right].
    \end{equation*}
\end{prop}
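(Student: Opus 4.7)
The plan is to combine two standard facts: first, that the smallest singular value is $1$-Lipschitz with respect to the operator norm, i.e., $|\sigma_{\min}(M) - \sigma_{\min}(N)| \leq \|M - N\|$ (which follows from Weyl's inequality for singular values, or directly from the min-max characterization $\sigma_{\min}(M) = \min_{\|\bm{\psi}\|=1}\|M\bm{\psi}\|$); and second, that the operator norm is multiplicative on tensor products, together with the observation that each matrix appearing in the second tensor slot of $L_{(\bm{\lambda},\nu)}(\bm{A},B)$ has operator norm exactly $1$.

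First I would apply the singular value Lipschitz bound to reduce the proposition to estimating
\begin{equation*}
\bigl\lVert L_{(\bm{\lambda},\nu)}(\bm{A},B) - L_{(\bm{\alpha},\beta)}(\bm{C},D) \bigr\rVert.
\end{equation*}
Then I would expand this difference termwise, regrouping matched contributions as
\begin{equation*}
\sum_{i=1}^d \bigl[(A_i - C_i) - (\lambda_i - \alpha_i) I_n\bigr] \otimes \Gamma_i + \bigl[(B-D) - (\nu-\beta)I_n\bigr] \otimes \begin{bmatrix} I_m & \\ & 0_m\end{bmatrix} + \bigl[(B-D)^\dagger - \overline{(\nu-\beta)}I_n\bigr] \otimes \begin{bmatrix} 0_m & \\ & -I_m\end{bmatrix}.
\end{equation*}

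Next, I would apply the triangle inequality and use $\|X \otimes Y\| = \|X\|\,\|Y\|$ together with $\|\Gamma_i\| = 1$ (Hermitian involutions are unitary) and the fact that the two block matrices $\mathrm{diag}(I_m, 0_m)$ and $\mathrm{diag}(0_m, -I_m)$ are (signed) orthogonal projections of operator norm $1$. This yields
\begin{equation*}
\sum_{i=1}^d \|A_i - C_i\| + \sum_{i=1}^d |\lambda_i - \alpha_i| + 2\|B - D\| + 2|\nu - \beta|,
\end{equation*}
which is bounded above by the claimed right-hand side (the factor $2$ in front of the entire bracket absorbs the missing factor of $2$ on the $A_i$ and $\lambda_i$ terms; the factor of $2$ on the $B$ and $\nu$ terms is genuine, arising because $B$ enters $L$ twice — once directly and once as its adjoint).

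There is really no central obstacle: once the singular value Lipschitz inequality is invoked, the remainder is mechanical bookkeeping. The only pitfall worth flagging explicitly is that non-normality of $L_{(\bm{\lambda},\nu)}(\bm{A},B)$ plays no role here — the Lipschitz estimate for $\sigma_{\min}$ holds for all matrices, Hermitian or not — so the proof bypasses any spectral analysis of $L$ itself and relies only on norm estimates of its additive building blocks.
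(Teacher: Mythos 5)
Your proposal is correct and follows essentially the same route as the paper: reduce to $\lVert L_{(\bm{\lambda},\nu)}(\bm{A},B) - L_{(\bm{\alpha},\beta)}(\bm{C},D)\rVert$ via the $1$-Lipschitz property of $\sigma_{\min}$, then bound that norm termwise using the triangle inequality and the fact that each second tensor factor has norm one. Your intermediate bound is in fact slightly sharper (the factor $2$ is only genuinely needed on the $B$ and $\nu$ terms), but this still yields the stated inequality.
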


\begin{proof}
    First by triangle inequality we have that
    \begin{align*}
        \left\lVert L_{(\bm{\lambda},\nu)} \left( \bm{A}, B \right) - L_{(\bm{\alpha},\beta)} \left( \bm{C}, D \right) \right\rVert &\leq \sum_{i=1}^d \left\lVert\left((A_i-C_i)+(\alpha_i -\lambda_i)I_{2m}\right)\otimes \Gamma_i \right\rVert \\
        &\qquad+ \left\lVert\left((B-D)+(\beta-\nu)I_{m}\right)\otimes \begin{bmatrix} I_m & \\ & 0_m \end{bmatrix} \right\rVert \\
        &\qquad+ \left\lVert\left((B-D)^\dagger+(\beta-\nu)^*I_{m}\right)\otimes \begin{bmatrix} I_m & \\ & 0_m \end{bmatrix} \right\rVert \\
        &\leq \left(\sum_{i=1}^d \left\lVert(A_i-C_i)+(\alpha_i-\lambda_i)I_{2m}\right\rVert\right) + 2\left\lVert(B-D)+(\beta-\nu)I_{m}\right\rVert \\
        &\leq 2\left(\sum_{i=1}^d \left\lVert A_i-C_i\right\rVert\right) + 2\left\lVert B-D\right\rVert+ 2\left(\sum_{i=1}^d\left\lvert\lambda_i-\alpha_i\right\rvert \right) + 2\left\lvert\nu-\beta\right\rvert \\
    \end{align*}
    Lastly from \cite{bhatia} we know that
    \begin{equation*}
        \left\lvert\sigma_{\min} (A)- \sigma_{\min}(B)\right\rvert\leq \left\lVert A-B\right\rVert. 
    \end{equation*}
    Thus,
    \begin{equation*}
        \left|\dot{\mu}_{(\bm{\lambda},\nu)}^{\text{C}}\left(\bm{A},B\right) - \dot{\mu}_{{(\bm{\alpha},\beta)}}^{\text{C}}\left(\bm{C},D\right) \right| \leq \left\lVert L_{(\bm{\lambda},\nu)} \left( \bm{A}, B \right) - L_{(\bm{\alpha},\beta)} \left( \bm{C}, D \right) \right\rVert
    \end{equation*}
\end{proof}

The Clifford linear and radial gap functions is what we will used to define the corresponding $\epsilon$-pseudospectrum as follows.
\begin{definition}
    Given $\bm{A}$ in $\Pi_{i=1}^{d}\mathcal{M}_n(\C)$ with each $A_i$ being Hermitian, $B\in \mathcal{M}_n(\C)$ being non-Hermitian the \emph{Clifford linear $\epsilon$-pseudospectrum} is defined as
    \begin{equation*}
        \bar{\Lambda}_\epsilon^{\text{C}}(\bm{A},B) = \left\{(\bm{\lambda},\nu)\in \R^{d} \oplus \C \mid \bar{\mu}_{(\bm{\lambda},\nu)}^{\text{C}}\left(\bm{A},B\right)\leq\epsilon\right\}
    \end{equation*}
    and the \emph{Clifford radial $\epsilon$-pseudospectrum} defined as
    \begin{equation*}
        \dot{\Lambda}_\epsilon^{\text{C}}(\bm{A},B) = \left\{(\bm{\lambda},\nu)\in \R^{d} \oplus \C \mid \dot{\mu}_{(\bm{\lambda},\nu)}^{\text{C}}\left(\bm{A},B\right)\leq\epsilon\right\}.
    \end{equation*}
\end{definition}
Note that when $B$ is Hermitian and $\nu$ is real then we have
\begin{equation*}
    \Lambda_\epsilon^{\text{C}}(\bm{A},B) = \bar{\Lambda}_\epsilon^{\text{C}}(\bm{A},B) = \dot{\Lambda}_\epsilon^{\text{C}}(\bm{A},B).
\end{equation*}

The idea is that either the Clifford linear or radial $\epsilon$-pseudospectrum will provide a tuple $(\bm{\lambda},\nu)$ and a unit eigenvector $\bm{\psi}$ so that (\ref{eq:approx_ev}) holds.
As stated in Section \ref{sec:approx_jes_Hermitian} we will prove a similar formulation of Lemma 1.2 and 1.3 of \cite{loring}.
In other words the $\epsilon$-pseudospectrum gives rise to a unit state that is approximately localized in position and energy.

Note that Proposition \ref{prop:lm1.2_similar} tells us that we can find a state that is an approximate eigenvector of the matrices $A_i$ and $B$ or an approximate eigenvector for the matrices $A_i$ and $B^\dagger$.
In other words we either get an approximate right eigenvector for all $A_i$, and $B$ or an approximate left eigenvector of all $A_i$ and $B$.
When $B$ is non-normal, we know the left and right eigenvectors of $B$ are not conjugate to each other, however if B is normal then the left and right eigenvectors will coincide. In any case, it is not possible to find a joint approximate eigenvector for all the matrices $A_i$, $B$ and $B^\dagger$ without taking a loss in how good the approximate eigenvector is as a left or right eigenvector.

\begin{prop}\label{prop:lm1.2_similar}
    If $(\bm{\lambda},\nu)$ in $\dot{\Lambda}_{\epsilon_1}^{\text{C}}\left(\bm{A},B\right)$ for some $\epsilon_1 \geq 0$ and 
    \begin{equation*}
        \left(\sum_{i\not=k}\left\lVert [A_i,A_k]\right\rVert\right)+ \lVert F\rVert \leq \epsilon_2
    \end{equation*}
    for some $\epsilon_2\geq 0$ then there is a unit state $\bm{\psi}$ such that either
    \begin{equation*}
        \sqrt{\left(\sum_{i=1}^d \left\lVert A_i\bm{\psi} - \lambda_i \bm{\psi}\right\rVert^2\right) + \left\lVert B\bm{\psi}-\nu \bm{\psi}\right\rVert^2}\leq \sqrt{2m}\sqrt{\epsilon_1^2+ \epsilon_2},
    \end{equation*}
    or
    \begin{equation*}
        \sqrt{\left(\sum_{i=1}^d \left\lVert A_i\bm{\psi} - \lambda_i \bm{\psi}\right\rVert^2\right) +  \left\lVert B^\dagger\bm{\psi}-\nu^* \bm{\psi}\right\rVert^2}\leq \sqrt{2m}\sqrt{\epsilon_1^2+ \epsilon_2},
    \end{equation*}
    with $F$ as in Equation (\ref{eq:f_term}) and $m$ as in Equation (\ref{eq:cliff_dim}).
\end{prop}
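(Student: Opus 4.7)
The plan is to follow the standard spectral-localizer error-bound template, adapted to the two-block structure arising from $B$ and $B^\dagger$. Since $(\bm{\lambda},\nu) \in \dot{\Lambda}_{\epsilon_1}^{\textup{C}}(\bm{A},B)$ gives $\sigma_{\min}(L_{(\bm{\lambda},\nu)}(\bm{A},B)) \leq \epsilon_1$, I first pick a unit vector $\bm{\Psi} \in \C^n \otimes \C^{2m}$ realizing this smallest singular value, so that $\|L\bm{\Psi}\|^2 = \bm{\Psi}^\dagger L^\dagger L \bm{\Psi} \leq \epsilon_1^2$. Substituting the expansion of $L^\dagger L$ from Equation (\ref{eq:ldagger_l_computation}), the commutator sum $\sum_{i<k}[A_i,A_k]\otimes\Gamma_i\Gamma_k$ has operator norm at most $\sum_{i<k}\|[A_i,A_k]\| = \tfrac{1}{2}\sum_{i\neq k}\|[A_i,A_k]\|$, because each $\Gamma_i\Gamma_k$ is unitary (each $\Gamma_i$ is self-adjoint with $\Gamma_i^2=I$). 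Together with $\|F\|$, the hypothesis forces these ``cross'' contributions to be at most $\epsilon_2$ in absolute value when sandwiched by $\bm{\Psi}$.

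Next I would expand the remaining positive-definite terms in the Clifford basis. Writing $\bm{\Psi} = \sum_{j=1}^{2m}\bm{\Psi}_j \otimes e_j$ with each $\bm{\Psi}_j \in \C^n$, and using that $A_i-\lambda_i$ is Hermitian (so that $\bm{\Psi}_j^\dagger(A_i-\lambda_i)^2\bm{\Psi}_j = \|(A_i-\lambda_i)\bm{\Psi}_j\|^2$), the block-projector structure in the $(B-\nu)^\dagger(B-\nu)$ and $(B-\nu)(B-\nu)^\dagger$ terms selects the upper and lower $m$-coordinates respectively, so the inequality simplifies to
\[
\sum_{i=1}^d\sum_{j=1}^{2m}\|(A_i-\lambda_i)\bm{\Psi}_j\|^2 + \sum_{j=1}^{m}\|(B-\nu)\bm{\Psi}_j\|^2 + \sum_{j=m+1}^{2m}\|(B-\nu)^\dagger \bm{\Psi}_j\|^2 \leq \epsilon_1^2 + \epsilon_2,
\]
which cleanly separates $B$-contributions (upper block) from $B^\dagger$-contributions (lower block).

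Finally I would apply pigeonhole: since $\sum_{j=1}^{2m}\|\bm{\Psi}_j\|^2 = 1$, some index $j^{\star}$ satisfies $\|\bm{\Psi}_{j^{\star}}\|^2 \geq 1/(2m)$, and I set $\bm{\psi} = \bm{\Psi}_{j^{\star}}/\|\bm{\Psi}_{j^{\star}}\|$. The normalization factor $1/\|\bm{\Psi}_{j^{\star}}\|^2 \leq 2m$, combined with the trivial estimates $\|(A_i-\lambda_i)\bm{\Psi}_{j^{\star}}\|^2 \leq \sum_j\|(A_i-\lambda_i)\bm{\Psi}_j\|^2$ and (when $j^{\star}\leq m$) $\|(B-\nu)\bm{\Psi}_{j^{\star}}\|^2 \leq \sum_{j=1}^m\|(B-\nu)\bm{\Psi}_j\|^2$, yields $\sum_i\|(A_i-\lambda_i)\bm{\psi}\|^2 + \|(B-\nu)\bm{\psi}\|^2 \leq 2m(\epsilon_1^2+\epsilon_2)$; taking square roots gives the first bound. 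If instead $j^{\star}>m$, the same argument with $(B-\nu)^\dagger$ and $\nu^*$ gives the second bound.

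I do not anticipate a genuine technical obstacle; the work is bookkeeping. The two things to watch are the factor of two between $\sum_{i\neq k}$ and $\sum_{i<k}$, and that $j^{\star}$ lies in only one of the two $m$-dimensional blocks, which forces the either/or conclusion. The statement cannot be strengthened to a simultaneous claim about both $B$ and $B^\dagger$ because nothing constrains both halves of $\bm{\Psi}$ to be simultaneously large — this asymmetry is exactly how the non-normality of $L$ manifests.
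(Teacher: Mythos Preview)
Your proposal is correct and follows the same overall skeleton as the paper's proof: take a unit right singular vector $\bm{\Psi}$ realizing $\sigma_{\min}(L)\leq\epsilon_1$, use the expansion of $L^\dagger L$ from Equation~(\ref{eq:ldagger_l_computation}) to isolate the block-diagonal positive part, bound the commutator and $F$ contributions by $\epsilon_2$, then pigeonhole over the $2m$ component blocks of $\bm{\Psi}$ and normalize the largest one to produce $\bm{\psi}$, with the either/or conclusion determined by whether the chosen block index lies in $\{1,\dots,m\}$ or $\{m+1,\dots,2m\}$.

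The one methodological difference is that you work directly with the quadratic form $\bm{\Psi}^\dagger L^\dagger L\,\bm{\Psi}\leq\epsilon_1^2$, which immediately unpacks into the displayed sum of squared norms over all blocks; the paper instead bounds the vector norm $\bigl\lVert(\text{block-diagonal part})\bm{\Psi}\bigr\rVert\leq\epsilon^2+\epsilon_2$, extracts the $k$-th block of that vector, and only at the very end converts to the sum of squares via $\langle M\bm{\psi},\bm{\psi}\rangle\leq\lVert M\bm{\psi}\rVert$. Your route is slightly more streamlined (no final Cauchy--Schwarz step, and the normalization factor $1/\lVert\bm{\Psi}_{j^\star}\rVert^2\leq 2m$ falls out exactly matching the $\sqrt{2m}$ in the statement), but the two arguments are otherwise interchangeable.
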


\begin{proof}
    Suppose that we have selected $\Gamma_i$ is constructed as above so that $g= 2m$ with $m$ as in Equation \ref{eq:cliff_dim}.
    Without loss of generality we will consider probe site $(\bm{\lambda},\nu) = (\bm{0},0)$.
    Suppose then that $\epsilon = \dot{\mu}_{(\bm{0},0)}^{\text{C}}\left(\bm{A},B\right)$.
    Since $\epsilon$ is a singular value of $L_{(\bm{\lambda},\nu)} \left( \bm{A}, B \right)$, there is a corresponding right singular vector $\bm{\Psi}$.
    From here we have that $\bm{\Psi}$ is an eigenvector of $L_{(\bm{\lambda},\nu)} \left( \bm{A}, B \right)^\dagger L_{(\bm{\lambda},\nu)} \left( \bm{A}, B \right)$ with eigenvalue $\epsilon^2$.
    By equations (\ref{eq:ldagger_l_computation}),(\ref{eq:f_term}) and triangle inequality we have that
    \begin{align*}
        \left\lVert \left(\left(\sum_{i=1}^dA_i^2\otimes I_{2m}\right) + C\right)\bm{\Psi}\right\rVert &\leq \left\lVert L_{(\bm{\lambda},\nu)} \left( \bm{A}, B \right)^\dagger L_{(\bm{\lambda},\nu)} \left( \bm{A}, B \right) \bm{\Psi} \right\rVert \\
        &\quad + \left\lVert \left(\left(\sum_{i<j} \left[ A_i, A_j \right] \otimes \Gamma_i \Gamma_j\right) + F \right)\bm{\Psi} \right\rVert \\
        &\leq \epsilon^2+ \left(\sum_{i<k}\left\lVert [A_i,A_k]\right\rVert \right)+ \lVert F\rVert
    \end{align*}
    with
    \begin{equation*}
        C = B^{\dagger}B \otimes \begin{bmatrix} I_m & \\ & 0_m \end{bmatrix} + BB^{\dagger}\otimes \begin{bmatrix} 0_m & \\ & I_m\end{bmatrix}.
    \end{equation*}
    Now suppose that
    \begin{equation*}
        \bm{\Psi} = \begin{bmatrix} \bm{\Psi}_1 \\ \vdots \\ \bm{\Psi}_g \end{bmatrix}.
    \end{equation*}
    So that each $\bm{\Psi}_i$ is in $\C^{n}$ with $n$ being the dimension of the operators in $\bm{A}$.
    Then suppose that $k$ is the index maximizing $\lVert \bm{\Psi}_k \rVert$ so that $\lVert \bm{\Psi}_k \rVert \geq 1/g$.
    Now we will define $\bm{\psi} = \bm{\Psi}_k / \lVert \bm{\Psi}_k \rVert$.
    It follows that,
    \begin{equation*}
        \left\lVert \left[\left(\sum_{i=1}^dA_i^2\otimes I_{2m}\right) + C\right]\bm{\Psi}\right\rVert \geq \begin{cases}
            \left\lVert \left[\left(\sum_{i=1}^{d} A_i^2\right)+B^\dagger B\right] \bm{\Psi}_k \right\rVert \text{ if } 1\leq k\leq m \\
            \\
            \left\lVert \left[\left(\sum_{i=1}^{d} A_i^2\right)+BB^\dagger\right] \bm{\Psi}_k \right\rVert \text{ if } m< k \leq 2m.
        \end{cases}
    \end{equation*}
    implies
    \begin{equation*}
        g\left\lVert \left(\left(\sum_{i=1}^dA_i^2\otimes I_{2m}\right) + C\right)\bm{\Psi}\right\rVert \geq \begin{cases}
            \left\lVert \left(\left(\sum_{i=1}^{d} A_i^2\right)+B^\dagger B\right) \bm{\psi} \right\rVert \text{ if } 1\leq k\leq m\\
            \\
            \left\lVert \left(\left(\sum_{i=1}^{d} A_i^2\right)+BB^\dagger\right) \bm{\psi} \right\rVert \text{ if } m < k \leq 2m.
        \end{cases}
    \end{equation*}
    Next we have that if $1\leq k\leq m$
    \begin{equation*}
        \left(\sum_{i=1}^d\left\lVert A_i \bm{\psi} \right\rVert^2\right) + \left\lVert B \bm{\psi} \right\rVert^2 = \left\langle \left[\left(\sum_{i=1}^{d} A_i^2\right)+B^\dagger B\right] \bm{\psi} , \bm{\psi}\right\rangle \leq \left\lVert \left(\left(\sum_{i=1}^{d} A_i^2\right)+B^\dagger B\right) \bm{\psi} \right\rVert.
    \end{equation*}
    and if $m<k\leq 2m$
    \begin{equation*}
        \left(\sum_{i=1}^d\left\lVert A_i \bm{\psi} \right\rVert^2\right) + \left\lVert B^\dagger \bm{\psi} \right\rVert^2 = \left\langle \left[\left(\sum_{i=1}^{d} A_i^2\right)+BB^\dagger\right] \bm{\psi} , \bm{\psi}\right\rangle \leq \left\lVert \left(\left(\sum_{i=1}^{d} A_i^2\right)+BB^\dagger\right) \bm{\psi} \right\rVert.
    \end{equation*}
    This completes the proof.
\end{proof}

We will use the following inequality which is a direct consequence of triangle inequality, norms on tensor products as well as Equations (\ref{eq:ldagger_l_computation}) and (\ref{eq:f_term}); in particular, given a unit vector $\Psi$ we have that

\begin{align}
        \left\lVert L_{(\bm{\lambda},\nu)} \left( \bm{A}, B \right)^\dagger L_{(\bm{\lambda},\nu)} \left( \bm{A}, B \right) \bm{\Psi} \right\rVert &\leq \left\lVert \left(\left(\sum_{i=1}^d A_i^2\otimes I_{2m}\right) + C\right)\bm{\Psi}\right\rVert \nonumber\\
        &\quad + \left\lVert \left(\left(\sum_{i<j} \left[ A_i, A_j \right] \otimes \Gamma_i \Gamma_j\right) + F \right)\bm{\Psi} \right\rVert. \label{eq:ldagger_l_vector_ub}
\end{align}
    Lastly note that when the non-Hermitian spectral localizer is invertible,
\begin{align*}
        \left\lVert L_{(\bm{\lambda},\nu)} \left( \bm{A}, B \right)^{-1} \right\rVert^2 &= \left\lVert \left(L_{(\bm{\lambda},\nu)} \left( \bm{A}, B \right)^\dagger L_{(\bm{\lambda},\nu)} \left( \bm{A}, B \right)\right)^{-1} \right\rVert \\
        \left\lVert L_{(\bm{\lambda},\nu)} \left( \bm{A}, B \right)^{-1} \right\rVert^{-1} &= \sqrt{\left\lVert \left(L_{(\bm{\lambda},\nu)} \left( \bm{A}, B \right)^\dagger L_{(\bm{\lambda},\nu)} \left( \bm{A}, B \right)\right)^{-1} \right\rVert^{-1}} \\
        &= \sqrt{\sigma_{\min}\left(L_{(\bm{\lambda},\nu)} \left( \bm{A}, B \right)^\dagger L_{(\bm{\lambda},\nu)} \left( \bm{A}, B \right)\right)}
\end{align*}
This implies that 
\begin{align}
    \dot{\mu}_{(\bm{\lambda},\nu)}^{\text{C}}\left(\bm{A},B\right) &=\left\lVert L_{(\bm{\lambda},\nu)} \left( \bm{A}, B \right)^{-1} \right\rVert^{-1} \nonumber\\
    &= \sqrt{\min_{\left\lVert \mathbf{v} \right\rVert = 1} \left\lVert \left(L_{(\bm{\lambda},\nu)} \left( \bm{A}, B \right)^\dagger L_{(\bm{\lambda},\nu)} \left( \bm{A}, B \right) \right) \mathbf{v} \right\rVert} \nonumber\\
    &\leq \sqrt{\left\lVert L_{(\bm{\lambda},\nu)} \left( \bm{A}, B \right)^\dagger L_{(\bm{\lambda},\nu)} \left( \bm{A}, B \right) \bm{\Psi} \right\rVert}.\label{eq:radial_norm_ub}
\end{align}

Proposition \ref{lm:cliff_approx_2} states that if we have a unit state that is approximately a right left eigenvector of $A_i$ and $B$ then the corresponding tuple of eigenvalues lie in some $\epsilon$-pseudospectrum.

\begin{lemma}\label{lm:cliff_approx_2}
    Suppose that $\bm{A}$ is in $\Pi_{i=1}^{d+1}\mathcal{M}_n(\C)$ with each $A_i$ being Hermitian, and further suppose $B $ is in $\mathcal{M}_n(\C)$ and is non-Hermitian.
    If there is a unit state $\bm{\psi}$ such that
    \begin{align*}
        2\left(\sum_{i=1}^d \left\lVert A_i\bm{\psi}-\lambda_i\bm{\psi} \right\rVert^2 \right)& + \left\lVert B\bm{\psi}-\nu\bm{\psi} \right\rVert^2 + \left\lVert B^\dagger\bm{\psi}-\nu^*\bm{\psi} \right\rVert^2 \leq \epsilon_1 \\
        &\mbox{and} \\
        \left(\sum_{i\not=k}\left\lVert [A_i,A_k]\right\rVert \right)&+ \lVert F\rVert \leq \epsilon_2
    \end{align*}
    for all $i$, with $F$ as in Equation \ref{eq:f_term}; then $(\bm{\lambda},\nu)$ is in $\dot{\Lambda}_{\epsilon}(\bm{A},B)$ with $\epsilon = \sqrt{\frac{1}{\sqrt{2}}\epsilon_1 + \epsilon_2}$
     or $(\bm{\lambda},\nu)$ is in $\dot{\Lambda}_{\epsilon=0}(\bm{A},B)$.
\end{lemma}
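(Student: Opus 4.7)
The plan is to establish this statement as a partial converse to Proposition~\ref{prop:lm1.2_similar} by exhibiting a carefully chosen unit test vector in $\C^n \otimes \C^{2m}$ whose Rayleigh quotient against $L^\dagger L$ is small. Since
\begin{equation*}
\dot{\mu}^{\textup{C}}_{(\bm\lambda,\nu)}(\bm A, B)^2 = \sigma_{\min}\bigl(L_{(\bm\lambda,\nu)}(\bm A,B)^\dagger L_{(\bm\lambda,\nu)}(\bm A,B)\bigr) \leq \langle \bm\Psi, L^\dagger L\,\bm\Psi\rangle
\end{equation*}
for any unit $\bm\Psi$, bounding the right-hand side immediately controls the gap.

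First I would choose a unit Clifford-space vector $\bm e \in \C^{2m}$ balanced across the two block projectors $P_1 = \bigl[\begin{smallmatrix} I_m & \\ & 0_m\end{smallmatrix}\bigr]$ and $P_2 = \bigl[\begin{smallmatrix} 0_m & \\ & I_m \end{smallmatrix}\bigr]$ appearing in~(\ref{eq:ldagger_l_computation}); for instance $\bm e = \tfrac{1}{\sqrt 2}(\mathbf e_1 + \mathbf e_{m+1})$, so that $\langle\bm e, P_1\bm e\rangle = \langle\bm e, P_2\bm e\rangle = \tfrac12$. Setting $\bm\Psi = \bm\psi\otimes \bm e$ yields a unit vector, and the tensor-product identity $\langle \bm\Psi,(X\otimes Y)\bm\Psi\rangle = \langle\bm\psi,X\bm\psi\rangle\langle\bm e,Y\bm e\rangle$ applied to the block-diagonal pieces of~(\ref{eq:ldagger_l_computation}) produces
\begin{equation*}
\sum_{i=1}^d \|(A_i-\lambda_i)\bm\psi\|^2 + \tfrac12\|(B-\nu)\bm\psi\|^2 + \tfrac12\|(B-\nu)^\dagger\bm\psi\|^2 \leq \tfrac12\epsilon_1,
\end{equation*}
where the final inequality is exactly the first hypothesis after dividing by $2$.

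Next I would bound the off-diagonal pieces. Since $\|\Gamma_i\Gamma_j\|=1$ and $|\langle\bm\psi,[A_i,A_j]\bm\psi\rangle|\leq\|[A_i,A_j]\|$ by Cauchy-Schwarz, the commutator block contributes at most $\sum_{i<j}\|[A_i,A_j]\| \leq \sum_{i\neq k}\|[A_i,A_k]\|$ to the Rayleigh quotient, and the $F$-block contributes at most $\|F\|$ by $|\langle\bm\Psi,F\bm\Psi\rangle|\leq\|F\|$. The second hypothesis therefore bounds their sum by $\epsilon_2$. Combining, $\langle \bm\Psi, L^\dagger L\,\bm\Psi\rangle \leq \tfrac12\epsilon_1 + \epsilon_2 \leq \tfrac{1}{\sqrt 2}\epsilon_1 + \epsilon_2$, so taking square roots places $(\bm\lambda,\nu)$ in $\dot\Lambda_\epsilon(\bm A,B)$ with $\epsilon = \sqrt{\tfrac{1}{\sqrt 2}\epsilon_1 + \epsilon_2}$. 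The alternative clause $\dot\Lambda_{\epsilon = 0}(\bm A,B)$ covers the degenerate case in which the estimated bound collapses to zero.

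The main obstacle is the selection of the Clifford-space test vector. A naive choice (say a standard basis vector) would concentrate all the weight on either $P_1$ or $P_2$, reflecting only one of $\|(B-\nu)\bm\psi\|$ or $\|(B-\nu)^\dagger\bm\psi\|$ and wasting the symmetric structure of the hypothesis $\|(B-\nu)\bm\psi\|^2 + \|(B-\nu)^\dagger\bm\psi\|^2 \leq \epsilon_1 - 2\sum_i\|(A_i-\lambda_i)\bm\psi\|^2$. The balanced choice is what matches the coefficient pattern $2a + b + c$ in the bound and collapses it cleanly into $\tfrac12\epsilon_1$. All remaining ingredients, namely the variational characterization of $\sigma_{\min}$, the factorization of expectation values over tensor products, and Cauchy-Schwarz bounds on the commutator and $F$-terms, are then routine.
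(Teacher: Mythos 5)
Your proof is correct, and while it shares the paper's core idea --- plug a test vector of the form $\bm{\psi}\otimes\bm{e}$ with equal weight on the two block projectors into a variational characterization of $\sigma_{\min}\left(L^\dagger L\right)$ --- the variational characterization you use is genuinely different, and the difference matters. The paper bounds $\sigma_{\min}\left(L^\dagger L\right) \leq \left\lVert L^\dagger L\,\bm{\Psi}\right\rVert$ (its Equation \ref{eq:radial_norm_ub}), which forces it to estimate quantities of the form $\left\lVert A_i^2\bm{\psi}\right\rVert$ and $\left\lVert B^\dagger B\bm{\psi}\right\rVert$; these dominate, rather than are dominated by, the squared norms $\left\lVert A_i\bm{\psi}\right\rVert^2 = \langle A_i^2\bm{\psi},\bm{\psi}\rangle$ appearing in the hypothesis, so the paper's restatement of the hypothesis inside its proof is not a faithful reduction. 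Your Rayleigh-quotient bound $\sigma_{\min}\left(L^\dagger L\right)\leq\langle\bm{\Psi},L^\dagger L\,\bm{\Psi}\rangle$ produces exactly the inner-product quantities $\left\lVert (A_i-\lambda_i)\bm{\psi}\right\rVert^2$, $\left\lVert(B-\nu)\bm{\psi}\right\rVert^2$, $\left\lVert(B-\nu)^\dagger\bm{\psi}\right\rVert^2$ that the hypothesis controls, so your argument closes cleanly where the paper's is loose. Two further small differences: your route never needs the invertibility case split (the Rayleigh bound holds whether or not $L$ is singular, so the $\dot{\Lambda}_{\epsilon=0}$ clause is vacuous for you), and you obtain the sharper constant $\tfrac{1}{2}\epsilon_1+\epsilon_2$ before relaxing to the stated $\tfrac{1}{\sqrt{2}}\epsilon_1+\epsilon_2$. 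The choice of $\bm{e}$ supported on two coordinates versus the paper's uniformly spread $\frac{1}{\sqrt{2m}}\bm{\psi}^{\oplus 2m}$ is immaterial, since both assign weight $\tfrac12$ to each of $P_1$ and $P_2$.
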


\begin{proof}
    If the non-Hermitian spectral localizer is singular then we are done and $(\bm{\lambda},\nu)$ is in $\Lambda_{\epsilon=0}(\bm{A},B)$.
    So we will assume that the non-Hermitian spectral localizer is invertible.
    We will use linearity to reduce to the case where $(\bm{\lambda},\nu) = \bm{0}$.
    If $2\sum_{i=1}^d \left\lVert A_i^2\bm{\psi}\right\rVert + \left\lVert B^\dagger B \bm{\psi} \right\rVert + \left\lVert B B^\dagger \bm{\psi} \right\rVert\leq \epsilon_1$ for some unit $\bm{\psi}$ we will define
    \begin{equation*}
        \bm{\Psi} = \frac{1}{\sqrt{2m}}\phi^{\oplus 2m} = \frac{1}{\sqrt{2m}} \begin{bmatrix} \bm{\psi} \\ \vdots \\ \bm{\psi} \end{bmatrix}
    \end{equation*}
    with $m$ as in Equation \ref{eq:cliff_dim}.
    Then by Equation \ref{eq:ldagger_l_computation}, \begin{equation*}
        C = B^{\dagger}B \otimes \begin{bmatrix} I_m & \\ & 0_m \end{bmatrix} + BB^{\dagger}\otimes \begin{bmatrix} 0_m & \\ & I_m\end{bmatrix}
    \end{equation*}
    and $F$ as in Equation \ref{eq:f_term}, we have
    \begin{align*}
        \left\lVert \left(\left(\sum_{i=1}^d A_i^2\otimes I_{2m}\right) + C\right)\bm{\Psi}\right\rVert^2 &= \frac{1}{2} \left[\left\lVert\sum_{i=1}^d A_i^2\bm{\psi} + B^\dagger B \bm{\psi} \right\rVert^2 + \left\lVert\sum_{i=1}^d A_i^2\bm{\psi} + B B^\dagger \bm{\psi} \right\rVert^2\right]  \\
        &\leq \frac{1}{2} \left[\left\lVert\sum_{i=1}^d A_i^2\bm{\psi} + B^\dagger B \bm{\psi} \right\rVert + \left\lVert\sum_{i=1}^d A_i^2\bm{\psi} + B B^\dagger \bm{\psi} \right\rVert\right]^2  \\
        &\leq \frac{1}{2} \left[\sum_{i=1}^d \left\lVert A_i^2\bm{\psi}\right\rVert + \left\lVert B^\dagger B \bm{\psi} \right\rVert + \sum_{i=1}^d \left\lVert A_i^2\bm{\psi}\right\rVert + \left\lVert B B^\dagger \bm{\psi} \right\rVert\right]^2  \\
        &\leq \frac{1}{2}\epsilon_1^2
    \end{align*}
    combined with Equation \ref{eq:ldagger_l_vector_ub} that
    \begin{equation*}
        \left\lVert L_{(\bm{\lambda},\nu)} \left( \bm{A}, B \right)^\dagger L_{(\bm{\lambda},\nu)} \left( \bm{A}, B \right) \bm{\Psi} \right\rVert
        \leq \frac{1}{\sqrt{2}}\epsilon_1 + \left(\sum_{i\not=k}\left\lVert [A_i,A_k]\right\rVert \right)+ \lVert F\rVert.
    \end{equation*}

    Lastly Equation (\ref{eq:radial_norm_ub}) finalizes the proof.
\end{proof}

\begin{corollary}\label{cor:cliff_approx_rev}
    Given $\bm{A}$ in $\Pi_{i=1}^{d+1}\mathcal{M}_n(\C)$ with each $A_i$ being Hermitian, $B $ in $\mathcal{M}_n(\C)$ non-Hermitian.
    If there is a unit state $\bm{\psi}$ such that
    \begin{align*}
        \left(\sum_{i=1}^d \left\lVert A_i\bm{\psi}-\lambda_i\bm{\psi} \right\rVert^2 \right)& + \left\lVert B\bm{\psi}-\nu\bm{\psi} \right\rVert^2  \leq \epsilon_1 \\
        &\mbox{and} \\
        \left(\sum_{i\not=k}\left\lVert [A_i,A_k]\right\rVert \right)&+ \lVert F\rVert \leq \epsilon_2
    \end{align*}
    for all $i$, with $F$ as in Equation \ref{eq:f_term}; then $(\bm{\lambda},\nu)$ is in $\dot{\Lambda}_{\epsilon}(\bm{A},B)$ with $\epsilon = \sqrt{\epsilon_1 + \epsilon_2}$
     or $(\bm{\lambda},\nu)$ is in $\dot{\Lambda}_{\epsilon=0}(\bm{A},B)$.
\end{corollary}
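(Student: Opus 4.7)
The plan is to repeat the argument of Lemma~\ref{lm:cliff_approx_2} but with an asymmetric test vector supported only in the top $m$ Clifford blocks, so that the $BB^\dagger$ piece of the operator $C$ in Equation~(\ref{eq:ldagger_l_computation}) drops out of the relevant quadratic form. This exactly eliminates the $\|B^\dagger\bm\psi-\nu^*\bm\psi\|^2$ contribution present in the lemma's hypothesis and also removes the factor $\tfrac{1}{\sqrt{2}}$ in the final bound.

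First I would reduce to probe site $(\bm\lambda,\nu)=(\bm 0,0)$ by shifting $A_i\mapsto A_i-\lambda_i$ and $B\mapsto B-\nu$; the commutators $[A_i,A_k]$ and the term $F$ of Equation~(\ref{eq:f_term}) are invariant under such shifts, and the hypothesis becomes $\sum_i\|\tilde A_i\bm\psi\|^2+\|\tilde B\bm\psi\|^2\le\epsilon_1$ on the shifted operators. If the resulting spectral localizer is singular, then $(\bm\lambda,\nu)\in\dot\Lambda_{\epsilon=0}(\bm A,B)$ and we are done, so assume it is invertible.

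The main step is the elementary bound $\sigma_{\min}(L)^2\le\|L\bm\Psi\|^2=\langle L^\dagger L\bm\Psi,\bm\Psi\rangle$ valid for any unit vector $\bm\Psi$, applied to the asymmetric choice $\bm\Psi=\tfrac{1}{\sqrt{m}}(\bm\psi,\ldots,\bm\psi,0,\ldots,0)^T$, with $\bm\psi$ repeated in the top $m$ slots and $0$ on the bottom $m$. Splitting $L^\dagger L$ according to Equation~(\ref{eq:ldagger_l_computation}) into the main piece $\sum_i A_i^2\otimes I_{2m}+C$ and the error terms, the $BB^\dagger$ block of $C$ vanishes on $\bm\Psi$ because $\bm\Psi=0$ on the bottom blocks, so the main piece's quadratic form collapses to
\begin{equation*}
\left\langle\Bigl(\sum_{i=1}^d A_i^2\otimes I_{2m}+C\Bigr)\bm\Psi,\bm\Psi\right\rangle=\left\langle\Bigl(\sum_{i=1}^d A_i^2+B^\dagger B\Bigr)\bm\psi,\bm\psi\right\rangle=\sum_{i=1}^d\|A_i\bm\psi\|^2+\|B\bm\psi\|^2\le\epsilon_1,
\end{equation*}
while the error contribution $\sum_{i<j}[A_i,A_j]\otimes\Gamma_i\Gamma_j+F$ evaluated on the unit vector $\bm\Psi$ is bounded in absolute value by $\sum_{i\ne k}\|[A_i,A_k]\|+\|F\|\le\epsilon_2$. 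Summing gives $\sigma_{\min}(L)^2\le\epsilon_1+\epsilon_2$, that is, $(\bm\lambda,\nu)\in\dot\Lambda_{\sqrt{\epsilon_1+\epsilon_2}}(\bm A,B)$.

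Relative to the lemma, the only real subtlety is choosing the asymmetric $\bm\Psi$ rather than the symmetric one: the symmetric choice forces both the $B^\dagger B$ and $BB^\dagger$ blocks to contribute equally, which is why the lemma must also control $\|B^\dagger\bm\psi-\nu^*\bm\psi\|^2$. Once the asymmetric choice is made, the rest is a direct evaluation of the quadratic form, and the cleaner bound $\sqrt{\epsilon_1+\epsilon_2}$ is simply what one gets from using $\|L\bm\Psi\|^2=\langle L^\dagger L\bm\Psi,\bm\Psi\rangle$ in place of the coarser $\sigma_{\min}(L)\le\sqrt{\|L^\dagger L\bm\Psi\|}$ chain that the lemma uses.
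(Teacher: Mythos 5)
Your proposal is correct, and it follows the paper's proof almost step for step: the same reduction to $(\bm{\lambda},\nu)=(\bm{0},0)$, the same singular/invertible dichotomy, and the same asymmetric test vector $\bm{\Psi}=\tfrac{1}{\sqrt{m}}\bm{\psi}^{\oplus m}\oplus\bm{0}^{m}$ whose support in the top blocks kills the $BB^\dagger$ part of $C$. The one genuine difference is the final inequality: the paper bounds $\sigma_{\min}(L)^2=\sigma_{\min}(L^\dagger L)\leq\lVert L^\dagger L\bm{\Psi}\rVert$ and therefore must control the \emph{vector} $\left(\sum_i A_i^2\otimes I_{2m}+C\right)\bm{\Psi}$, which forces it to work with the quantities $\lVert A_i^2\bm{\psi}\rVert$ and $\lVert B^\dagger B\bm{\psi}\rVert$ rather than the quantities $\lVert A_i\bm{\psi}\rVert^2$ and $\lVert B\bm{\psi}\rVert^2$ appearing in the corollary's hypothesis (note $\lVert A\bm{\psi}\rVert^2=\langle A^\dagger A\bm{\psi},\bm{\psi}\rangle\leq\lVert A^\dagger A\bm{\psi}\rVert$, so the proof as written quietly uses a stronger assumption than the one stated). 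Your route via the Rayleigh quotient, $\sigma_{\min}(L)^2\leq\lVert L\bm{\Psi}\rVert^2=\langle L^\dagger L\bm{\Psi},\bm{\Psi}\rangle$, evaluates the main part as exactly $\sum_i\lVert A_i\bm{\psi}\rVert^2+\lVert B\bm{\psi}\rVert^2\leq\epsilon_1$ and bounds the error part by its operator norm, so it consumes precisely the stated hypothesis and arrives at the same $\sqrt{\epsilon_1+\epsilon_2}$. This is a modest but real improvement: it closes the small mismatch between the corollary's hypothesis and the hypothesis actually used in the paper's proof.
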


\begin{proof}
    If the non-Hermitian spectral localizer is singular then we are done and $(\bm{\lambda},\nu)$ is in $\Lambda_{\epsilon=0}(\bm{A},B)$.
    So we will assume that the non-Hermitian spectral localizer is invertible.
    We will use linearity to reduce to the case where $(\bm{\lambda},\nu) = \bm{0}$.
    If $\sum_{i=1}^d \left\lVert A_i^2\bm{\psi}\right\rVert + \left\lVert B^\dagger B \bm{\psi} \right\rVert \leq \epsilon_1$ for some unit $\bm{\psi}$ we will define repeat the proof of Lemma (\ref{lm:cliff_approx_2}), but instead defining
    \begin{equation*}
        \bm{\Psi} = \frac{1}{\sqrt{m}}\phi^{\oplus m}\oplus \bm{0}^{m} = \frac{1}{\sqrt{m}} \begin{bmatrix} \bm{\psi} \\ \vdots \\ \bm{\psi} \\ \bm{0}  \\\vdots \\ \bm{0} \end{bmatrix}.
    \end{equation*}
    with $m$ as in Equation \ref{eq:cliff_dim}.
    Then by Equation \ref{eq:ldagger_l_computation}, and $F$ as in Equation \ref{eq:f_term}, we have
    \begin{align*}
        \left\lVert \left(\left(\sum_{i=1}^d A_i^2\otimes I_{2m}\right) + C\right)\bm{\Psi}\right\rVert^2 &= \left\lVert\sum_{i=1}^d A_i^2\bm{\psi} + B^\dagger B \bm{\psi} \right\rVert^2 \\
        &\leq \left[\sum_{i=1}^d \left\lVert A_i^2\bm{\psi}\right\rVert + \left\lVert B^\dagger B \bm{\psi} \right\rVert\right]^2  \\
        &\leq \epsilon_1^2.
    \end{align*}
    The rest of the proof uses the techniques of the proof in Lemma (\ref{lm:cliff_approx_2}).
\end{proof}

\begin{corollary}\label{cor:cliff_approx_lev}
    Given $\bm{A}$ in $\Pi_{i=1}^{d+1}\mathcal{M}_n(\C)$ with each $A_i$ being Hermitian, $B $ in $\mathcal{M}_n(\C)$ non-Hermitian.
    If there is a unit state $\bm{\psi}$ such that
    \begin{align*}
        \left(\sum_{i=1}^d \left\lVert A_i\bm{\psi}-\lambda_i\bm{\psi} \right\rVert^2 \right)& + \left\lVert B^\dagger\bm{\psi}-\nu^*\bm{\psi} \right\rVert^2  \leq \epsilon_1 \\
        &\mbox{and} \\
        \left(\sum_{i\not=k}\left\lVert [A_i,A_k]\right\rVert \right)&+ \lVert F\rVert \leq \epsilon_2
    \end{align*}
    for all $i$, with $F$ as in Equation \ref{eq:f_term}; then $(\bm{\lambda},\nu)$ is in $\Lambda_{\epsilon}(\bm{A},B)$ with $\epsilon = \sqrt{\epsilon_1 + \epsilon_2}$
     or $(\bm{\lambda},\nu)$ is in $\Lambda_{\epsilon=0}(\bm{A},B)$. Here $\Lambda_{\epsilon}(\bm{A},B)$ can be either the radial or linear $\epsilon$-pseudospectrum.
\end{corollary}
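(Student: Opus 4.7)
The plan is to mirror the proof of Corollary~\ref{cor:cliff_approx_rev} almost verbatim, swapping the roles of the upper and lower block halves in the non-Hermitian spectral localizer so that the surviving piece of the $C$ term becomes $BB^\dagger$ rather than $B^\dagger B$. First, if $L_{(\bm{\lambda},\nu)}(\bm{A},B)$ is singular then $(\bm{\lambda},\nu) \in \dot{\Lambda}_{\epsilon=0}(\bm{A},B)$ and we are done; otherwise, by shifting $A_i \mapsto A_i - \lambda_i$ and $B \mapsto B - \nu$, reduce to the case $(\bm{\lambda},\nu) = (\bm{0},0)$, under which the hypothesis becomes $\sum_i \|A_i\bm{\psi}\|^2 + \|B^\dagger\bm{\psi}\|^2 \leq \epsilon_1$.

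The central modification is the choice of lifted test vector. Where Corollary~\ref{cor:cliff_approx_rev} uses $\bm{\Psi} = \frac{1}{\sqrt{m}}\bm{\psi}^{\oplus m} \oplus \bm{0}^m$ to isolate the $\begin{bmatrix} I_m & \\ & 0_m \end{bmatrix}$ block of $C$, I instead take
\begin{equation*}
\bm{\Psi} = \frac{1}{\sqrt{m}}\left(\bm{0}^m \oplus \bm{\psi}^{\oplus m}\right),
\end{equation*}
so $\bm{\Psi}$ is supported on the lower $m$ blocks and the surviving piece of $C\bm{\Psi}$ is precisely $BB^\dagger \otimes \begin{bmatrix} 0_m & \\ & I_m \end{bmatrix}$ acting on it. A direct block computation gives $\left((\sum_i A_i^2)\otimes I_{2m} + C\right)\bm{\Psi} = \bm{0}^m \oplus \tfrac{1}{\sqrt{m}}(\sum_i A_i^2\bm{\psi} + BB^\dagger\bm{\psi})^{\oplus m}$, whose norm squared is at most $(\sum_i \|A_i^2\bm{\psi}\| + \|BB^\dagger\bm{\psi}\|)^2 \leq \epsilon_1^2$, in exact parallel with the previous corollary but with $\|B^\dagger\bm{\psi}\|$ now controlling $\|BB^\dagger\bm{\psi}\|$ instead of $\|B\bm{\psi}\|$ controlling $\|B^\dagger B\bm{\psi}\|$.

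Substituting this into Equation~\ref{eq:ldagger_l_vector_ub} together with the hypothesis $\sum_{i<k}\|[A_i,A_k]\| + \|F\| \leq \epsilon_2$ yields $\|L^\dagger L \bm{\Psi}\| \leq \epsilon_1 + \epsilon_2$, and Equation~\ref{eq:radial_norm_ub} then gives $\dot{\mu}^{\text{C}}_{(\bm{0},0)}(\bm{A},B) \leq \sqrt{\epsilon_1 + \epsilon_2}$, placing $(\bm{\lambda},\nu)$ in the radial $\epsilon$-pseudospectrum.

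The main obstacle is the statement's claim that the same bound places $(\bm{\lambda},\nu)$ in \emph{either} the radial or the linear pseudospectrum. For the radial case the argument above is immediate; for the linear case one would like an inequality of the form $\min|\textrm{Re}(\textup{Spec}(L))| \leq \sigma_{\min}(L)$, but this fails in general for non-normal $L$, so an auxiliary argument is required. The natural route is via the Hermitian part $L_H = (L+L^\dagger)/2$: from $|\bm{\Psi}^\dagger L_H \bm{\Psi}| = |\textrm{Re}(\bm{\Psi}^\dagger L\bm{\Psi})| \leq \|L\bm{\Psi}\|$ one obtains a small eigenvalue of $L_H$, and one must then relate this to $\min|\textrm{Re}(\textup{Spec}(L))|$, most likely by invoking the comparisons established later in Section~\ref{sec:rel_spec_and_quad_gap}.
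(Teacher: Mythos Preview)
Your argument for the radial $\epsilon$-pseudospectrum is exactly the paper's approach: reduce to $(\bm{\lambda},\nu)=\bm{0}$, take the test vector $\bm{\Psi}=\tfrac{1}{\sqrt{m}}\,\bm{0}^m\oplus\bm{\psi}^{\oplus m}$ supported on the lower $m$ blocks so that the $C$ term contributes $BB^\dagger$, bound $\bigl\lVert\bigl(\sum_i A_i^2\otimes I_{2m}+C\bigr)\bm{\Psi}\bigr\rVert$ by $\epsilon_1$, feed this into Equation~\ref{eq:ldagger_l_vector_ub} and then Equation~\ref{eq:radial_norm_ub}. This is line-for-line the paper's proof.

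Your concern about the linear pseudospectrum is legitimate and well spotted: the paper's own proof does \emph{not} establish the linear case either. The paper simply writes ``the rest of the proof uses the techniques of the proof in Lemma~\ref{lm:cliff_approx_2},'' but Lemma~\ref{lm:cliff_approx_2} and the machinery of Equation~\ref{eq:radial_norm_ub} only yield a bound on $\dot{\mu}^{\text{C}}$, not $\bar{\mu}^{\text{C}}$. So the clause ``$\Lambda_\epsilon(\bm{A},B)$ can be either the radial or linear $\epsilon$-pseudospectrum'' in the statement is not supported by the proof the paper gives, and your suggested route through the Hermitian part $L_H$ or the estimates of Section~\ref{sec:rel_spec_and_quad_gap} would in any event introduce extra error terms (depending on $\Delta(L)$ and $\lVert B-B^\dagger\rVert$) that do not fit the clean bound $\epsilon=\sqrt{\epsilon_1+\epsilon_2}$. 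In short: you have matched what the paper actually proves, and the discrepancy you flag is a gap in the paper's statement rather than in your argument.
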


\begin{proof}
    If the non-Hermitian spectral localizer is singular then we are done and $(\bm{\lambda},\nu)$ is in $\Lambda_{\epsilon=0}(\bm{A},B)$.
    So we will assume that the non-Hermitian spectral localizer is invertible.
    We will use linearity to reduce to the case where $(\bm{\lambda},\nu) = \bm{0}$.
    If $\sum_{i=1}^d \left\lVert A_i^2\bm{\psi}\right\rVert + \left\lVert B B^\dagger \bm{\psi} \right\rVert \leq \epsilon_1$ for some unit $\bm{\psi}$ we will define repeat the proof of Lemma (\ref{lm:cliff_approx_2}), but instead defining
    \begin{equation*}
        \bm{\Psi} = \frac{1}{\sqrt{m}}\bm{0}^{m}\oplus \phi^{\oplus m} = \frac{1}{\sqrt{m}} \begin{bmatrix}  \bm{0}  \\\vdots \\ \bm{0} \\ \bm{\psi} \\ \vdots \\ \bm{\psi} \end{bmatrix}.
    \end{equation*}
    with $m$ as in Equation \ref{eq:cliff_dim}.
    Then by Equation \ref{eq:ldagger_l_computation}, and $F$ as in Equation \ref{eq:f_term}, we have
    \begin{align*}
        \left\lVert \left(\left(\sum_{i=1}^d A_i^2\otimes I_{2m}\right) + C\right)\bm{\Psi}\right\rVert^2 &= \left\lVert\sum_{i=1}^d A_i^2\bm{\psi} + B B ^\dagger \bm{\psi} \right\rVert^2 \\
        &\leq \left[\sum_{i=1}^d \left\lVert A_i^2\bm{\psi}\right\rVert + \left\lVert B B^\dagger \bm{\psi} \right\rVert\right]^2  \\
        &\leq \epsilon_1^2
    \end{align*}
    The rest of the proof uses the techniques of the proof in Lemma (\ref{lm:cliff_approx_2}).
\end{proof}

Note that in the situation where $BB^\dagger = B^\dagger B$, we can go through the proof of Lemma \ref{lm:cliff_approx_2} and see that we can drop the term $\left\lVert B^\dagger\bm{\psi}-\overline{\nu}\bm{\psi} \right\rVert$ in the assumptions. A similar result and proof for the Clifford linear $\epsilon$-pseudospectrum is not clear, we leave it as a conjecture.
\begin{conjecture}
    If $(\bm{\lambda},\nu)$ in $\bar{\Lambda}_{\epsilon}^{\text{C}}\left(\bm{A},B\right)$ then there is unit state $\bm{\psi}$ so that
    \begin{equation}
    X_i\bm{\psi} \approx x_i\bm{\psi} \mbox{ , }\hspace{0.1in}
    H\bm{\psi} \approx E\bm{\psi}.
\end{equation}
\end{conjecture}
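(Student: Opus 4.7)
The plan is to reduce the linear-gap statement to the radial-gap result already proved in Proposition \ref{prop:lm1.2_similar}, by shifting the probe energy along the imaginary axis. Suppose $(\bm{\lambda},\nu)\in\bar{\Lambda}_\epsilon^{\text{C}}(\bm{A},B)$, so that there is an eigenvalue $\mu$ of $L=L_{(\bm{\lambda},\nu)}(\bm{A},B)$ with $|\text{Re}(\mu)|\leq\epsilon$. Pick a unit right eigenvector $\bm{\Psi}$ of $L$, and set $\tilde\mu=i\,\text{Im}(\mu)$. Then $(L-\tilde\mu I)\bm{\Psi}=\text{Re}(\mu)\bm{\Psi}$, so $\|(L-\tilde\mu I)\bm{\Psi}\|\leq\epsilon$.

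The key algebraic observation is that subtracting the purely imaginary scalar $\tilde\mu$ from $L$ is exactly a shift of the probe energy. Decomposing $I_{2m}=P_+-P_-$ where $P_\pm$ are the two diagonal blocks appearing in the definition of $L$, and using $\overline{\tilde\mu}=-\tilde\mu$, a direct computation yields $L-\tilde\mu I = L_{(\bm{\lambda},\,\nu+\tilde\mu)}(\bm{A},B)$: the upper block becomes $(B-(\nu+\tilde\mu))\otimes P_+$ and the lower block becomes $(B-(\nu+\tilde\mu))^\dagger\otimes P_-$. Consequently $\bm{\Psi}$ certifies $\sigma_{\min}(L_{(\bm{\lambda},\nu+\tilde\mu)}(\bm{A},B))\leq\epsilon$, so $(\bm{\lambda},\nu+\tilde\mu)\in\dot{\Lambda}_\epsilon^{\text{C}}(\bm{A},B)$.

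Applying Proposition \ref{prop:lm1.2_similar} at the shifted probe $(\bm{\lambda},\nu+\tilde\mu)$ then produces a unit state $\bm{\psi}\in\C^n$ with explicit error bounds governed by $\epsilon$ together with the commutator and $F$-norm budget from Equation (\ref{eq:f_term}), satisfying either $X_i\bm{\psi}\approx x_i\bm{\psi}$ and $H\bm{\psi}\approx(E+i\,\text{Im}(\mu))\bm{\psi}$, or the analogous left-eigenvector statement for $H^\dagger$. This is the natural conclusion that the shift-reduction argument produces, and it already exhibits a state which is approximately localized in position and in real-part energy at the value prescribed by the probe.

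The hard part is that the conjecture as stated asks for $H\bm{\psi}\approx E\bm{\psi}$ rather than $H\bm{\psi}\approx(E+i\,\text{Im}(\mu))\bm{\psi}$. Closing this gap requires an a priori bound on $|\text{Im}(\mu)|$, which is not controlled by the linear gap in general: the Clifford linear gap only measures the distance from the imaginary axis of the closest eigenvalue of $L$. The most promising route is to impose a line-gap hypothesis restricting the spectrum of $L$ (or of $B$) to a narrow vertical strip, or to require $\|B-B^\dagger\|$ to be small, so that all eigenvalues $\mu$ with small real part automatically have small imaginary part; under either hypothesis, the shift $\tilde\mu$ above is small and the conjecture follows from the reduction. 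Identifying the sharpest such structural hypothesis, and verifying that it captures the line-gapped physical systems the paper is targeting, is the main obstacle a complete proof would need to overcome.
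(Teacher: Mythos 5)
First, note that the paper itself offers no proof of this statement: it is explicitly left as a conjecture, with the preceding sentence stating that ``a similar result and proof for the Clifford linear $\epsilon$-pseudospectrum is not clear.'' So there is no argument in the paper to compare yours against, and the honest framing of your proposal as a reduction plus an unresolved obstacle is appropriate. Your reduction itself is correct and, I think, genuinely useful. The identity $L_{(\bm{\lambda},\nu)}(\bm{A},B)-i\,\mathrm{Im}(\mu)I = L_{(\bm{\lambda},\nu+i\,\mathrm{Im}(\mu))}(\bm{A},B)$ does hold: writing the two diagonal blocks as $\mathrm{diag}(I_m,0_m)$ and $\mathrm{diag}(0_m,-I_m)$, the purely imaginary shift picks up a sign flip under the adjoint ($\overline{\tilde\mu}=-\tilde\mu$) exactly so that the two blocks together contribute $-\tilde\mu I_{2m}$. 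Hence a unit eigenvector of $L$ with eigenvalue $\mu$, $|\mathrm{Re}(\mu)|\le\epsilon$, certifies $(\bm{\lambda},\nu+i\,\mathrm{Im}(\mu))\in\dot{\Lambda}^{\text{C}}_{\epsilon}(\bm{A},B)$, and Proposition \ref{prop:lm1.2_similar} applies at the shifted probe. This cleanly shows that the linear pseudospectrum at $(\bm{\lambda},\nu)$ is contained in the union over $t\in\R$ of the radial pseudospectra at $(\bm{\lambda},\nu+it)$, which is more than the paper establishes.

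You have also correctly identified the remaining gap, but I would point out that the a priori bound you say is missing is in fact available: Bendixson's inequality gives $|\mathrm{Im}(\mu)|\le\frac{1}{2}\lVert L-L^{\dagger}\rVert$ for every eigenvalue $\mu$ of $L$, and the paper itself computes $\lVert L-L^{\dagger}\rVert=\lVert(B-\nu)-(B-\nu)^{\dagger}\rVert$. Feeding this into your reduction yields an unconditional statement: there is a unit $\bm{\psi}$ with $\lVert B\bm{\psi}-\nu\bm{\psi}\rVert$ (or the adjoint version) bounded by the Proposition \ref{prop:lm1.2_similar} error at the shifted probe plus an additive term $\frac{1}{2}\lVert(B-\nu)-(B-\nu)^{\dagger}\rVert$. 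Whether this ``proves'' the conjecture depends entirely on how one quantifies $\approx$: the extra term is small precisely in the nearly-Hermitian regime you propose as a hypothesis, and does not shrink with $\epsilon$, so for strongly non-Hermitian $B$ the genuinely interesting case remains open --- consistent with the paper leaving the statement as a conjecture. Making the Bendixson step explicit would strengthen your writeup from ``reduction plus open obstacle'' to ``complete proof under a quantified near-Hermiticity assumption, with the general case open.''
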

The lack of a proof does not hinder us from proceeding with the generalization of the theory in \cite{cerjan_loring_vides_2023}.
The fact that for a normal matrix $A$, 
\begin{equation*}
    \sigma_{\min}(A) = \min|\textup{Spec}(A)|
\end{equation*}
makes establishing a relation between the spectral localizer gap and the quadratic gap a reasonable task with not too much work.
With non-normal matrices $A$ we can have that
\begin{equation*}
    \sigma_{\min}(A) < \min|\textup{Spec}(A)|.
\end{equation*}
In any case we are working with a non-normal matrix, and with luck the matrix can be nearly Hermitian or nearly normal.
However we do not work with luck and thus we need a way to characterize what we mean by nearly normal.
We will make use of definition and terminology presented in \cite{ruhe} in order to characterize deviation from normality.
In particular we have,
\begin{definition}
    The deviation from normality of a square matrix $A$ is defined as 
    \begin{equation*}
        \Delta_2(A) = \Delta(A) = \inf\{\vert\vert U\vert\vert : A = Q(D+U)Q^\dagger \mbox{ a Schur Decomposition}\}.
    \end{equation*}
\end{definition}

This is in some sense equivalent to considering the quantity $\vert\vert AA^\dagger - A^\dagger A\vert\vert$ and formal statements comparing the two quantities can be found in Theorem 2 of \cite{elsner_paardekooper}.
Also note that when the non-Hermitian spectral localizer does not deviate too much from normal then we get some decent spectral variation properties.

\section{Quadratic composite operator} \label{sec:nh_quadratic_composite}
In \cite{cerjan_loring_vides_2023} and \cite{loring}, it was shown that when it comes to finding unit states that are approximate joint eigenvectors to the observables/operators, both the spectral localizer's $\epsilon$-pseudospectrum and the quadratic's $\epsilon$-pseudospectrum provide good methods in the setting when all observables are Hermitian.
In Section \ref{sec:nh_clifford_composite_operator} we have a non-Hermitian matrix/Hamiltonian in the construction of the non-Hermitian spectral localizer in order to generalize the spectral localizer $\epsilon$-pseudospectrum theory.
The appeal of the spectral localizer $\epsilon$-pseudospectrum in the Hermitian case and non-Hermitian spectral localizer $\epsilon$-pseudospectrum in the non-Hermitian case is that it also provides information about a system's topology see \cite{cerjan_loring_pho_2022},\cite{cerjan_loring_2022},\cite{kahlil_cerjan_loring} for the Hermitian setting and \cite{cerjan_koekebier_Schulz_Baldes} for the non-Hermitian setting.
The quadratic $\epsilon$-pseudospectrum seems to be more helpful with finding localized states as described, with some added benefits.
The construction of the quadratic composite operator is inspired by Cerjan, Loring and Vides in \cite{cerjan_loring_vides_2023}, where it is defined for systems described by Hermitian Hamiltonians.
The quadratic composite operator first appears implicitly in \cite{loring}.
Because of the assumption taken in these prior studies that the constituent operators in the construction are Hermitian and the `probe site', i.e., the choice of $(\bm{x},E)$ is real, it is guaranteed that the corresponding $\epsilon$-pseudospectrum is real i.e. $\bm{\lambda}$ in $\R^{n+1}$.

One key idea is that for $A$ Hermitian, $A\bm{\Psi} = 0$ if and only if $A^2\bm{\Psi} = 0$.
Thus, it is reasonable to consider a similar procedure with the non-Hermitian spectral localizer in \cite{cerjan_koekebier_Schulz_Baldes}.
It's explicit and quick to prove that $A^{\dagger}A\bm{\Psi}=0$ if and only if $A\bm{\Psi}=0$.
We know that the singular values of the $L_{(\bm{\lambda},\nu)} \left( \bm{A}, B \right)$ are the square roots of the eigenvalues of 
\begin{equation*}
\left(L_{(\bm{\lambda},\nu)} \left( \bm{A}, B \right)\right)^{\dagger}\left(L_{(\bm{\lambda},\nu)} \left( \bm{A}, B \right)\right).
\end{equation*}
In particular this is useful since a small deviation from normality of the non-Hermitian spectral localizer implies
\begin{equation*}
    \dot{\mu}_{(\bm{\lambda},\nu)}^\text{C}(\bm{A},B) \approx \min\left|\textup{spec}\left(L_{(\bm{\lambda},\nu)} \left( \bm{A}, B \right)\right)\right|
\end{equation*}
and small deviation from being Hermitian of the non-Hermitian spectral localizer implies
\begin{equation*}
    \bar{\mu}_{(\bm{\lambda},\nu)}^\text{C}(\bm{A},B) \approx \min\left|\textup{spec}\left(L_{(\bm{\lambda},\nu)} \left( \bm{A}, B \right)\right)\right|
\end{equation*}
with formal estimates proven in Section \ref{sec:rel_spec_and_quad_gap}.

Thus we have that we can approximate the Clifford linear and radial gaps by the smallest singular value of the non-Hermitian spectral localizer.
By Equation \ref{eq:ldagger_l_computation} we have a way of defining a quadratic composite operator for the non-Hermitian spectral localizer similar to the Hermitian setting in \cite{cerjan_loring_vides_2023}.
We do this by ignoring $F$ and the commutator terms in the computation of Equation \ref{eq:ldagger_l_computation}.
Specifically we could use
\begin{equation*}
    \left(\sum_{i=1}^{d}(A_i-\lambda_i)^2 \otimes I_{2m} \right) + (B-\nu)^{\dagger}(B-\nu) \otimes \begin{bmatrix} I_m & \\ & 0_m \end{bmatrix} + (B-\nu)(B-\nu)^{\dagger} \otimes \begin{bmatrix} 0_m & \\ & I_m \end{bmatrix}.
\end{equation*}
Since the spectrum is determined by the diagonal block entries we could take the two distinct diagonal blocks.
Namely,
\begin{equation*}
    \sum_{i=1}^{d}(A_i-\lambda_i)^2 \otimes I_2 + (B-\nu)^{\dagger}(B-\nu) \otimes \begin{bmatrix} 1 & \\ & 0 \end{bmatrix} + (B-\nu)(B-\nu)^{\dagger}\otimes \begin{bmatrix} 0 & \\ & 1 \end{bmatrix}.
\end{equation*}

Instead of taking this as the definition of a quadratic composite operator, we will generalize this construction. We will allow various non-normal operators in the construction besides $B-\nu$. The proof that is to come about approximately localized eigenvectors will follow with this more general construction.

\begin{definition}
    Suppose that we have tuples of matrices $\bm{A}$ in $\Pi_{i=1}^{d_1} M_n(\C)$, $\bm{B}$ in $\Pi_{i=1}^{d_2} M_n(\C)$ with $A_i$ Hermitian, $B_j$ non-Hermitian, and \emph{probe site} $(\bm{\lambda},\bm{\nu})$ in $\R^{d_1}\oplus \C^{d_2}$ then we define the \emph{quadratic composite operator} as
    \begin{align*}
        Q_{(\bm{\lambda},\bm{\nu})}\left(\bm{A},\bm{B}\right) &= RQ_{(\bm{\lambda},\bm{\nu})}\left(\bm{A},\bm{B}\right)\otimes\begin{bmatrix} 1 & \\ & 0 \end{bmatrix} + LQ_{(\bm{\lambda},\bm{\nu})}\left(\bm{A},\bm{B}\right)\otimes\begin{bmatrix} 0 & \\ & 1 \end{bmatrix}.
    \end{align*}
    With the right and left quadratic composite operators defined as
    \begin{align*}
        RQ_{(\bm{\lambda},\bm{\nu})}\left(\bm{A},\bm{B}\right) &= \left(\sum_{i=1}^{d_1} (A_i-\lambda_i)^2 \right) + \left(\sum_{j=1}^{d_2} (B_j-\nu_j)^\dagger (B_j-\nu_j) \right) \\
        LQ_{(\bm{\lambda},\bm{\nu})}\left(\bm{A},\bm{B}\right) &= \left(\sum_{i=1}^{d_1} (A_i-\lambda_i)^2 \right) + \left(\sum_{j=1}^{d_2} (B_j-\nu_j) (B_j-\nu_j)^\dagger\right)
    \end{align*}
\end{definition}
Immediately from the definition we get the following Corollaries.
\begin{corollary} 
    Suppose that we have tuples of matrices $\bm{A} \in \Pi_{i=1}^{d_1}M_n(\C), \bm{B}\in \Pi_{i=1}^{d_2}M_n(\C)$ with $A_i$ Hermitian, $B_j$ non-Hermitian, and $(\bm{\lambda},\bm{\nu})$ in $\R^{d_1}\oplus \C^{d_2}$, then
    \begin{equation*}
        \sigma_{\min}\left(Q_{(\bm{\lambda},\bm{\nu})}\left(\bm{A},\bm{B}\right)\right) = \min\left\{\sigma_{\min}\left(RQ_{(\bm{\lambda},\bm{\nu})}\left(\bm{A},\bm{B}\right)\right),\sigma_{\min}\left(LQ_{(\bm{\lambda},\bm{\nu})}\left(\bm{A},\bm{B}\right)\right)\right\}
    \end{equation*}
\end{corollary}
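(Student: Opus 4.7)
The plan is to show that, up to a permutation of the canonical basis, $Q_{(\bm{\lambda},\bm{\nu})}(\bm{A},\bm{B})$ is the block-diagonal matrix $\mathrm{diag}(RQ,LQ)$, at which point the singular value identity drops out immediately from the fact that the singular values of a block-diagonal matrix are the union (as multisets) of the singular values of the blocks.

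First I would unpack the two Kronecker products. The $2\times 2$ matrices $\bigl[\begin{smallmatrix}1 & 0 \\ 0 & 0\end{smallmatrix}\bigr]$ and $\bigl[\begin{smallmatrix}0 & 0 \\ 0 & 1\end{smallmatrix}\bigr]$ are the complementary orthogonal projections onto the lines $\C e_1$ and $\C e_2$ inside $\C^2$. Using the standard convention $(X\otimes Y)_{(i,k),(j,l)} = X_{ij}\,Y_{kl}$ on $\C^n\otimes\C^2$, one checks that the $(i,k),(j,l)$ entry of $Q_{(\bm{\lambda},\bm{\nu})}(\bm{A},\bm{B})$ equals $RQ_{ij}$ when $k=l=1$, equals $LQ_{ij}$ when $k=l=2$, and is zero otherwise. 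In particular the subspaces $\C^n\otimes e_1$ and $\C^n\otimes e_2$ are both invariant, and $Q$ acts as $RQ$ on the first and as $LQ$ on the second with no coupling between them.

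Next I would exhibit the permutation. Let $P$ be the permutation matrix that reorders the canonical basis so that all basis vectors with second index $k=1$ precede those with $k=2$. Then the preceding paragraph gives
\begin{equation*}
P^{\dagger}\,Q_{(\bm{\lambda},\bm{\nu})}(\bm{A},\bm{B})\,P \;=\; \begin{bmatrix} RQ_{(\bm{\lambda},\bm{\nu})}(\bm{A},\bm{B}) & 0 \\ 0 & LQ_{(\bm{\lambda},\bm{\nu})}(\bm{A},\bm{B}) \end{bmatrix}.
\end{equation*}
Since $P$ is unitary, singular values are preserved, and the multiset of singular values of a block-diagonal matrix is the union of the multisets of singular values of its diagonal blocks. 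Taking minima yields the claimed identity.

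There is no substantive obstacle: the entire content is a careful but routine bookkeeping of the Kronecker product convention. The only decision is whether to present this via the explicit entry-wise computation above or by invoking the invariant-subspace argument directly; I would do both briefly to avoid any ambiguity about conventions.
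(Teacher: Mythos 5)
Your proposal is correct and matches the paper's (implicit) reasoning: the paper states the corollary follows immediately from the definition, precisely because $Q_{(\bm{\lambda},\bm{\nu})}(\bm{A},\bm{B})$ is, up to a permutation of the tensor factors, the block-diagonal matrix with blocks $RQ$ and $LQ$. Your explicit bookkeeping of the Kronecker convention and the permutation matrix simply spells out what the paper leaves as immediate.
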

Note that when all the $B_i$ are normal, the spectrum of $LQ_{(\bm{\lambda},\bm{\nu})}\left(\bm{A},\bm{B}\right)$ is equal to the spectrum of $RQ_{(\bm{\lambda},\bm{\nu})}\left(\bm{A},\bm{B}\right)$ which gives us the following corollary.
\begin{corollary}
    Suppose that we have tuples of matrices $\bm{A} \in \Pi_{i=1}^{d_1}M_n(\C), \bm{B}\in \Pi_{i=1}^{d_2}M_n(\C)$ with $A_i$ Hermitian, $B_j$ normal. If $(\bm{\lambda},\bm{\nu})$ is in $\R^{d_1}\oplus \C^{d_2}$, then
    \begin{equation*}
        \sigma_{\min}\left(Q_{(\bm{\lambda},\bm{\nu})}\left(\bm{A},\bm{B}\right)\right) = \sigma_{\min}\left(RQ_{(\bm{\lambda},\bm{\nu})}\left(\bm{A},\bm{B}\right)\right) = \sigma_{\min}\left(LQ_{(\bm{\lambda},\bm{\nu})}\left(\bm{A},\bm{B}\right)\right)
    \end{equation*}
\end{corollary}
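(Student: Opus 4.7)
The plan is to reduce the claim to a direct algebraic identity between the right and left quadratic composite operators under the normality hypothesis, and then invoke the preceding corollary. The key observation is that translating a normal operator by a scalar preserves normality: if $B_j$ is normal, then $B_j - \nu_j$ is also normal for any $\nu_j \in \C$, since
\begin{equation*}
    (B_j-\nu_j)^\dagger(B_j-\nu_j) - (B_j-\nu_j)(B_j-\nu_j)^\dagger = B_j^\dagger B_j - B_j B_j^\dagger = 0.
\end{equation*}

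First I would note that this identity, applied term by term and summed over $j$, shows that the $B$-dependent sums in the definitions of $RQ_{(\bm{\lambda},\bm{\nu})}(\bm{A},\bm{B})$ and $LQ_{(\bm{\lambda},\bm{\nu})}(\bm{A},\bm{B})$ coincide. Since the $A$-dependent term $\sum_{i=1}^{d_1}(A_i-\lambda_i)^2$ is common to both, I conclude that $RQ_{(\bm{\lambda},\bm{\nu})}(\bm{A},\bm{B}) = LQ_{(\bm{\lambda},\bm{\nu})}(\bm{A},\bm{B})$ as matrices, not merely as spectra. In particular $\sigma_{\min}(RQ_{(\bm{\lambda},\bm{\nu})}(\bm{A},\bm{B})) = \sigma_{\min}(LQ_{(\bm{\lambda},\bm{\nu})}(\bm{A},\bm{B}))$. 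Then the preceding corollary, which states
\begin{equation*}
    \sigma_{\min}\left(Q_{(\bm{\lambda},\bm{\nu})}(\bm{A},\bm{B})\right) = \min\left\{\sigma_{\min}(RQ_{(\bm{\lambda},\bm{\nu})}(\bm{A},\bm{B})),\ \sigma_{\min}(LQ_{(\bm{\lambda},\bm{\nu})}(\bm{A},\bm{B}))\right\},
\end{equation*}
immediately yields the triple equality.

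There is no real obstacle here; the only thing worth flagging is that the statement as written refers to equality of spectra of $LQ$ and $RQ$, but the argument actually gives operator equality, from which spectral and singular-value coincidence follow trivially. Hermiticity of $RQ$ and $LQ$ is not even needed for this step, although it is a useful side remark that both are positive semidefinite Hermitian matrices so that $\sigma_{\min}$ equals $\min|\textup{Spec}|$.
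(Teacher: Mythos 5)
Your proof is correct and follows essentially the same route as the paper, which justifies the corollary by observing that normality of each $B_j$ forces the spectra of $RQ$ and $LQ$ to coincide and then applying the preceding corollary on $\sigma_{\min}(Q)=\min\{\sigma_{\min}(RQ),\sigma_{\min}(LQ)\}$. Your version is marginally sharper in that you verify the operator identity $RQ_{(\bm{\lambda},\bm{\nu})}(\bm{A},\bm{B})=LQ_{(\bm{\lambda},\bm{\nu})}(\bm{A},\bm{B})$ directly from the invariance of normality under scalar shifts, rather than only asserting equality of spectra.
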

Reflecting on the main point of Lemma \ref{prop:lm1.2_similar} we expect that the eigenvector corresponding with the smallest eigenvalue of the quadratic operator will pick up either an approximate right or left eigenvector of $A_i$ and $B_j$.
Using the definition of the quadratic composite operator above we have generalized the quadratic composite operator in \cite{cerjan_loring_vides_2023}. In fact the following proposition generalizes Proposition II.1. in \cite{cerjan_loring_vides_2023}. 

Recall that a unit state $\bm{\psi}$ in a Hilbert space determines a probability distribution with respect to a normal matrix $M$.
In particular for Hermitian matrices $M$ it is well established that we need only a complete basis for the eigenspace of $M$ to define a probablity distribution and as a result to define expectation and variance, see Appendix A of \cite{hirvensalo_2001}. While we do not need to use the probability distribution directly, we do need the corresponding expectation and square variance for both Hermitian and non-Hermitian matrices.
This allows us to talk loosely about approximate localization of a state in position, energy, momentum, frequency, etc.
The mathematical notation that we will use for \emph{expectation} and \emph{variance} are as follows,
\begin{align*}
    \textup{E}_{\bm{\psi}}[B] &= \langle B\bm{\psi}, \bm{\psi}\rangle \\
    \textup{V}^2_{\bm{\psi}}[B] &= \langle B^\dagger B\bm{\psi}, \bm{\psi} \rangle -\left\vert \langle B\bm{\psi}, \bm{\psi} \rangle \right\vert^2.
\end{align*}
Indeed when we have that $B=A$ is Hermitian then this reduces to the usual expectation and variance of a state $\bm{\psi}$ with respect to the appropriate observable/matrix.
With that in place we proceed to prove some results that link minimization of expectation and variance of matrices/observables to the quadratic composite operator.

\begin{prop} \label{prop:nh_rquad_co_equiv_quantities}
    Suppose that we have tuples of matrices $\bm{A} \in \Pi_{i=1}^{d_1}M_n(\C), \bm{B}\in \Pi_{i=1}^{d_2}M_n(\C)$ with $A_i$ Hermitian, $B_j$ non-Hermitian, and $(\bm{\lambda},\bm{\nu})$ in $\R^{d_1}\oplus \C^{d_2}$, the following are equal:
    \begin{enumerate}
        \item The quantity
        \begin{equation*}
            \min_{\vert\vert\bm{\psi}\vert\vert=1} \sqrt{\sum_{i=1}^{d_1} \vert\vert A_i\bm{\psi} - \lambda_i \bm{\psi} \vert\vert^2 + \sum_{j=1}^{d_2}\vert\vert B_j\bm{\psi} - \nu_{j} \bm{\psi} \vert\vert^2}.
        \end{equation*} \label{eq:unit_vect}

        \item The quantity
        \begin{equation*}
            \min_{\vert\vert\bm{\psi}\vert\vert=1} \sqrt{\sum_{i=1}^{d_1} \textup{V}^2_{\bm{\psi}}[A_i] + \left(\textup{E}_{\bm{\psi}}[A_i] -\lambda_i \right)^2 + \sum_{j=1}^{d_2}\textup{V}^2_{\bm{\psi}}[B_j] + \left\vert\textup{E}_{\bm{\psi}}[B_j] -\nu_j \right\vert^2},
        \end{equation*} 
        if $B_i$ is normal for all $1\leq i\leq d_2$.\label{eq:expectation_variance}

        \item The smallest singular value of
        \begin{equation*}
            RM_{(\bm{\lambda},\bm{\nu})} \left(\bm{A},\bm{B}\right) = \begin{bmatrix} A_1-\lambda_1 \\ \vdots \\ A_{d_1} - \lambda_{d_1} \\ B_1-\nu_{1} \\ \vdots \\ B_{d_2}-\nu_{d_2}\end{bmatrix} .
        \end{equation*}\label{eq:smin_M}

        \item The square root for the smallest eigenvalue of $RQ_{(\bm{\lambda},\bm{\nu})}\left(\bm{A},\bm{B}\right)$.\label{eq:eig_val_Q}
    \end{enumerate}
\end{prop}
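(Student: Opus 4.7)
The plan is to establish two chains of equalities separately: $(\ref{eq:unit_vect})=(\ref{eq:smin_M})=(\ref{eq:eig_val_Q})$ by a direct computation with the stacked matrix $RM_{(\bm{\lambda},\bm{\nu})}(\bm{A},\bm{B})$, and $(\ref{eq:unit_vect})=(\ref{eq:expectation_variance})$ by expanding each squared norm in terms of expectations and variances.

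For the first chain, the block-column structure of $RM_{(\bm{\lambda},\bm{\nu})}(\bm{A},\bm{B})$ gives
\begin{equation*}
\left\lVert RM_{(\bm{\lambda},\bm{\nu})}(\bm{A},\bm{B})\bm{\psi}\right\rVert^2 = \sum_{i=1}^{d_1}\left\lVert(A_i-\lambda_i)\bm{\psi}\right\rVert^2 + \sum_{j=1}^{d_2}\left\lVert(B_j-\nu_j)\bm{\psi}\right\rVert^2
\end{equation*}
immediately. Taking the square root and minimizing over unit $\bm{\psi}$, together with the variational characterization $\sigma_{\min}(M)=\min_{\left\lVert\bm{\psi}\right\rVert=1}\left\lVert M\bm{\psi}\right\rVert$, yields $(\ref{eq:unit_vect})=(\ref{eq:smin_M})$. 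For $(\ref{eq:smin_M})=(\ref{eq:eig_val_Q})$, I would then multiply out
\begin{equation*}
RM_{(\bm{\lambda},\bm{\nu})}(\bm{A},\bm{B})^\dagger RM_{(\bm{\lambda},\bm{\nu})}(\bm{A},\bm{B}) = \sum_{i=1}^{d_1}(A_i-\lambda_i)^2 + \sum_{j=1}^{d_2}(B_j-\nu_j)^\dagger(B_j-\nu_j) = RQ_{(\bm{\lambda},\bm{\nu})}(\bm{A},\bm{B}).
\end{equation*}
Since $RQ_{(\bm{\lambda},\bm{\nu})}(\bm{A},\bm{B})$ is Hermitian and positive semidefinite, its smallest singular value coincides with its smallest eigenvalue, and the general identity $\sigma_{\min}(RM)^2=\lambda_{\min}(RM^\dagger RM)$ closes the chain after taking a square root.

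For $(\ref{eq:unit_vect})=(\ref{eq:expectation_variance})$, I would prove the pointwise identities
\begin{equation*}
\left\lVert(A_i-\lambda_i)\bm{\psi}\right\rVert^2 = \textup{V}^2_{\bm{\psi}}[A_i] + \left(\textup{E}_{\bm{\psi}}[A_i]-\lambda_i\right)^2, \qquad \left\lVert(B_j-\nu_j)\bm{\psi}\right\rVert^2 = \textup{V}^2_{\bm{\psi}}[B_j] + \left\lvert\textup{E}_{\bm{\psi}}[B_j]-\nu_j\right\rvert^2
\end{equation*}
for every unit $\bm{\psi}$. Both follow by expanding $\left\lVert(M-\zeta)\bm{\psi}\right\rVert^2=\langle(M-\zeta)^\dagger(M-\zeta)\bm{\psi},\bm{\psi}\rangle$ and using $\langle M^\dagger\bm{\psi},\bm{\psi}\rangle=\overline{\langle M\bm{\psi},\bm{\psi}\rangle}$ to collect the cross-terms as $-2\textrm{Re}\bigl(\bar\zeta\,\textup{E}_{\bm{\psi}}[M]\bigr)+\lvert\zeta\rvert^2$, which reproduces $\textup{V}^2_{\bm{\psi}}[M]+\lvert\textup{E}_{\bm{\psi}}[M]-\zeta\rvert^2$ after adding and subtracting $\lvert\textup{E}_{\bm{\psi}}[M]\rvert^2$. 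Summing over $i$ and $j$, then taking a square root and minimizing over unit $\bm{\psi}$, finishes the chain.

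The main obstacle is conceptual rather than computational: the algebraic identity for $\left\lVert(B_j-\nu_j)\bm{\psi}\right\rVert^2$ does not actually require $B_j$ to be normal, so I would flag that the hypothesis ``$B_i$ normal'' in item (\ref{eq:expectation_variance}) is invoked so that $\textup{V}^2_{\bm{\psi}}[B_j]$ genuinely represents the variance of a probability distribution furnished by the spectral theorem, matching the probabilistic interpretation advertised immediately before the proposition.
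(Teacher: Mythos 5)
Your proposal is correct and follows essentially the same route as the paper: the variational characterization $\sigma_{\min}(M)=\min_{\lVert\bm{\psi}\rVert=1}\lVert M\bm{\psi}\rVert$ for (1)$=$(3), the computation $RM^\dagger RM = RQ$ for (3)$=$(4), and the pointwise expansion of $\lVert(B_j-\nu_j)\bm{\psi}\rVert^2$ into variance plus $\lvert\textup{E}_{\bm{\psi}}[B_j]-\nu_j\rvert^2$ for (1)$=$(2). Your observation that the algebraic identity in (2) does not actually use normality of $B_j$ --- the hypothesis only serves the probabilistic interpretation --- is accurate and consistent with the paper's own computation, which likewise never invokes normality.
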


\begin{proof}
To show (\ref{eq:unit_vect}) $\Leftrightarrow$ (\ref{eq:expectation_variance}) we first notice that $A$ is Hermitian and $\bm{\lambda} \in \R^{d_1}$, which implies that$(A-\lambda_i)^2 = (A-\lambda_i)^\dagger (A-\lambda_i)$. With this, it is enough to prove the statement for $RQ_{(\vec{\bm{0}},\bm{\nu})}(\bm{0},B)$.
\begin{align*}
    \vert\vert B_j\bm{\psi} - \nu_{j} \bm{\psi} \vert\vert^2 &= \langle B_j\bm{\psi} -\nu_j\bm{\psi},B_j\bm{\psi} -\nu_j \bm{\psi}\rangle \\
    &= \langle B_j\bm{\psi}, B_j\bm{\psi} \rangle - \overline{\nu_j}\langle B_j\bm{\psi}, \bm{\psi} \rangle - \nu_j\langle \bm{\psi}, B_j\bm{\psi} \rangle + |\nu_j|^2 \langle \bm{\psi}, \bm{\psi}\rangle \\
    &= \left(\langle B_j^\dagger B_j\bm{\psi}, \bm{\psi} \rangle -\left\vert \langle B_j\bm{\psi}, \bm{\psi} \rangle \right\vert^2\right) + \left(\left\vert \langle B_j\bm{\psi}, \bm{\psi} \rangle \right\vert^2 - \overline{\nu_j}\langle B_j\bm{\psi}, \bm{\psi} \rangle - \nu_j\langle \bm{\psi}, B_j\bm{\psi} \rangle + |\nu_j|^2\right) \\
    &= \left(\langle B_j^\dagger B_j\bm{\psi}, \bm{\psi} \rangle -\left\vert \langle B_j\bm{\psi}, \bm{\psi} \rangle \right\vert^2\right) + \left\vert \langle B_j\bm{\psi}, \bm{\psi} \rangle - \nu_j \right\vert^2 \\
    &= \textup{V}^2_{\bm{\psi}}[B_j] + \left\vert\textup{E}_{\bm{\psi}}[B_j] -\nu_j \right\vert^2.
\end{align*}

To show (\ref{eq:smin_M}) $\Leftrightarrow$ (\ref{eq:eig_val_Q}) we note that for any matrix $A$, the singular values of $A$ are the square roots of the eigenvalues of $A^{\dagger}A$. We combine this with the computation
\begin{equation*}
    \left(RM_{(\bm{\lambda},\bm{\nu})} \left(\bm{A},\bm{B}\right)\right)^{\dagger} \left(RM_{(\bm{\lambda},\bm{\nu})} \left(\bm{A},\bm{B}\right)\right) = RQ_{(\bm{\lambda},\bm{\nu})}\left(\bm{A},\bm{B}\right).
\end{equation*}
To show (\ref{eq:unit_vect}) $\Leftrightarrow$ (\ref{eq:smin_M}) we note by \cite{golub_vanloan}
\begin{align*}
    \sigma_{\min}\left(RM_{(\bm{\lambda},\bm{\nu})} \left(\bm{A},\bm{B}\right)\right) &= \min_{\vert\vert\bm{\psi}\vert\vert=1}\left\lVert RM_{(\bm{\lambda},\bm{\nu})} \left(\bm{A},\bm{B}\right)\bm{\psi}\right\rVert \\
    &= \min_{\vert\vert\bm{\psi}\vert\vert=1} \left\lVert \begin{bmatrix} A_1\bm{\psi}-\lambda_1\bm{\psi} \\ \vdots \\ A_{d_1}\bm{\psi} - \lambda_{d_1}\bm{\psi} \\ B_1\bm{\psi}-\nu_{1}\bm{\psi} \\ \vdots \\ B_{d_2}\bm{\psi}-\nu_{d_2}\bm{\psi}\end{bmatrix} \right\rVert\\
    &= \min_{\vert\vert\bm{\psi}\vert\vert=1} \sqrt{\sum_{i=1}^{d_1} \vert\vert A_i\bm{\psi} - \lambda_i \bm{\psi} \vert\vert^2 + \sum_{j=1}^{d_2}\vert\vert B_j\bm{\psi} - \nu_{j} \bm{\psi} \vert\vert^2}.
\end{align*}
\end{proof}

\begin{remark}
    A similar statement can be proven about the square root of the smallest eigenvalue of $LQ_{(\bm{\lambda},\bm{\nu})}\left(\bm{A},\bm{B}\right)$ by noting that $LQ_{(\bm{\lambda},\bm{\nu})}\left(\bm{A},\bm{B}\right) = RQ_{(\bm{\lambda},\bm{\nu}^{*})}\left(\bm{A},\bm{B}^\dagger\right)$.
\end{remark}

Thus if we want a state that is approximately localized in position and energy we have reason to study the smallest eigenvalue of the right quadratic composite operator.
If we want an approximate left eigenvector then we look at the smallest eigenvalue of the left quadratic composite operator.

\begin{definition}
    Suppose that $\bm{A}$ in $\Pi_{i=1}^{d_1}\mathcal{M}_n(\C)$, $\bm{B}$ in $\Pi_{i=1}^{d_2}\mathcal{M}_n(\C)$ with $A_i$ Hermitian, $B_j$ non-Hermitian and \emph{probe site} $(\bm{\lambda},\bm{\nu})$ in $\R^{d_1} \oplus \C^{d_2}$, then the \emph{right quadratic gap} is defined as
    \begin{equation*}
    \mu_{(\bm{\lambda},\bm{\nu})}^{\text{RQ}}\left(\bm{A},\bm{B}\right)=\sqrt{\sigma_{\min}\left(RQ_{(\bm{\lambda},\bm{\nu})}\left(\bm{A},\bm{B}\right)\right) }.
    \end{equation*}
\end{definition}
We can similarly define the left quadratic gap as being the square root of the smallest eigenvalue of the left quadratic composite operator and the quadratic gap function can be define as the square root of the smallest eigenvalue quadratic composite operator. They can be denoted as \begin{equation*}
    \mu_{(\bm{\lambda},\bm{\nu})}^{\text{LQ}}\left(\bm{A},\bm{B}\right) = \sqrt{\sigma_{\min}\left(LQ_{(\bm{\lambda},\bm{\nu})}\left(\bm{A},\bm{B}\right)\right) }
\end{equation*}
and
\begin{equation*}
    \mu_{(\bm{\lambda},\bm{\nu})}^{\text{Q}}\left(\bm{A},\bm{B}\right) = \sqrt{\sigma_{\min}\left(Q_{(\bm{\lambda},\bm{\nu})}\left(\bm{A},\bm{B}\right)\right) }.
\end{equation*}

From here we define the quadratic $\epsilon$-pseudospectrum.

\begin{definition}
Suppose that $\bm{A}$ in $\Pi_{i=1}^{d_1}\mathcal{M}_n(\C)$, $\bm{B}$ in $\Pi_{i=1}^{d_2}\mathcal{M}_n(\C)$ with $A_i$ Hermitian, $B_j$ non-Hermitian, and some $\epsilon \geq 0$ we define the non-Hermitian right, left, and quadratic \emph{$\epsilon$-pseudospectrum} as
    \begin{align*}
        \Lambda_\epsilon^{\text{RQ}} \left(\bm{A},\bm{B}\right) &= \left\{(\bm{\lambda},\bm{\nu}) \in \R^{d_1}\oplus \C^{d_2} \mid \mu_{(\bm{\lambda},\bm{\nu})}^{\text{RQ}}\left(\bm{A},\bm{B}\right)\leq \epsilon\right\} \\
        \Lambda_\epsilon^{\text{LQ}} \left(\bm{A},\bm{B}\right) &= \left\{(\bm{\lambda},\bm{\nu}) \in \R^{d_1}\oplus \C^{d_2} \mid \mu_{(\bm{\lambda},\bm{\nu})}^{\text{LQ}}\left(\bm{A},\bm{B}\right)\leq \epsilon\right\} \\
        \Lambda_\epsilon^{\text{Q}} \left(\bm{A},\bm{B}\right) &= \left\{(\bm{\lambda},\bm{\nu}) \in \R^{d_1}\oplus \C^{d_2} \mid \mu_{(\bm{\lambda},\bm{\nu})}^{\text{Q}}\left(\bm{A},\bm{B}\right)\leq \epsilon\right\} \\
    \end{align*}
\end{definition}

If we have $\epsilon =0$ in the definition of the $\epsilon$-pseudospectrum, there is a joint eigenvector $\bm{\psi}$ of $A_i$, and $B_j$ for all $i$ and $j$. This is more of a mathematical statement as we know that in physical systems, observables/operators do not commute and thus do not have a joint eigenspace.
From this point of view, the Proposition \ref{prop:nh_rquad_co_equiv_quantities} is a more useful result for physical systems when $\epsilon >0$.

In the study of physical systems we would like to see how the quadratic gap function behaves as we perturb our system as well as when we perturb the probe site.
When we treat the quadratic gap as a function of the probe site alone, we can think of the probe site as taking a guess at a point in the approximate joint spectrum of the matrices $A_i$ and $B_i$.
When we treat the quadratic gap as a function of the input matrices alone, we can think of studying perturbations to physical systems such as introducting point defects, line defects, etc.
Thus, a useful property of the right quadratic gap is Lipschitz continuity in $((\bm{A},\bm{B}),(\bm{\lambda},\bm{\nu}))$.
It also means that if we are running numerical simulations, determining a grid size for the probe sites can be predetermined ahead of time.
In the case where only the probe site varies, machine precision $\epsilon_{\text{mach}}$ implies that between grid spaces, the precision is also $\epsilon_{\text{mach}}$.

\begin{prop}
Suppose that $\bm{A},\bm{C}$ in $\Pi_{i=1}^{d_1}\mathcal{M}_n(\C), \bm{B}, \bm{D}$ in $\Pi_{i=1}^{d_2}\mathcal{M}_n(\C)$ with $A_i,C_i$ Hermitian, $B_j,D_j$ non-Hermitian, and $(\bm{\lambda},\bm{\nu}),(\bm{\alpha},\bm{\beta})$ in $\R^{d_1}\oplus \C^{d_2}$, then
\begin{equation*}
     \left\vert \mu_{(\bm{\lambda},\bm{\nu})}^{\text{RQ}}\left(\bm{A},\bm{B}\right) - \mu_{(\bm{\alpha},\bm{\beta})}^{\text{RQ}}\left(\bm{C},\bm{D}\right)\right\vert \leq \left\lVert(\bm{A},\bm{B})-(\bm{C},\bm{D})\right\rVert +\left\lVert(\bm{\lambda},\bm{\nu})-(\bm{\alpha},\bm{\beta})\right\rVert.
\end{equation*}
\end{prop}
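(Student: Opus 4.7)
The plan is to reduce to the minimum-over-the-unit-sphere characterization given by Proposition~\ref{prop:nh_rquad_co_equiv_quantities}, namely
\begin{equation*}
\mu_{(\bm{\lambda},\bm{\nu})}^{\text{RQ}}\left(\bm{A},\bm{B}\right) = \sigma_{\min}\!\left(RM_{(\bm{\lambda},\bm{\nu})}(\bm{A},\bm{B})\right) = \min_{\|\bm{\psi}\|=1}\left\lVert RM_{(\bm{\lambda},\bm{\nu})}(\bm{A},\bm{B})\,\bm{\psi}\right\rVert.
\end{equation*}
For two continuous nonnegative functions on a compact set, the elementary bound $|\min f - \min g| \le \sup |f - g|$ (verified by plugging the minimizer of one into the other) turns the problem into a pointwise estimate over unit vectors.

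First I would apply the reverse triangle inequality in $\C^{(d_1+d_2)n}$ to obtain, for every unit $\bm{\psi}$,
\begin{equation*}
\bigl\lvert \lVert RM_{(\bm{\lambda},\bm{\nu})}(\bm{A},\bm{B})\bm{\psi}\rVert - \lVert RM_{(\bm{\alpha},\bm{\beta})}(\bm{C},\bm{D})\bm{\psi}\rVert \bigr\rvert
\le \left\lVert \bigl(RM_{(\bm{\lambda},\bm{\nu})}(\bm{A},\bm{B}) - RM_{(\bm{\alpha},\bm{\beta})}(\bm{C},\bm{D})\bigr)\bm{\psi} \right\rVert.
\end{equation*}
Next I would expand this $\ell^2$ norm over the $d_1 + d_2$ block rows of the column operator. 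For the Hermitian rows I bound $\lVert(A_i-C_i)\bm{\psi} - (\lambda_i-\alpha_i)\bm{\psi}\rVert \le \lVert A_i - C_i\rVert + |\lambda_i - \alpha_i|$ using triangle inequality and $\|\bm{\psi}\|=1$, and analogously for the non-Hermitian rows involving $B_j - D_j$ and $\nu_j - \beta_j$.

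The final step is Minkowski's inequality in $\ell^2$ applied to the $(d_1+d_2)$-tuples, which splits $\sqrt{\sum_k (a_k + b_k)^2} \le \sqrt{\sum_k a_k^2} + \sqrt{\sum_k b_k^2}$, yielding exactly the two Euclidean-product norms $\lVert(\bm{A},\bm{B})-(\bm{C},\bm{D})\rVert$ and $\lVert(\bm{\lambda},\bm{\nu})-(\bm{\alpha},\bm{\beta})\rVert$ on the right-hand side. The main point to be careful about (rather than a real obstacle) is making the norms on the product spaces explicit: the operator norm on each matrix slot and absolute value on each scalar slot, combined via Euclidean summation, so that the inequality lands in the claimed form. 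Equivalently, one may replace the minimum-over-the-sphere argument by the perturbation bound $|\sigma_{\min}(X) - \sigma_{\min}(Y)| \le \lVert X - Y\rVert$ of Bhatia already invoked for the Clifford radial gap, applied directly to $RM_{(\bm{\lambda},\bm{\nu})}(\bm{A},\bm{B})$ and $RM_{(\bm{\alpha},\bm{\beta})}(\bm{C},\bm{D})$, followed by the same block-row expansion and Minkowski step.
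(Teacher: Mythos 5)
Your proposal is correct and follows essentially the same route as the paper: both reduce the right quadratic gap to $\sigma_{\min}\left(RM_{(\bm{\lambda},\bm{\nu})}(\bm{A},\bm{B})\right)$ via Proposition \ref{prop:nh_rquad_co_equiv_quantities}, split the difference of the two $RM$ matrices into a matrix-perturbation part and a probe-site part bounded by the two product norms, and conclude with the $1$-Lipschitz perturbation bound for singular values (your min-over-the-unit-sphere argument is just a proof of that same bound, as you note). The only cosmetic difference is that your block-row estimate followed by Minkowski is stated as an inequality throughout, whereas the paper asserts an intermediate equality for the norm of the stacked matrix differences that really only holds as $\leq$; the direction needed is unaffected.
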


\begin{proof}
    First note that
    \begin{align*}
        \left\lVert\begin{bmatrix} (\lambda_1 - \alpha_1)I \\ \vdots \\ (\lambda_{d_1} - \alpha_{d_1})I \\ (\nu_{1} - \beta_{1})I \\ \vdots \\ (\nu_{d_2} - \beta_{d_2})I\end{bmatrix} \right\rVert^2
        &=\left\lVert \sum_{i=1}^{d_1}\left\vert \lambda_i-\alpha_i\right\vert^2 I + \sum_{j=1}^{d_2}\left\vert \nu_j-\beta_j\right\vert^2 I\right\rVert \\
        &= \sum_{i=1}^{d_1}\left( \lambda_i-\alpha_i\right)^2 + \sum_{j=1}^{d_2}\left\vert \nu_j-\beta_j\right\vert^2 \\
        &= \vert\vert(\bm{\lambda},\bm{\nu})-(\bm{\alpha},\bm{\beta})\vert\vert^2.
    \end{align*}
    Then by a similar argument
    \begin{align*}
        \left\lVert\begin{bmatrix} (A_1 - C_1) \\ \vdots \\ (A_{d_1} - C_{d_1}) \\ (B_{1} - D_{1}) \\ \vdots \\ (B_{d_2} - D_{d_2})\end{bmatrix}\right\rVert^2
        &=\left\lVert \sum_{i=1}^{d_1}\left(A_i - C_i\right)^2 + \sum_{j=1}^{d_2}\left(B_{i} - D_{i}\right)^\dagger\left(B_{i} - D_{i}\right) \right\rVert \\
        &= \sum_{i=1}^{d_1}\left\lVert A_i-C_i \right\rVert^2 + \sum_{j=1}^{d_2}\left\lVert B_i-D_i \right\rVert^2 \\
        &= \vert\vert(\bm{A},\bm{B})-(\bm{C},\bm{D})\vert\vert^2.
    \end{align*}
It then follows that,
    \begin{align*}
        \left\lVert M_{(\bm{\lambda},\bm{\nu})} \left(\bm{A},\bm{B}\right)-M_{(\bm{\alpha},\bm{\beta})} \left(\bm{C},\bm{D}\right)\right\rVert
        &\leq \left\lVert\begin{bmatrix} (A_1 - C_1) \\ \vdots \\ (A_{d_1} - C_{d_1}) \\ (B_{1} - D_{1}) \\ \vdots \\ (B_{d_2} - D_{d_2})\end{bmatrix}\right\rVert + \left\lVert\begin{bmatrix} (\lambda_1 - \alpha_1)I \\ \vdots \\ (\lambda_{d_1} - \alpha_{d_1})I \\ (\nu_{1} - \beta_{1})I \\ \vdots \\ (\nu_{d_2} - \beta_{d_2})I\end{bmatrix} \right\rVert \\
        &\leq \vert\vert(\bm{A},\bm{B})-(\bm{C},\bm{D})\vert\vert +\vert\vert(\bm{\lambda},\bm{\nu})-(\bm{\alpha},\bm{\beta})\vert\vert.
    \end{align*}

    Lastly Proposition \ref{prop:nh_rquad_co_equiv_quantities} tells us

    \begin{align*}
        \left\vert \mu_{(\bm{\lambda},\bm{\nu})}^{\text{RQ}}\left(\bm{A},\bm{B}\right) - \mu_{(\bm{\alpha},\bm{\beta})}^{\text{RQ}}\left(\bm{E},\bm{F}\right)\right\vert &= \left|\sigma_{\min}\left( RM_{(\bm{\lambda},\bm{\nu})} \left(\bm{A},\bm{B}\right) \right)- \sigma_{\min}\left(RM_{(\bm{\alpha},\bm{\beta})} \left(\bm{E},\bm{F}\right)\right)\right|.
    \end{align*}
    Combine this with perturbation of singular values on page 78 of \cite{bhatia} and we are done.
\end{proof}

\begin{corollary}
Suppose that $\bm{A},\bm{C}$ in $\Pi_{i=1}^{d_1}\mathcal{M}_n(\C), \bm{B}, \bm{D}$ in $\Pi_{i=1}^{d_2}\mathcal{M}_n(\C)$ with $A_i,C_i$ Hermitian, $B_j,D_j$ non-Hermitian, and $(\bm{\lambda},\bm{\nu}),(\bm{\alpha},\bm{\beta})$ in $\R^{d_1}\oplus \C^{d_2}$, then
\begin{equation*}
     \left\vert \mu_{(\bm{\lambda},\bm{\nu})}^{\text{LQ}}\left(\bm{A},\bm{B}\right) - \mu_{(\bm{\alpha},\bm{\beta})}^{\text{LQ}}\left(\bm{C},\bm{D}\right)\right\vert \leq \left\lVert(\bm{A},\bm{B}^\dagger)-(\bm{C},\bm{D}^\dagger)\right\rVert +\left\lVert(\bm{\lambda},\bm{\nu}^*)-(\bm{\alpha},\bm{\beta}^*)\right\rVert.
\end{equation*}
With $\bm{B}^\dagger$ applying pointwise adjoint and $\bm{\nu}^*$ also conjugate applying pointwise to the entries.
\end{corollary}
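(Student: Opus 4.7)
The plan is to reduce this statement to the preceding proposition by invoking the Remark that follows Proposition \ref{prop:nh_rquad_co_equiv_quantities}, which gives the operator identity
\begin{equation*}
    LQ_{(\bm{\lambda},\bm{\nu})}\left(\bm{A},\bm{B}\right) = RQ_{(\bm{\lambda},\bm{\nu}^{*})}\left(\bm{A},\bm{B}^\dagger\right).
\end{equation*}
Since the left quadratic gap is defined as the square root of the smallest singular value of the left quadratic composite operator, this identity passes immediately to the gap functions:
\begin{equation*}
    \mu_{(\bm{\lambda},\bm{\nu})}^{\text{LQ}}\left(\bm{A},\bm{B}\right) = \mu_{(\bm{\lambda},\bm{\nu}^{*})}^{\text{RQ}}\left(\bm{A},\bm{B}^\dagger\right),
\end{equation*}
and similarly for $(\bm{C},\bm{D})$ at the probe site $(\bm{\alpha},\bm{\beta})$.

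First I would substitute these two identities into the left-hand side of the desired inequality, rewriting the difference of LQ-gaps as a difference of RQ-gaps on the adjoint tuples $(\bm{A},\bm{B}^\dagger)$ and $(\bm{C},\bm{D}^\dagger)$ at the conjugated probe sites $(\bm{\lambda},\bm{\nu}^*)$ and $(\bm{\alpha},\bm{\beta}^*)$. Then I would apply the preceding Proposition directly with the formal substitutions $\bm{B}\mapsto \bm{B}^\dagger$, $\bm{D}\mapsto \bm{D}^\dagger$, $\bm{\nu}\mapsto \bm{\nu}^*$, $\bm{\beta}\mapsto \bm{\beta}^*$. The output bound is exactly $\left\lVert(\bm{A},\bm{B}^\dagger)-(\bm{C},\bm{D}^\dagger)\right\rVert +\left\lVert(\bm{\lambda},\bm{\nu}^*)-(\bm{\alpha},\bm{\beta}^*)\right\rVert$, which matches the right-hand side of the corollary verbatim.

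There is essentially no obstacle: the Remark does the real work, and the only things to check are bookkeeping items, namely that pointwise Hermitian adjoint preserves the operator norm on each factor $M_n(\C)$ (so that the norm expression on the $\bm{B}^\dagger,\bm{D}^\dagger$ tuple is well-defined and finite precisely when the corresponding expression on $\bm{B},\bm{D}$ is), and that componentwise complex conjugation is an isometry of $\C^{d_2}$ (so the probe-site distance is unaffected in magnitude). Because the proposition was proved as a pure consequence of Proposition \ref{prop:nh_rquad_co_equiv_quantities} together with the singular-value perturbation bound from \cite{bhatia}, no additional analytic ingredient is required here.
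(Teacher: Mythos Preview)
Your proposal is correct and matches the paper's own proof essentially line for line: the paper's argument consists of the single sentence ``This follows from the fact that $LQ_{(\bm{\lambda},\bm{\nu})}(\bm{A},\bm{B}) = RQ_{(\bm{\lambda},\bm{\nu}^*)}(\bm{A},\bm{B}^\dagger)$,'' which is exactly the reduction you describe. Your additional bookkeeping remarks about adjoints and conjugation being isometries are correct but not strictly needed, since the preceding Proposition is applied verbatim to the substituted data.
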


\begin{proof}
    This follow from the fact that $LQ_{(\bm{\lambda},\bm{\nu})}\left(\bm{A},\bm{B}\right) = RQ_{(\bm{\lambda},\bm{\nu}^*)}\left(\bm{A},\bm{B}^\dagger\right)$
\end{proof}
These two results tell us that the quadratic gap function is pointwise continuous, but we conjecture that it might also be Lipschitz continuous.
\begin{corollary}
    Suppose that $\bm{A},\bm{C}$ in $\Pi_{i=1}^{d_1}\mathcal{M}_n(\C), \bm{B}, \bm{D}$ in $\Pi_{i=1}^{d_2}\mathcal{M}_n(\C)$ with $A_i,C_i$ Hermitian, $B_j,D_j$ non-Hermitian, and $(\bm{\lambda},\bm{\nu}),(\bm{\alpha},\bm{\beta})$ in $\R^{d_1}\oplus \C^{d_2}$, then the quadratic gap function is pointwise continuous as a function from $(\R^{d_1}\oplus\C^{d_2}) \times \Pi_{i=1}^{d_1+d_2}\mathcal{M}_n(\C)$ to $\R$.
\end{corollary}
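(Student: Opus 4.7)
The plan is to reduce the claim to the two Lipschitz results that precede it, together with the earlier observation that
\begin{equation*}
    \sigma_{\min}\bigl(Q_{(\bm{\lambda},\bm{\nu})}(\bm{A},\bm{B})\bigr) = \min\bigl\{\sigma_{\min}(RQ_{(\bm{\lambda},\bm{\nu})}(\bm{A},\bm{B})),\ \sigma_{\min}(LQ_{(\bm{\lambda},\bm{\nu})}(\bm{A},\bm{B}))\bigr\}.
\end{equation*}
Taking square roots, this yields $\mu^{\text{Q}}_{(\bm{\lambda},\bm{\nu})}(\bm{A},\bm{B}) = \min\{\mu^{\text{RQ}}_{(\bm{\lambda},\bm{\nu})}(\bm{A},\bm{B}),\ \mu^{\text{LQ}}_{(\bm{\lambda},\bm{\nu})}(\bm{A},\bm{B})\}$, so continuity of $\mu^{\text{Q}}$ follows from continuity of each of the right and left quadratic gaps together with the elementary fact that the pointwise minimum of two continuous real-valued functions is continuous.

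Concretely, I would proceed as follows. First, invoke the preceding Proposition to conclude that $\mu^{\text{RQ}}$ is Lipschitz (hence continuous) as a function on $(\R^{d_1}\oplus\C^{d_2})\times \Pi_{i=1}^{d_1+d_2}\mathcal{M}_n(\C)$ with the natural product metric. Second, invoke the preceding Corollary to get the analogous Lipschitz bound for $\mu^{\text{LQ}}$; note that the right-hand side of that corollary involves $(\bm{A},\bm{B}^\dagger)$ and $(\bm{\lambda},\bm{\nu}^*)$, but since the componentwise adjoint and complex conjugation are isometries of the relevant norms, continuity of $\mu^{\text{LQ}}$ on the original domain follows. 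Third, apply the earlier corollary identifying $\sigma_{\min}(Q)$ with $\min\{\sigma_{\min}(RQ),\sigma_{\min}(LQ)\}$, take square roots, and use $|\min\{a,b\} - \min\{c,d\}| \le \max\{|a-c|,|b-d|\}$ to transfer the Lipschitz estimates for $\mu^{\text{RQ}}$ and $\mu^{\text{LQ}}$ into a continuity (in fact Lipschitz) estimate for $\mu^{\text{Q}}$.

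There is no substantive obstacle here; the result is essentially a bookkeeping corollary of the two previously established Lipschitz bounds. The only point requiring a small remark is that the Lipschitz constant obtained for $\mu^{\text{LQ}}$ via the identity $LQ_{(\bm{\lambda},\bm{\nu})}(\bm{A},\bm{B}) = RQ_{(\bm{\lambda},\bm{\nu}^*)}(\bm{A},\bm{B}^\dagger)$ uses the fact that $\|B_j - D_j\| = \|B_j^\dagger - D_j^\dagger\|$ and $|\nu_j - \beta_j| = |\nu_j^* - \beta_j^*|$, so the estimate is expressed in the original variables rather than their conjugates. Once these identifications are noted, pointwise continuity (and in fact joint Lipschitz continuity) of $\mu^{\text{Q}}$ is immediate, which is enough to establish the claim as stated.
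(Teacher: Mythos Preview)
Your proposal is correct and follows essentially the same approach as the paper: the paper's proof simply cites ``Lipschitz continuity of the left and right quadratic operator gap functions as well as pointwise continuity of the minimum function,'' which is exactly the reduction you carry out. Your version is in fact more detailed---you make explicit the isometry of $(\cdot)^\dagger$ and $(\cdot)^*$ needed to pull the $\mu^{\text{LQ}}$ estimate back to the original variables, and you observe via $|\min\{a,b\}-\min\{c,d\}|\le\max\{|a-c|,|b-d|\}$ that the conclusion can be upgraded to Lipschitz continuity---but the core argument is identical.
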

\begin{proof}
    Lipschitz continuity of the the left and right quadratic operator gap functions as well as pointwise continuity of the minimum function.
\end{proof}

\section{Relationship between Clifford linear, radial and quadratic gaps} \label{sec:rel_spec_and_quad_gap}
As mentioned in Section \ref{sec:nh_clifford_composite_operator} we noted that the non-Hermitian spectral localizer is non-normal. This means that the variation in the spectrum can vary drastically.

Despite this we have that Theorem 1 in \cite{ruhe},
\begin{equation}\label{eq:min_spec-sigma_min}
    \left\vert\sigma_{\min}\left(L_{(\bm{\lambda},\nu)} \left( \bm{A}, B \right)\right) - \min\left|\textup{spec}\left(L_{(\bm{\lambda},\nu)} \left( \bm{A}, B \right)\right)\right|\right\vert \leq \Delta\left(L_{(\bm{\lambda},\nu)} \left( \bm{A}, B \right)\right).
\end{equation}

By claim (ii) in \cite{kahan}, equivalence of norms and by a geometric argument we have that
\begin{equation}\label{eq:min_real_spec-min_spec}
    \left\vert \min\left\vert \textrm{Re}\left(\textup{Spec}\left(L_{(\bm{\lambda},\nu)} \left( \bm{A}, B \right)\right)\right) \right\vert - \min\left\vert \textup{Spec}\left(L_{(\bm{\lambda},\nu)} \left( \bm{A}, B \right) \right) \right\vert \right\vert \leq \sqrt{n} \left\lVert L_{(\bm{\lambda},\nu)} \left( \bm{A}, B \right) - L_{(\bm{\lambda},\nu)} \left( \bm{A}, B \right)^\dagger\right\rVert
\end{equation}
where $n$ is the dimension of $L_{(\bm{\lambda},\nu)} \left( \bm{A}, B \right)$.

Lastly
\begin{equation}\label{eq:diff_l_and_adjoint}
    \left\lVert L_{(\bm{\lambda},\nu)} \left( \bm{A}, B \right) - L_{(\bm{\lambda},\nu)} \left( \bm{A}, B \right)^\dagger\right\rVert = \left\lVert (B-\nu) - (B-\nu)^\dagger \otimes I_m \right\rVert = \left\lVert (B-\nu) - (B-\nu)^\dagger \right\rVert.
\end{equation}

this will establish a relationship between the Clifford linear and radial gaps.

\begin{prop}\label{prop:nh_clg_linear_radial_gapped}
    Suppose that we have tuples of matrices $\bm{A}$ in $\Pi_{i=1}^{d_1}M_n(\C), B$ in $M_n(\C)$ with $A_i$ Hermitian, $B$ non-Hermitian, and $(\bm{\lambda},\nu)$ in $\R^{d_1}\oplus \C$, then
    \begin{equation*}
        \left\vert \bar{\mu}_{(\bm{\lambda},\nu)}^{\text{C}}\left(\bm{A},B\right) - \dot{\mu}_{(\bm{\lambda},\nu)}^{\text{C}}\left(\bm{A},B\right)\right\vert \leq \sqrt{N}\left\lVert (B-\nu)- (B-\nu)^\dagger\right\rVert + \Delta(L_{(\bm{\lambda},\nu)} \left( \bm{A}, B \right)),
    \end{equation*}
    where $N$ is the dimension of $L_{(\bm{\lambda},\nu)} \left( \bm{A}, B \right)$.
\end{prop}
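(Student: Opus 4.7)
The plan is to apply the triangle inequality to the two quantities that the proposition compares, inserting $\min\left\vert\textup{Spec}(L_{(\bm{\lambda},\nu)}(\bm{A},B))\right\vert$ as an intermediate pivot. Concretely, I would write
\begin{equation*}
\left\vert \bar{\mu}_{(\bm{\lambda},\nu)}^{\text{C}} - \dot{\mu}_{(\bm{\lambda},\nu)}^{\text{C}}\right\vert \leq \left\vert \bar{\mu}_{(\bm{\lambda},\nu)}^{\text{C}} - \min\left\vert\textup{Spec}(L_{(\bm{\lambda},\nu)}(\bm{A},B))\right\vert\right\vert + \left\vert \min\left\vert\textup{Spec}(L_{(\bm{\lambda},\nu)}(\bm{A},B))\right\vert - \dot{\mu}_{(\bm{\lambda},\nu)}^{\text{C}}\right\vert,
\end{equation*}
so that the left summand is exactly the quantity controlled by the inequality from Kahan, and the right summand is exactly the quantity controlled by Ruhe's deviation-from-normality bound.

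Next I would bound each summand with the equations already cited in the excerpt. The right summand is at most $\Delta(L_{(\bm{\lambda},\nu)}(\bm{A},B))$ by the inequality labeled \eqref{eq:min_spec-sigma_min}, which is Ruhe's Theorem 1 applied to the non-Hermitian spectral localizer. The left summand is at most $\sqrt{N}\lVert L_{(\bm{\lambda},\nu)}(\bm{A},B) - L_{(\bm{\lambda},\nu)}(\bm{A},B)^\dagger\rVert$ by the inequality labeled \eqref{eq:min_real_spec-min_spec}, where $N$ is the dimension of $L_{(\bm{\lambda},\nu)}(\bm{A},B)$.

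To finish, I would replace $\lVert L - L^\dagger\rVert$ using the identity labeled \eqref{eq:diff_l_and_adjoint}, which reduces the norm on the full-dimension spectral-localizer matrix to the norm $\lVert(B-\nu)-(B-\nu)^\dagger\rVert$ on the original Hilbert space; this follows because the tensor factors $\Gamma_i$ for the Hermitian $A_i$ contribute nothing to $L-L^\dagger$, and the remaining contribution from the $B$-block tensors with $I_m$, so the tensor norm collapses to the scalar norm on $(B-\nu)-(B-\nu)^\dagger$. Combining the two bounds yields exactly the claimed inequality.

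There is no real obstacle here; the proof is a three-line chain of the triangle inequality together with three results already stated immediately before the proposition. The only subtlety worth remarking is bookkeeping of the dimension constant $N$ (as opposed to the ambient dimension $n$ of $\bm{A},B$), to make sure the $\sqrt{N}$ in the final inequality matches the dimension of the spectral localizer as stated, rather than the dimension of the underlying observables.
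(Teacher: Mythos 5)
Your proposal is correct and matches the paper's argument exactly: the paper's proof of Proposition \ref{prop:nh_clg_linear_radial_gapped} is precisely the combination of Equations (\ref{eq:min_spec-sigma_min}), (\ref{eq:min_real_spec-min_spec}), and (\ref{eq:diff_l_and_adjoint}) via the triangle inequality with $\min\left\vert\textup{Spec}\left(L_{(\bm{\lambda},\nu)}\left(\bm{A},B\right)\right)\right\vert$ as the pivot, which is what you wrote out. Your only addition is making the triangle-inequality decomposition and the dimension bookkeeping explicit, which the paper leaves implicit.
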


\begin{proof}
    This follows from Equations (\ref{eq:min_spec-sigma_min}), (\ref{eq:min_real_spec-min_spec}) and (\ref{eq:diff_l_and_adjoint}).
\end{proof}

Indeed this formally tells us that if $\nu$ is real valued and $B$ is Hermitian, the two gaps are the same.
In fact $B$ Hermitian and $\nu$ real means we have equality with the spectral localizer gap in \cite{cerjan_loring_vides_2023}, and \cite{loring}.

\begin{prop}\label{prop:nh_clg_point_gapped}
    Suppose that we have tuples of matrices $\bm{A} \in \Pi_{i=1}^{d_1}M_n(\C), B$ in $M_n(\C)$ with $A_i$ Hermitian, $B$ non-Hermitian, and $(\bm{\lambda},\nu)$ in $\R^{d_1}\oplus \C$, then
    \begin{equation*}
        \left\vert \dot{\mu}_{(\bm{\lambda},\nu)}^{\text{C}}\left(\bm{A},B\right) - \mu_{(\bm{\lambda},\nu)}^{\text{Q}}\left(\bm{A},B\right)\right\vert \leq \sqrt{\sum_{i<j} \vert\vert[A_i,A_j]\vert\vert + \vert\vert F\vert\vert},
    \end{equation*}
    with $F$ as in equation \ref{eq:f_term}.
\end{prop}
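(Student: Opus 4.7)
The plan is to leverage the already-computed expression for $L_{(\bm{\lambda},\nu)}(\bm{A},B)^\dagger L_{(\bm{\lambda},\nu)}(\bm{A},B)$ in equation (\ref{eq:ldagger_l_computation}) and separate it into a ``nice'' block part whose smallest eigenvalue matches $\sigma_{\min}(Q_{(\bm{\lambda},\nu)}(\bm{A},B))$, plus a perturbation controlled by the commutator terms and $F$. First I would rewrite the nice part,
\begin{equation*}
\sum_{i=1}^d (A_i-\lambda_i)^2 \otimes I_{2m} + (B-\nu)^\dagger(B-\nu)\otimes\begin{bmatrix} I_m & \\ & 0_m \end{bmatrix} + (B-\nu)(B-\nu)^\dagger\otimes\begin{bmatrix} 0_m & \\ & I_m \end{bmatrix},
\end{equation*}
as $RQ_{(\bm{\lambda},\nu)}(\bm{A},B)\otimes\begin{bmatrix}I_m & \\ & 0_m\end{bmatrix} + LQ_{(\bm{\lambda},\nu)}(\bm{A},B)\otimes\begin{bmatrix}0_m & \\ & I_m\end{bmatrix}$. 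Since this operator is block diagonal after a trivial basis reordering, its spectrum is the union of the spectra of $RQ$ and $LQ$, each repeated with multiplicity $m$; by the corollary immediately after the definition of $Q_{(\bm{\lambda},\bm{\nu})}$, its smallest eigenvalue equals $\sigma_{\min}(Q_{(\bm{\lambda},\nu)}(\bm{A},B)) = \mu^{\mathrm{Q}}_{(\bm{\lambda},\nu)}(\bm{A},B)^2$.

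Next I would apply singular value / Weyl perturbation (Bhatia, as cited earlier in the excerpt), since both $L^\dagger L$ and the nice part are positive semidefinite Hermitian matrices. The perturbation is $\sum_{i<j}[A_i,A_j]\otimes\Gamma_i\Gamma_j + F$, whose operator norm is at most $\sum_{i<j}\lVert[A_i,A_j]\rVert + \lVert F\rVert$ because $\Gamma_i\Gamma_j$ is unitary so $\lVert[A_i,A_j]\otimes\Gamma_i\Gamma_j\rVert = \lVert[A_i,A_j]\rVert$. This yields
\begin{equation*}
\bigl\lvert \dot\mu^{\mathrm{C}}_{(\bm{\lambda},\nu)}(\bm{A},B)^2 - \mu^{\mathrm{Q}}_{(\bm{\lambda},\nu)}(\bm{A},B)^2 \bigr\rvert \;\leq\; \sum_{i<j}\lVert[A_i,A_j]\rVert + \lVert F\rVert,
\end{equation*}
using $\sigma_{\min}(L^\dagger L) = \dot\mu^{\mathrm{C}}_{(\bm{\lambda},\nu)}(\bm{A},B)^2$.

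Finally, to pass from a difference of squares to a difference, I would use the elementary inequality $\lvert a-b\rvert \leq \sqrt{\lvert a^2 - b^2\rvert}$ valid for all $a,b\geq 0$ (it follows from $\lvert a-b\rvert \leq a+b$). Applying this with $a = \dot\mu^{\mathrm{C}}$ and $b = \mu^{\mathrm{Q}}$, both nonnegative, gives the claimed bound directly. The main subtlety in the argument is step one — namely, making explicit that the ``diagonal'' part of $L^\dagger L$ has the same nonzero spectrum as the $2n\times 2n$ operator $Q_{(\bm{\lambda},\nu)}(\bm{A},B)$ despite the differing ambient dimensions — but this is just the observation that tensoring an inner factor with $I_m$ replicates eigenvalues without introducing new ones, combined with the corollary identifying $\sigma_{\min}(Q)$ with $\min\{\sigma_{\min}(RQ),\sigma_{\min}(LQ)\}$. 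The rest is a standard perturbation-plus-square-root manipulation.
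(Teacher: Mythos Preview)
Your argument is correct and matches the paper's proof essentially line for line: the paper also isolates the block-diagonal piece $K$ (your ``nice part''), invokes the fact that tensoring with $I_m$ does not change the spectrum to identify $\sigma_{\min}(K)$ with $\sigma_{\min}(Q_{(\bm{\lambda},\nu)}(\bm{A},B))$, bounds $\lVert L^\dagger L - K\rVert$ by $\sum_{i<j}\lVert[A_i,A_j]\rVert + \lVert F\rVert$, applies Lipschitz continuity of singular values, and finishes with the same inequality $\lvert\sqrt{a}-\sqrt{b}\rvert\leq\sqrt{\lvert a-b\rvert}$ (equivalently your $\lvert a-b\rvert\leq\sqrt{\lvert a^2-b^2\rvert}$).
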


\begin{proof}
    Note that tensoring by the identity does not affect the spectrum of a matrix (ignoring multiplicity). Therefore, given
    \begin{equation*}
    K = \sum_{i=1}^{d}(A_i-\lambda_i)^2 \otimes I + (B-\nu)^{\dagger}(B-\nu) \otimes \begin{bmatrix} I_m & \\ & 0_m \end{bmatrix}+ (B-\nu)(B-\nu)^{\dagger} \otimes \begin{bmatrix}0_m & \\  & I_m\end{bmatrix}
    \end{equation*}
    we get that
    \begin{equation*}
    \sqrt{\sigma_{\min}\left(Q_{(\bm{\lambda},\nu)}\left(\bm{A},B\right)\right)} = \sqrt{\sigma_{\min}\left( K \right)}.
    \end{equation*}
    We also have that
    \begin{align*}
        \left\lVert L_{(\bm{\lambda},\nu)} \left( \bm{A}, B \right)^{\dagger}L_{(\bm{\lambda},\nu)} \left( \bm{A}, B \right) -  K \right\rVert &= \left\lVert \sum_{i<j} [A_i,A_j] \otimes \Gamma_i\Gamma_j + F \right\rVert \\
        &\leq \left\lVert \sum_{i<j} [A_i,A_j] \otimes \Gamma_i\Gamma_j \right\rVert + \left\lVert F\right\rVert \\
        &\leq \sum_{i<j} \left\lVert  [A_i,A_j] \right\rVert + \left\lVert F\right\rVert.
    \end{align*}

    Next we have that for any positive quantities $a,b$ that $\left\vert\sqrt{a}-\sqrt{b}\right\vert\leq \sqrt{|a-b|}$.
    Combining this with Lipschitz continuity of singular values we get that
    \begin{align*}
        \left\vert \dot{\mu}_{(\bm{\lambda},\nu)}^{\text{C}}\left(\bm{A},B\right) - \mu_{(\bm{\lambda},\nu)}^{\text{Q}}\left(\bm{A},B\right) \right\vert &= \left\vert \sigma_{\min}(L_{(\bm{\lambda},\nu)}\left(\bm{A},B\right)) - \sqrt{\sigma_{\min}(K)} \right\vert \\
        &= \left\vert \sqrt{\sigma_{\min}\left((L_{(\bm{\lambda},\nu)}\left(\bm{A},B\right))^\dagger(L_{(\bm{\lambda},\nu)}\left(\bm{A},B\right))\right)} - \sqrt{\sigma_{\min}(K)} \right\vert \\
        &= \sqrt{\left\vert\sigma_{\min}\left((L_{(\bm{\lambda},\nu)}\left(\bm{A},B\right))^\dagger(L_{(\bm{\lambda},\nu)}\left(\bm{A},B\right))\right) - \sigma_{\min}(K)\right\vert} \\
        &\leq \sqrt{\sum_{i<j} \left\lVert[A_i,A_j]\right\rVert + \left\lVert F\right\rVert}.
    \end{align*}
\end{proof}

Combining the results from \ref{prop:nh_clg_linear_radial_gapped} and \ref{prop:nh_clg_point_gapped} we have the following proposition.

\begin{prop}\label{prop:nh_clg_line_gapped}
Suppose that we have tuples of matrices $\bm{A}$ in $\Pi_{i=1}^{d_1}M_n(\C), B$ in $M_n(\C)$ with $A_i$ Hermitian, $B$ non-Hermitian, and $(\bm{\lambda},\nu)$ in $\R^{d_1}\oplus \C$, then
\begin{align*}
    \left\vert \bar{\mu}_{(\bm{\lambda},\nu)}^{\text{C}}\left(\bm{A},B\right) - \mu_{(\bm{\lambda},\nu)}^{\text{Q}}\left(\bm{A},B\right)\right\vert &\leq \sqrt{n}\left\lVert (B-\nu) - (B-\nu)^\dagger \right\rVert + \Delta\left(L_{(\bm{\lambda},\nu)} \left( \bm{A}, B \right)\right) +\sqrt{\sum_{i<j} \left\lVert[A_i,A_j]\right\rVert + \left\lVert F\right\rVert},
\end{align*}
where $n$ is the dimension of $L_{(\bm{\lambda},\nu)} \left( \bm{A}, B \right)$ and $F$ as in equation \ref{eq:f_term}.
\end{prop}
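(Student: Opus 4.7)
The plan is to derive the stated bound by a straightforward triangle inequality that chains together the two previously established comparisons. Specifically, I would introduce the Clifford radial gap $\dot{\mu}_{(\bm{\lambda},\nu)}^{\text{C}}(\bm{A},B)$ as an intermediate quantity and write
\begin{equation*}
\left\vert \bar{\mu}_{(\bm{\lambda},\nu)}^{\text{C}}(\bm{A},B) - \mu_{(\bm{\lambda},\nu)}^{\text{Q}}(\bm{A},B)\right\vert
\leq \left\vert \bar{\mu}_{(\bm{\lambda},\nu)}^{\text{C}}(\bm{A},B) - \dot{\mu}_{(\bm{\lambda},\nu)}^{\text{C}}(\bm{A},B)\right\vert
+ \left\vert \dot{\mu}_{(\bm{\lambda},\nu)}^{\text{C}}(\bm{A},B) - \mu_{(\bm{\lambda},\nu)}^{\text{Q}}(\bm{A},B)\right\vert.
\end{equation*}

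The first term on the right is then bounded using Proposition \ref{prop:nh_clg_linear_radial_gapped}, which contributes $\sqrt{n}\lVert (B-\nu)-(B-\nu)^\dagger\rVert + \Delta(L_{(\bm{\lambda},\nu)}(\bm{A},B))$. The second term is bounded by Proposition \ref{prop:nh_clg_point_gapped}, which supplies the $\sqrt{\sum_{i<j}\lVert [A_i,A_j]\rVert + \lVert F\rVert}$ contribution. Summing these two bounds yields exactly the inequality in the statement.

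Since both prior propositions are already proved and the chaining uses nothing beyond the scalar triangle inequality, there is no real obstacle to overcome. The only thing to be careful about is notational consistency, namely that the $n$ appearing in the final bound refers to the dimension of the non-Hermitian spectral localizer $L_{(\bm{\lambda},\nu)}(\bm{A},B)$ (as in Proposition \ref{prop:nh_clg_linear_radial_gapped}, where it was called $N$), rather than the dimension $n$ of the individual underlying operators $A_i, B$; so I would state this convention explicitly at the outset of the proof to avoid ambiguity.
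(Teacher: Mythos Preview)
Your proposal is correct and follows essentially the same approach as the paper: insert the Clifford radial gap as an intermediate, apply the scalar triangle inequality, and invoke Propositions \ref{prop:nh_clg_linear_radial_gapped} and \ref{prop:nh_clg_point_gapped} for the two resulting terms. Your remark about the $n$ versus $N$ notational clash is well taken; the paper silently uses $n$ for the localizer dimension here despite the conflict with the size of the underlying blocks.
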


\begin{proof}
    By Proposition \ref{prop:nh_clg_linear_radial_gapped} and \ref{prop:nh_clg_point_gapped} we have that,
    \begin{align*}
        \left\vert \bar{\mu}_{(\bm{\lambda},\nu)}^{\text{C}}\left(\bm{A},B\right) - \sigma_{\min}(K) \right\vert &\leq \left\vert \bar{\mu}_{(\bm{\lambda},\nu)}^{\text{C}}\left(\bm{A},B\right) - \sigma_{\min}\left(L_{(\bm{\lambda},\nu)} \left( \bm{A}, B \right)\right) \right\vert + \left\vert\sigma_{\min}\left(L_{(\bm{\lambda},\nu)} \left( \bm{A}, B \right)\right) - \sigma_{\min}(K) \right\vert \\
        &= \left\vert \bar{\mu}_{(\bm{\lambda},\bm{\nu})}^{\text{C}}\left(\bm{A},B\right) - \dot{\mu}_{(\bm{\lambda},\bm{\nu})}^{\text{C}}\left(\bm{A},B\right)\right\vert + \left\vert \dot{\mu}_{(\bm{\lambda},\bm{\nu})}^{\text{C}}\left(\bm{A},B\right) - \mu_{(\bm{\lambda},\bm{\nu})}^{\text{Q}}\left(\bm{A},B\right)\right\vert \\
        &\leq \sqrt{n}\left\lVert (B-\nu) - (B-\nu)^\dagger \right\rVert + \Delta\left(L_{(\bm{\lambda},\nu)} \left( \bm{A}, B \right)\right)+ \sqrt{\sum_{i<j} \left\lVert[A_i,A_j]\right\rVert + \left\lVert F\right\rVert}.
    \end{align*}
\end{proof}
Proposition \ref{prop:nh_clg_point_gapped} is equivalent Proposition II.4 in Ref \cite{cerjan_loring_vides_2023} when the the operator $B$ is Hermitian and the probe site $\nu$ is restricted to be real valued.

\section{Local nature of the quadratic composite operator}\label{sec:local_nature_nh_quadratic}

The non-Hermitian composite operator being local in this setting means that if we had a physical system, a perturbation to the Hamiltonian far away from the probe site $(\bm{\lambda},\bm{\nu})$ will have little to no effect on the quadratic gap.
In other words, small defects away from a probe site should not have significantly large effects when probing away from the defect.
The following theorem establishes this idea.
In fact this is a generalization of Theorem III.1 from \cite{cerjan_loring_vides_2023}, and the proof is in the same spirit.

For the remainder of the section we will assume that the entries of the matrix $d_1$ tuple in $\bm{A}$ commute and that each of the $A_i$ is invertible.
In a physical setting where $A_i = X_i$ are position observables this is easily accomplished by shifting the system in position so that the corresponding operator is invertible.

This allows us to consider the following construction. It can be thought of as the Euclidean distance from the origin in spacial coordinates.
Specifically,
\begin{equation*}
    Z^2 = \sum_{i=1}^{d_1}(A_i-\lambda_i)^2
\end{equation*}
and
\begin{equation*}
    Y^2 = \sum_{j=1}^{d_2}(B_j-\nu_j)^\dagger (B_j-\nu_j).
\end{equation*}

\begin{theorem}
    Assume that $\bm{A}$ in $\Pi_{i=1}^{d_1}\mathcal{M}_n(\C)$ is a $d_1$-tuple of Hermitian matrices, $\bm{B},\bm{C}$ in $\Pi_{i=1}^{d_2}\mathcal{M}_n(\C)$ are $d_2$-tuples of non-Hermitian matrices then we have that if
    \begin{equation*}
        \left\lVert Z^{-1}\left( \sum_{j=1}^{d_2}(B_j-\nu_j)^\dagger C_j + C_j^\dagger (B_j-\nu_j) +C_j^\dagger C_j\right)Z^{-1}\right\rVert\leq K
    \end{equation*}
    for some $K<1$, then
    \begin{equation*}
        (1-K)^{1/2}\mu_{(\bm{\lambda},\bm{\nu})}^\text{RQ}\left(\bm{A},\bm{B}\right) \leq \mu_{(\bm{\lambda},\bm{\nu})}^\text{RQ}\left(\bm{A},\bm{B}+\bm{C}\right) \leq (1+K)^{1/2}\mu_{(\bm{\lambda},\bm{\nu})}^\text{RQ}\left(\bm{A},\bm{B}\right).
    \end{equation*}
\end{theorem}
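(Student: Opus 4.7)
The plan is to express $RQ_{(\bm{\lambda},\bm{\nu})}(\bm{A},\bm{B}+\bm{C})$ as $RQ_{(\bm{\lambda},\bm{\nu})}(\bm{A},\bm{B})+D$ for an explicit Hermitian perturbation $D$, bound $D$ relative to $Z^2$ via a congruence by $Z^{-1}$, upgrade that bound to a bound relative to the full operator $RQ_{(\bm{\lambda},\bm{\nu})}(\bm{A},\bm{B})$ using $Y^2\geq 0$, and then pass from the resulting positive semidefinite bracketing to a bracketing of the smallest eigenvalues and hence of $\mu^{\text{RQ}}_{(\bm{\lambda},\bm{\nu})}$.

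Concretely, I would first expand the cross terms in $\sum_j (B_j+C_j-\nu_j)^\dagger(B_j+C_j-\nu_j)$ to obtain
\begin{equation*}
    RQ_{(\bm{\lambda},\bm{\nu})}(\bm{A},\bm{B}+\bm{C}) - RQ_{(\bm{\lambda},\bm{\nu})}(\bm{A},\bm{B}) = \sum_{j=1}^{d_2}\Big[(B_j-\nu_j)^\dagger C_j + C_j^\dagger(B_j-\nu_j) + C_j^\dagger C_j\Big],
\end{equation*}
and call the right-hand side $D$; it is Hermitian by construction. Because the $A_i$ commute, are Hermitian, and $\bm{\lambda}\in\R^{d_1}$, the operator $Z^2$ is Hermitian and positive semidefinite, and the existence of $Z^{-1}$ in the hypothesis lets me take the Hermitian principal square root $Z$ and have it invertible. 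For any unit vector $\bm{\psi}$, setting $\bm{\phi}=Z\bm{\psi}$ and using $Z^\dagger=Z$ gives
\begin{equation*}
    \langle D\bm{\psi},\bm{\psi}\rangle = \langle Z^{-1}DZ^{-1}\bm{\phi},\bm{\phi}\rangle,
\end{equation*}
so the hypothesis yields $|\langle D\bm{\psi},\bm{\psi}\rangle| \leq K\|\bm{\phi}\|^2 = K\langle Z^2\bm{\psi},\bm{\psi}\rangle \leq K\langle RQ_{(\bm{\lambda},\bm{\nu})}(\bm{A},\bm{B})\bm{\psi},\bm{\psi}\rangle$, where the last step uses $RQ_{(\bm{\lambda},\bm{\nu})}(\bm{A},\bm{B}) = Z^2+Y^2\geq Z^2$ in the semidefinite order.

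This form-level bound is equivalent to the positive semidefinite sandwich
\begin{equation*}
    (1-K)\,RQ_{(\bm{\lambda},\bm{\nu})}(\bm{A},\bm{B}) \leq RQ_{(\bm{\lambda},\bm{\nu})}(\bm{A},\bm{B}+\bm{C}) \leq (1+K)\,RQ_{(\bm{\lambda},\bm{\nu})}(\bm{A},\bm{B}),
\end{equation*}
and because $K<1$ all three operators remain positive semidefinite. Applying the Courant-Fischer characterization $\sigma_{\min}(M) = \min_{\|\bm{\psi}\|=1}\langle M\bm{\psi},\bm{\psi}\rangle$ for positive semidefinite $M$ to each side, and then using the monotonicity of $\sqrt{\cdot}$ on $[0,\infty)$, produces the claimed two-sided bound on $\mu^{\text{RQ}}_{(\bm{\lambda},\bm{\nu})}$. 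I expect the only conceptually interesting step to be the \emph{upgrading} remark: the congruence by $Z^{-1}$ naturally supplies a bound of $D$ against $K Z^2$, and the one-line observation that $Y^2$ is positive semidefinite automatically promotes this to a bound against $K\,RQ_{(\bm{\lambda},\bm{\nu})}(\bm{A},\bm{B})$. Once that is noted, the rest reduces to routine Rayleigh-quotient bookkeeping and the standard fact that an operator inequality between positive semidefinite matrices implies the same inequality between their smallest eigenvalues.
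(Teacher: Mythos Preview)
Your proposal is correct and follows essentially the same route as the paper: identify the Hermitian perturbation $D$, use the hypothesis to sandwich $D$ between $\pm KZ^2$, upgrade via $Y^2\geq 0$ to a sandwich between multiples of $RQ_{(\bm{\lambda},\bm{\nu})}(\bm{A},\bm{B})$, and read off the eigenvalue bounds. Your final step via Courant--Fischer is in fact slightly cleaner than the paper's, which passes through inverses and norms and so implicitly assumes the quadratic composite operator is invertible.
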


\begin{proof}
    After direct computation we have that
    \begin{equation*}
        -KZ^2\leq \left( \sum_{j=1}^{d_2}(B_j-\nu_j)^\dagger C_j + C_j^\dagger (B_j-\nu_j) +C_j^\dagger C_j\right) \leq KZ^2.
    \end{equation*}
    Thus,
    \begin{align*}
        \text{RQ}_{(\bm{\lambda},\bm{\nu})}\left(\bm{A},\bm{B}+\bm{C}\right) &\leq Z^2 + Y^2 + KZ^2 \\
        &= (1+K)\left(Z^2 + Y^2\right) \\
        &= (1+K)\text{RQ}_{(\bm{\lambda},\bm{\nu})}\left(\bm{A},\bm{B}\right) \\
        \text{RQ}_{(\bm{\lambda},\bm{\nu})}\left(\bm{A},\bm{B}+\bm{C}\right) &\geq Z^2 + Y^2 - KZ^2 \\
        &\geq (1-K)\left(Z^2 + Y^2\right) \\
        &= (1-K)\text{RQ}_{(\bm{\lambda},\bm{\nu})}\left(\bm{A},\bm{B}\right).
    \end{align*}
    From here we have that
    \begin{equation*}
        (1+K)^{-1}\text{RQ}_{(\bm{\lambda},\bm{\nu})}\left(\bm{A},\bm{B}\right)^{-1} \leq \text{RQ}_{(\bm{\lambda},\bm{\nu})}\left(\bm{A},\bm{B}+\bm{C}\right)^{-1} \leq (1-K)^{-1}\text{RQ}_{(\bm{\lambda},\bm{\nu})}\left(\bm{A},\bm{B}\right)^{-1}.
    \end{equation*}
    Then since the quadratic composite operator is Hermitian Positive Semidefinite we have that
    \begin{equation*}
        (1+K)^{-1}\left\lVert\text{RQ}_{(\bm{\lambda},\bm{\nu})}\left(\bm{A},\bm{B}\right)^{-1} \right\rVert \leq \left\lVert\text{RQ}_{(\bm{\lambda},\bm{\nu})}\left(\bm{A},\bm{B}+\bm{C}\right)^{-1} \right\rVert \leq (1-K)^{-1}\left\lVert\text{RQ}_{(\bm{\lambda},\bm{\nu})}\left(\bm{A},\bm{B}\right)^{-1}\right\rVert.
    \end{equation*}

    Finally,
    \begin{equation*}
        (1-K)^{\frac{1}{2}}\left\lVert\text{RQ}_{(\bm{\lambda},\bm{\nu})}\left(\bm{A},\bm{B}\right)^{-1} \right\rVert^{-\frac{1}{2}} \leq \left\lVert\text{RQ}_{(\bm{\lambda},\bm{\nu})}\left(\bm{A},\bm{B}+\bm{C}\right)^{-1} \right\rVert^{-\frac{1}{2}} \leq (1+K)^{\frac{1}{2}}\left\lVert\text{RQ}_{(\bm{\lambda},\bm{\nu})}\left(\bm{A},\bm{B}\right)^{-1}\right\rVert^{-\frac{1}{2}}.
    \end{equation*}
\end{proof}

Note that in the proof we make extensive use that for Hermitian matrices $A,B$, $A\geq B$ if and only if $A-B$ is positive semidefinite.
This generates an ordering on the set of positive semidefinite matrices.

\begin{remark}\label{remark:locality_left_q_gap}
    We can also show that the left quadratic gap function is local by the same argument.
\end{remark}

\begin{corollary}
        Assume that $\bm{A}$ in $\Pi_{i=1}^{d_1}\mathcal{M}_n(\C)$ is a $d_1$-tuple of Hermitian matrices, $\bm{B},\bm{C}$ in $\Pi_{i=1}^{d_2}\mathcal{M}_n(\C)$ are $d_2$-tuples of non-Hermitian matrices then we have that if
    \begin{equation*}
        \left\lVert Z^{-1}\left( \sum_{j=1}^{d_2}(B_j-\nu_j)^\dagger C_j + C_j^\dagger (B_j-\nu_j) +C_j^\dagger C_j\right)Z^{-1}\right\rVert\leq K
    \end{equation*}
    for some $K<1$, then
    \begin{equation*}
        (1-K)^{1/2}\mu_{(\bm{\lambda},\bm{\nu})}^\text{Q}\left(\bm{A},\bm{B}\right) \leq \mu_{(\bm{\lambda},\bm{\nu})}^\text{Q}\left(\bm{A},\bm{B}+\bm{C}\right) \leq (1+K)^{1/2}\mu_{(\bm{\lambda},\bm{\nu})}^\text{Q}\left(\bm{A},\bm{B}\right).
    \end{equation*}
\end{corollary}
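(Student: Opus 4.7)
The starting observation is the block decomposition built into the definition of the quadratic composite operator: since $Q_{(\bm{\lambda},\bm{\nu})}(\bm{A},\bm{B})$ is block diagonal with blocks $RQ_{(\bm{\lambda},\bm{\nu})}(\bm{A},\bm{B})$ and $LQ_{(\bm{\lambda},\bm{\nu})}(\bm{A},\bm{B})$, the earlier corollary identifying $\sigma_{\min}(Q) = \min\{\sigma_{\min}(RQ),\sigma_{\min}(LQ)\}$ gives
\begin{equation*}
\mu_{(\bm{\lambda},\bm{\nu})}^{\text{Q}}(\bm{A},\bm{B}) \;=\; \min\bigl\{\mu_{(\bm{\lambda},\bm{\nu})}^{\text{RQ}}(\bm{A},\bm{B}),\ \mu_{(\bm{\lambda},\bm{\nu})}^{\text{LQ}}(\bm{A},\bm{B})\bigr\}.
\end{equation*}
Thus the entire argument reduces to (i) getting the two-sided scaling estimate separately for the right and left quadratic gaps, and (ii) observing that taking a pointwise $\min$ preserves two-sided multiplicative bounds.

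For step (i), the theorem yields $(1-K)^{1/2}\mu^{\text{RQ}}(\bm{A},\bm{B}) \leq \mu^{\text{RQ}}(\bm{A},\bm{B}+\bm{C}) \leq (1+K)^{1/2}\mu^{\text{RQ}}(\bm{A},\bm{B})$ directly from the hypothesis. Remark \ref{remark:locality_left_q_gap} asserts the analogous bound for $\mu^{\text{LQ}}$ by the same argument, the only change being that the expansion $(B_j+C_j-\nu_j)(B_j+C_j-\nu_j)^\dagger - (B_j-\nu_j)(B_j-\nu_j)^\dagger$ produces the conjugate cross terms $(B_j-\nu_j)C_j^\dagger + C_j(B_j-\nu_j)^\dagger + C_jC_j^\dagger$ against the alternative $\tilde{Z}^2 = Z^2 + \sum_j (B_j-\nu_j)(B_j-\nu_j)^\dagger - \sum_j (B_j-\nu_j)^\dagger(B_j-\nu_j)$. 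Under the stated $K$-bound (read symmetrically across RQ and LQ, as the corollary intends), this yields $(1-K)^{1/2}\mu^{\text{LQ}}(\bm{A},\bm{B}) \leq \mu^{\text{LQ}}(\bm{A},\bm{B}+\bm{C}) \leq (1+K)^{1/2}\mu^{\text{LQ}}(\bm{A},\bm{B})$.

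For step (ii), given positive scalars with $(1-K)^{1/2}a_1 \leq a_2 \leq (1+K)^{1/2}a_1$ and $(1-K)^{1/2}b_1 \leq b_2 \leq (1+K)^{1/2}b_1$, the upper bound $\min(a_2,b_2) \leq (1+K)^{1/2}\min(a_1,b_1)$ follows by applying the individual upper bounds to whichever of $a_1,b_1$ realizes the $\min$, and the lower bound follows symmetrically from whichever of $a_2,b_2$ is smaller. Specializing to $a_i = \mu^{\text{RQ}}$, $b_i = \mu^{\text{LQ}}$ completes the proof.

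The one delicate point — and the main obstacle to a fully rigorous write-up — is that the hypothesis as stated is in its natural \emph{right} form, involving $(B_j-\nu_j)^\dagger C_j + C_j^\dagger(B_j-\nu_j) + C_j^\dagger C_j$ conjugated by $Z^{-1}$, whereas the left quadratic analog asks to control the companion expression $(B_j-\nu_j)C_j^\dagger + C_j(B_j-\nu_j)^\dagger + C_jC_j^\dagger$. The cleanest route is to read the corollary as implicitly requiring both the stated bound and its conjugate (symmetric) counterpart under $B_j \leftrightarrow B_j^\dagger$; alternatively, one could just state the corollary for the $\min$ of the two perturbation bounds and get the same conclusion with $K$ replaced by the larger of the two $K$'s. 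Everything else is an elementary transfer from the theorem and remark via the block decomposition.
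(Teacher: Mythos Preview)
Your proof is correct and follows exactly the paper's approach: invoke the theorem for $\mu^{\text{RQ}}$, the remark for $\mu^{\text{LQ}}$, and combine via the block decomposition $\mu^{\text{Q}} = \min\{\mu^{\text{RQ}},\mu^{\text{LQ}}\}$ together with the fact that $\min$ preserves two-sided multiplicative bounds. Your caveat about the left-side hypothesis needing the conjugate perturbation expression is well-observed --- the paper's one-line proof (``combine the Theorem, the Remark, and the fact that the minimum function preserves inequalities'') glosses over this same asymmetry.
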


\begin{proof}
    Combine Theorem \ref{remark:locality_left_q_gap}, Remark \ref{remark:locality_left_q_gap} and the fact that the minimum function preserves inequalities.
\end{proof}

\section{Applications to a point gaped system}\label{sec:applications_to_point_gapped_systems}

In order to encounter non-Hermiticity in physical systems one need not go very far.
Indeed we consider a two level system such as the one mentioned in \cite{tls_paper}.
The two level system gives rise to a non-Hermitian Hamiltonian operator.
In particular we have a point gap system.
We will provide some numerical justification for why the Clifford radial gap might be the best one to use over the established Clifford linear gap.
To illustrate, we first consider a general case with shifted energy as follows:
\begin{equation*}
    X = \begin{bmatrix} -1 & 0 \\ 0 & 1 \end{bmatrix} \mbox{ and } H = \begin{bmatrix} \Delta E + i\Delta\gamma & c \\ c & 0 \end{bmatrix}
\end{equation*}
with $\Delta E, \Delta\gamma, c \in \R$.
By doing a traditional eigenenergy and eigenvector analysis we have that the eigenenergy and eigenvector pair are
\begin{equation*}
    \lambda_{\pm}(H) = \frac{(\Delta E + i\Delta\gamma)\pm\sqrt{(\Delta E + i\Delta\gamma)^2+4c^2}}{2} \hspace{0.1in},\hspace{0.1in}
    \bm{\psi}_{\pm} = \begin{bmatrix} \lambda_{\pm}(H)/c & 1 \end{bmatrix}^T
\end{equation*}

If we consider the case when $\Delta E=0$ then we have an exceptional point when $c=\mp \Delta\gamma/2$.
In this scenario we have the eigenenergy and eigenvector pair
\begin{equation*}
    \lambda_{\pm} = \frac{i\Delta\gamma}{2} \hspace{0.1in},\hspace{0.1in} \bm{\psi} = \begin{bmatrix} i\Delta\gamma/c & 1 \end{bmatrix}^T
\end{equation*}
In other words the eigenspace of $H$ becomes degenerate!
The quadratic gap for this setup can get complicated, so we compute one of the exceptional point cases when $1=c=\Delta\gamma/2$.
We will fix the position probe site of $Q$ to be $x=0$ so that the non-Hermitian Clifford and quadratic composite operators become
\begin{align*}
    L_{(0,E)}(X,H) &= \begin{bmatrix}
        2i - E & 1 & -1 & 0 \\
        1 & -E & 0 & 1 \\
        -1 & 0 &-2i+E^* & -1 \\
        0 & 1 & -1 & +E^* \\
    \end{bmatrix} \\
    Q_{(0,E)}(X,H) &= \begin{bmatrix}
        \left\vert E\right\vert^2 + 6  + 4 i\textrm{Im}(E) & -2\textrm{Re}(E) -2i & & \\
        -2\textrm{Re}(E) + 2i & 2 + \left\vert E\right\vert^2 & & \\
        & & \left\vert E\right\vert^2 + 6 + 4 i\textrm{Im}(E) & -2\textrm{Re}(E) +2i \\
        & &-2\textrm{Re}(E) - 2i & 2 + \left\vert E\right\vert^2
    \end{bmatrix}
\end{align*}
In Table \ref{tbl:tls_gaps_and_diff} we compute the Clifford linear gap, radial gap, the quadratic gap, as well as the differences between each gap.
We can see that the Clifford radial gap (Figure \ref{fig:tls_radial_gap}) detects the exceptional point at probe site $E=i$ and agrees well with the quadratic gap (Figure \ref{fig:tls_quadratic_gap}) near the exceptional point well. This can be seen by looking at the difference of the two gaps in Figure \ref{fig:tls_radial-quadratic_gap}.
On the other hand, while the Clifford linear gap agrees with the quadratic gap near the exceptional point (Figure \ref{fig:tls_linear-quadratic_gap}), by itself it does not detect well the exceptional point (Figure \ref{fig:tls_linear_gap}).
This is the motivation for defining the \emph{Clifford radial gap}.

\begin{table}[ht!]
    \centering
    \begin{tabular}{ | c | c | c | }
    \hline \hline

    \begin{minipage}{0.3\textwidth}
    \captionof{figure}{\\$\bar{\mu}_{(0,E)}^{\text{C}}\left(X,H\right)$}\label{fig:tls_linear_gap}
    \includegraphics[width=\textwidth]{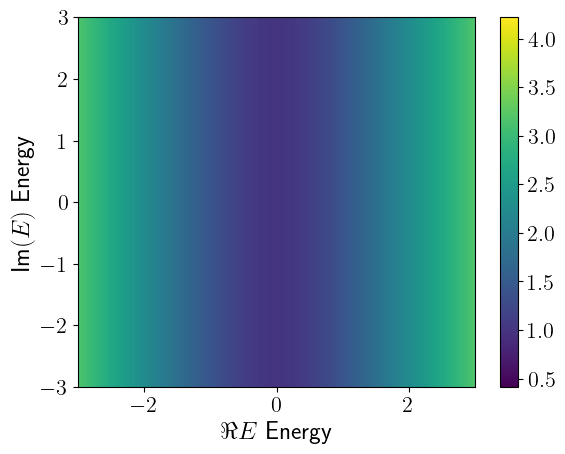}
    \end{minipage} &
    \begin{minipage}{0.3\textwidth}
    \captionof{figure}{\\$\dot{\mu}_{(0,E)}^{\text{C}}\left(X,H\right)$}\label{fig:tls_radial_gap}
    \includegraphics[width=\textwidth]{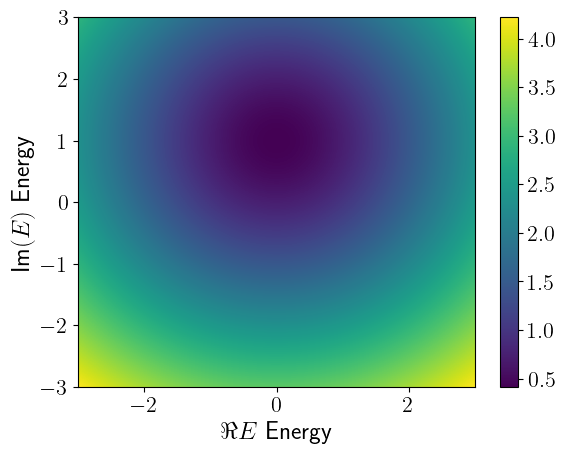}
    \end{minipage} &  \begin{minipage}{0.3\textwidth}
    \captionof{figure}{\\$\mu_{(0,E)}^{\text{Q}}\left(X,H\right)$}\label{fig:tls_quadratic_gap}
    \includegraphics[width=\textwidth]{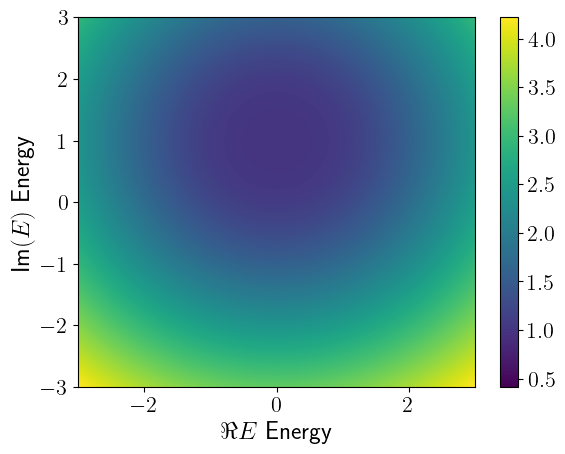}
    \end{minipage} \\ \hline \hline

    \begin{minipage}{0.3\textwidth}
    \captionof{figure}{\\$\left\vert \bar{\mu}_{(0,E)}^{\text{C}}\left(X,H\right) - \dot{\mu}_{(0,E)}^{\text{C}}\left(X,H\right) \right\vert$}\label{fig:tls_linear-radial_gap}
    \includegraphics[width=\textwidth]{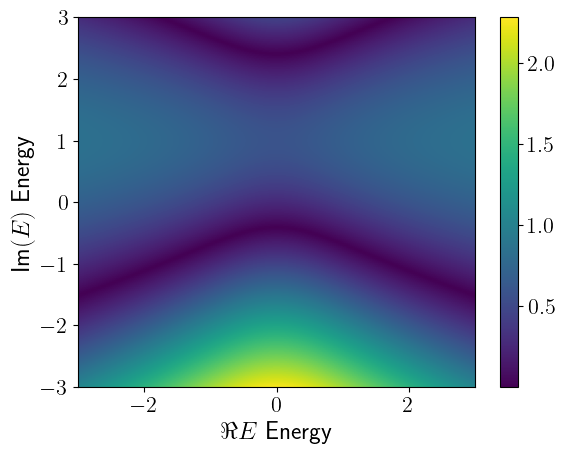}
    \end{minipage} &
    \begin{minipage}{0.3\textwidth}
    \captionof{figure}{\\$\left\vert \bar{\mu}_{(0,E)}^{\text{C}}\left(X,H\right) - \mu_{(0,E)}^{\text{Q}}\left(X,H\right) \right\vert$}\label{fig:tls_linear-quadratic_gap}
    \includegraphics[width=\textwidth]{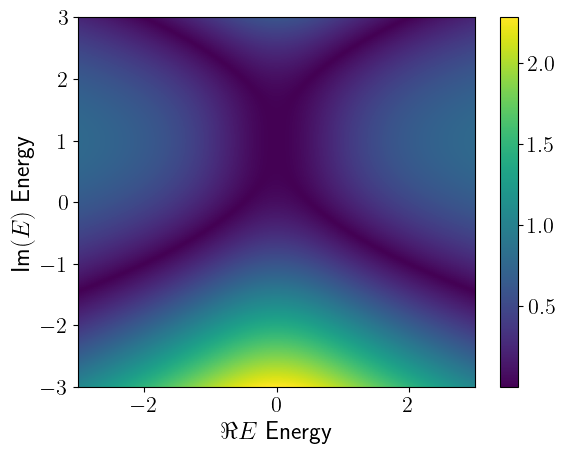}
    \end{minipage} &  \begin{minipage}{0.3\textwidth}
    \captionof{figure}{\\$\left\vert \dot{\mu}_{(0,E)}^{\text{C}}\left(X,H\right) - \mu_{(0,E)}^{\text{Q}}\left(X,H\right) \right\vert$}\label{fig:tls_radial-quadratic_gap}
    \includegraphics[width=\textwidth]{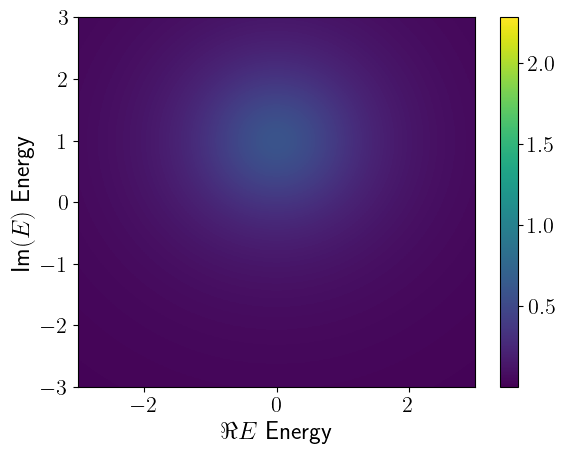}
    \end{minipage} \\ \hline \hline
    \end{tabular}
    \caption{Shown are the Clifford linear and radial gaps (Figure \ref{fig:tls_linear_gap} and \ref{fig:tls_radial_gap}), quadratic gap (Figure \ref{fig:tls_quadratic_gap}) as well as the difference between the gaps in Figure \ref{fig:tls_linear-radial_gap} \ref{fig:tls_linear-quadratic_gap}, \ref{fig:tls_radial-quadratic_gap} of a Two Level System with $c=1$,$\Delta\gamma = 2$ and position probe site $x=0$.} \label{tbl:tls_gaps_and_diff}
\end{table}

\section{Applications to a line gapped system}\label{sec:HHS_Lossy_Modes}

Now we will provide an example of a physical system where the non-Hermiticity comes from a non-Hermitian Hamiltonian which is line gapped.
Consider the generalized Haldane model over a bi-partite honeycomb lattice given in \cite{haldane} with the following tight binding model hamiltonian;

\begin{align*}
    H &= \sum_{n_A,n_B} \left( (M-i\mu)|n_A\rangle \langle n_A| - (M+i\mu)|n_B\rangle \langle n_B| \right) - t\sum_{\langle n_A,m_B \rangle} \left( |n_A\rangle \langle m_B| - |m_B\rangle \langle n_A| \right) \\
    &\qquad- t_c\sum_{\alpha=A,B}\sum_{\langle\langle n_\alpha,m_\alpha \rangle\rangle} \left( e^{i \phi(n_\alpha,m_\alpha)}| n_\alpha\rangle \langle m_\alpha | + e^{-i \phi(n_\alpha,m_\alpha)}| m_\alpha\rangle \langle n_\alpha |\right).
\end{align*}

The first sum runs over all the lattice sites with $A$ and $B$ lattices having on-site energies $\pm M$. The second sum is kinetic energy with nearest neighbor coupling coefficient $t$.
Lastly we have that the third sum is over next-nearest-neighbor pairs and has a direction-dependent phase factor $e^{i \phi(n_\alpha,m_\alpha)}$ a geometrically determined sign.
When $\mu=0$ in the Hamiltonian, we have no lossy sites and the Hamiltonian is Hermitian. By making the model lossy, the Hamiltonian becomes non-Hermitian.

Our example tight binding model will be comprised of an inner topological insulator surrounded by a ring of trivial insulator which is then surrounded by a ring of a lossy trivial insulator, we call it the Haldane Heterostructure as in \cite{cerjan_koekebier_Schulz_Baldes}.
Again the loss comes from having on site energies being $\pm M - i\mu$ on the lossy Trivial insulator sites.
The diagram for the tight binding model and unit cell is shown in Table \ref{tbl:hhs}.
Furthermore, we will focus our attention on fixing the parameters for the tight binding model to the ones listed in Table \ref{tbl:hhs_parameters}.

\begin{table}[H]
    \centering
    \begin{tabular}{ | c | c | }
    \hline \hline
    \begin{minipage}{0.4\textwidth}
    \captionof{figure}{Haldane Heterostructure Overview}\label{fig:hhs_overview}
    \includegraphics[width=0.95\textwidth]{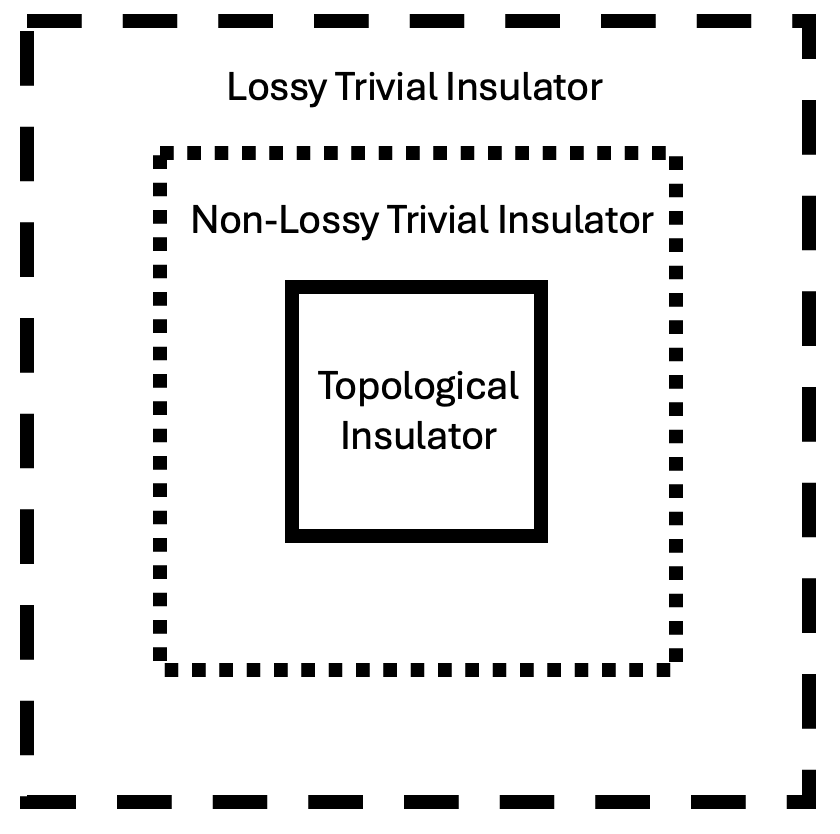}
    \end{minipage} &
    \begin{minipage}{0.4\textwidth}
    \captionof{figure}{Haldane Hetero Structure Cell}\label{fig:hhs_cell}
    \includegraphics[width=\textwidth]{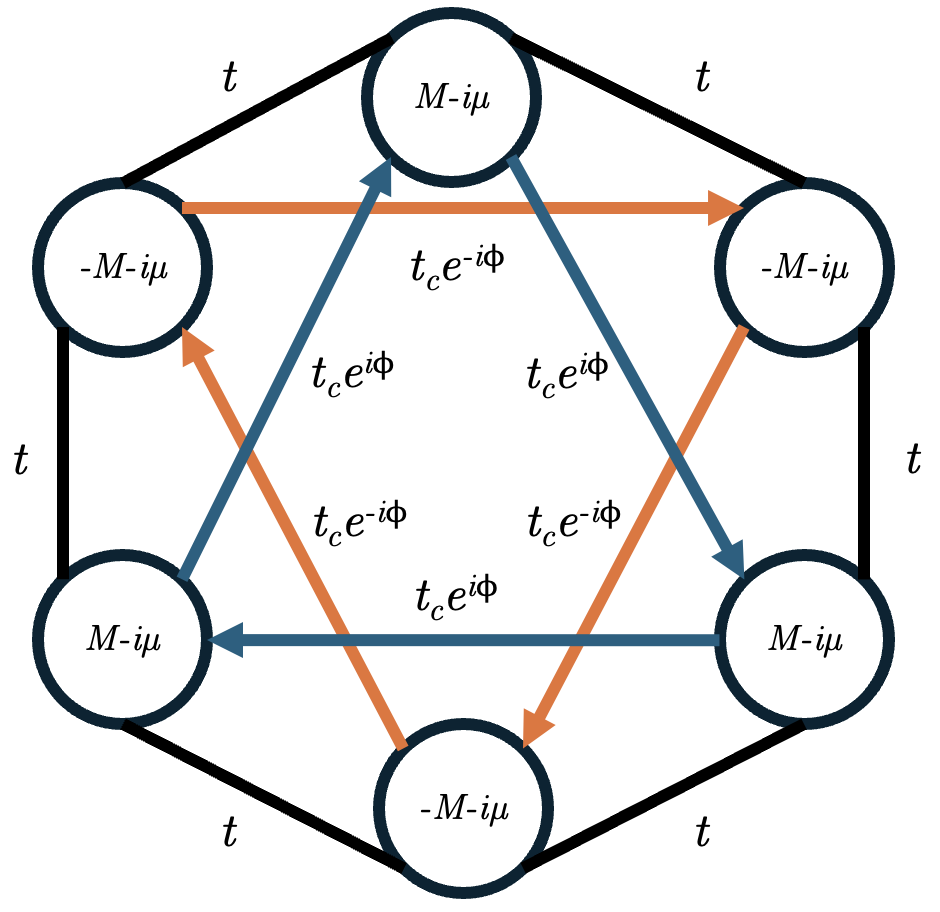}
    \end{minipage}\\ \hline \hline
    \end{tabular}
    \caption{Shown in Figure \ref{fig:hhs_overview} is the Haldane Heterostructure overview. Figure \ref{fig:hhs_cell} shows the general cell for the Haldane Heterostructure} \label{tbl:hhs}
\end{table}

\begin{table}[H]
    \centering
    \begin{tabular}{ c || c | c | c | c | c}
    Layer & $M$ & $\mu$ & $t$ & $t_c$ & $\phi$ \\ \hline \hline
    Trivial Insulator (Lossy) & $0.5\sqrt{3}$ & $0.2$ & $1$ & $0$ & N/A \\
    Trivial Insulator (Non-Lossy) & $0.5\sqrt{3}$ & $0$ & $1$ & $0$ & N/A \\
    Topological Insulator & $0$ & $0$ & $1$ & $0.5$ & $\frac{\pi}{2}$
    \end{tabular}
    \caption{Parameters for Heterostructure}\label{tbl:hhs_parameters}
\end{table}

In both the spectral localizer and quadratic composite operator we will use $\kappa = 0.5$. 
For the two positions operators we will take unit length between `A' sites.
They will end up being diagonal matrices as shown in Equation (\ref{eq:hhs_positions}).
\begin{equation}\label{eq:hhs_positions}
    X = \text{Diag}(x_1,...,x_i,...,x_n),
    \quad
    Y = 
    \text{Diag}(y_1,...,y_n)
\end{equation}
The Hamiltonian will be non-Hermitian, and sparse for large system size. In some basis the diagonal entries will consist of complex ($\pm M-i\mu$), purely real $\pm M$ and zeros as seen in Equation \ref{eq:hhs_hamiltonian}. In Equation \ref{eq:hhs_hamiltonian}, $C$ is a matrix with zeros on the diagonal.
\begin{equation}\label{eq:hhs_hamiltonian}
    H = \text{Diag}(M-i\mu, ..., -M-i\mu, ...,M,...,-M,...,0,...,0) + C
\end{equation}
The 2d non-Hermitian spectral localizer and the quadratic composite operator are as follows
\begin{align*}
    L^{\textup{2d}}_{(\kappa x,\kappa y,E)}(\kappa X,\kappa Y,H) &= \kappa(X-x)\otimes \sigma_x + \kappa(Y-y)\otimes \sigma_y + (H-E)\otimes \begin{bmatrix} 1 & \\ & 0 \end{bmatrix} + (H-E)^\dagger \otimes \begin{bmatrix} 0 & \\ & -1 \end{bmatrix}\\
    Q^{\textup{2d}}_{(\kappa x,\kappa y,E)}(\kappa X,\kappa Y,H) &= \left[\kappa\left(X-x\right)^2 + \kappa\left(Y-y\right)^2 \right] \otimes \begin{bmatrix}1 & \\ & 1 \end{bmatrix} \\
    &\qquad + (H-E)^{\dagger}(H-E) \otimes \begin{bmatrix} 1 & \\ & 0 \end{bmatrix} + (H-E)(H-E)^\dagger \otimes \begin{bmatrix} 0 & \\ & -1 \end{bmatrix}.
\end{align*}
The pseudospectra tell us where we should expect to find approximate joint states with the corresponding imaginary probe site $E$ telling us how lossy the state is.
We demonstrate via computation the Clifford linear and quadratic gap functions with varying position probe sites and fixed energy probe site in Table \ref{tbl:hhs_gaps_and_diff}.
In the same table we also demonstrate the differences between the various gap functions.

\begin{table}[ht!]
    \centering
    \begin{tabular}{ | c | c | c | }
    \hline \hline

    \begin{minipage}{0.3\textwidth}
    \captionof{figure}{\\$\bar{\mu}_{(\kappa\bm{x},0)}^{\text{C}}\left(\kappa\bm{X},H\right)$}\label{fig:hhs_linear_gap}
    \includegraphics[width=\textwidth]{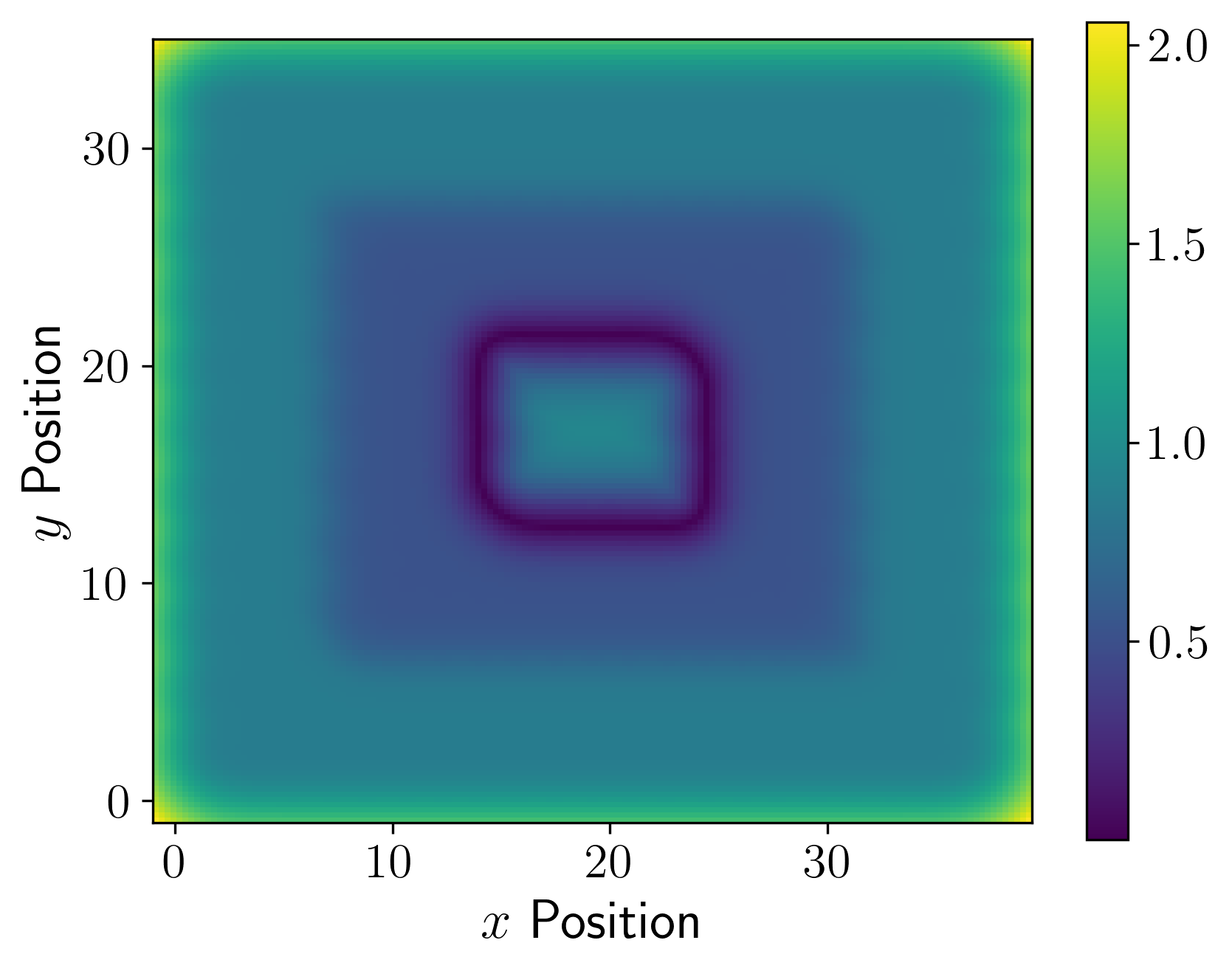}
    \end{minipage} &
    \begin{minipage}{0.3\textwidth}
    \captionof{figure}{\\$\mu_{(\kappa\bm{x},0)}^{\text{RQ}}\left(\kappa\bm{X},H\right)$}\label{fig:hhs_quadratic_gap}
    \includegraphics[width=\textwidth]{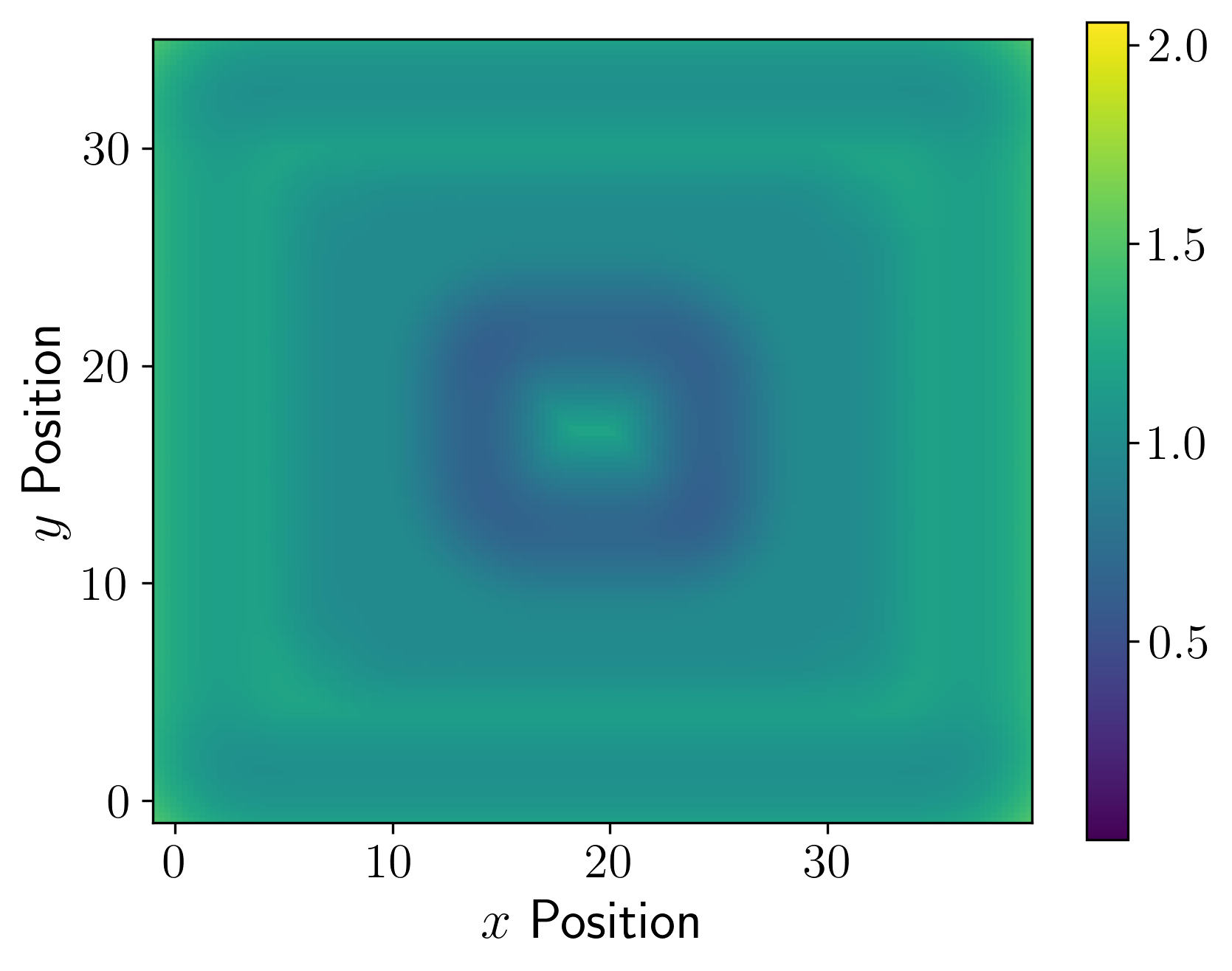}
    \end{minipage} &  \begin{minipage}{0.3\textwidth}
    \captionof{figure}{\\$\left\vert\bar{\mu}_{(\kappa\bm{x},0)}^{\text{C}}\left(\kappa\bm{X},H\right) - \mu_{(\kappa\bm{x},0)}^{\text{RQ}}\left(\kappa\bm{X},H\right)\right\vert$}\label{fig:hhs_gap_diff}
    \includegraphics[width=\textwidth]{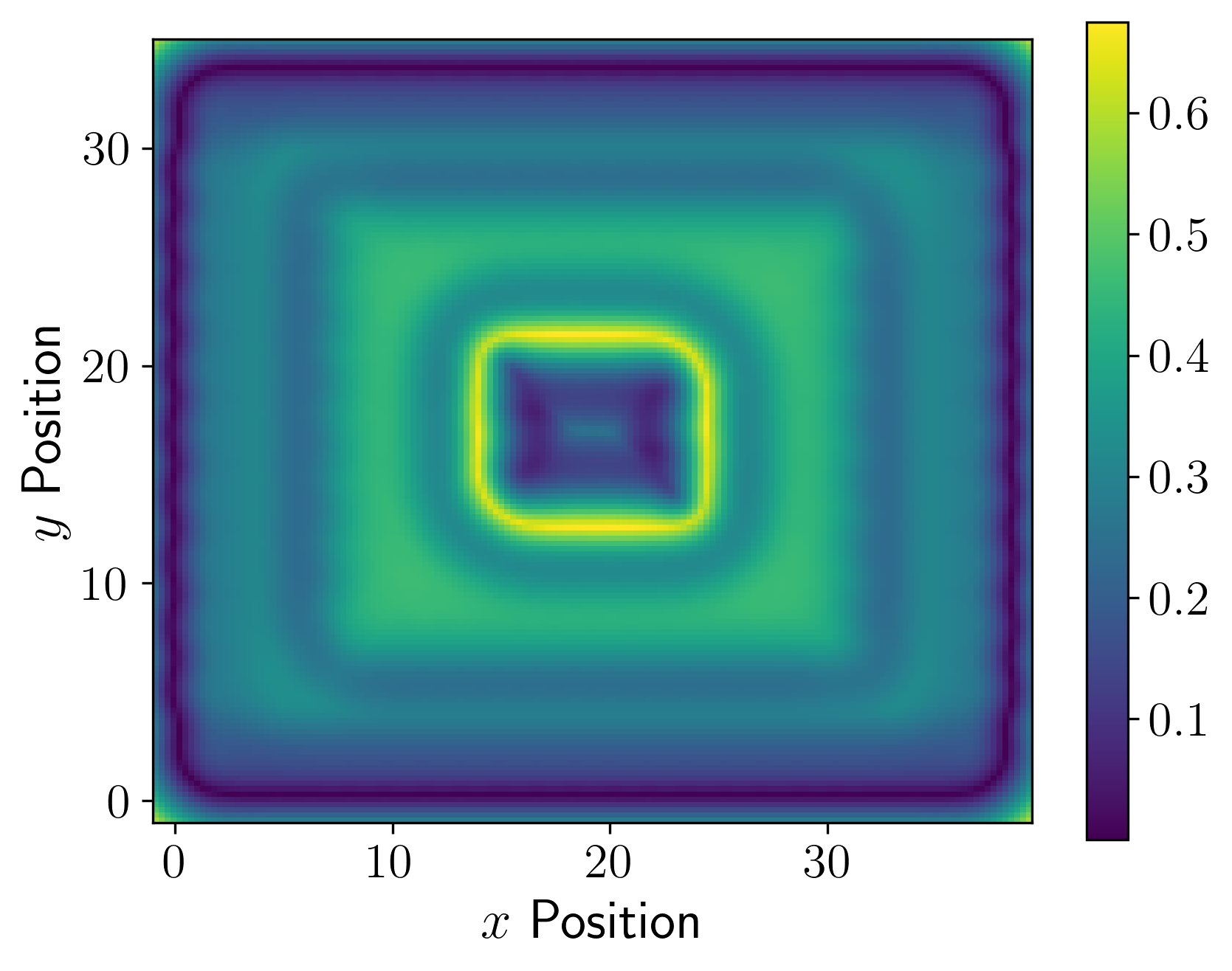}
    \end{minipage} \\
    \hline \hline

    \begin{minipage}{0.3\textwidth}
    \captionof{figure}{\\$\bar{\mu}_{(\kappa\bm{x},1)}^{\text{C}}\left(\kappa\bm{X},H\right)$}\label{fig:hhs_linear_gap_e_1}
    \includegraphics[width=\textwidth]{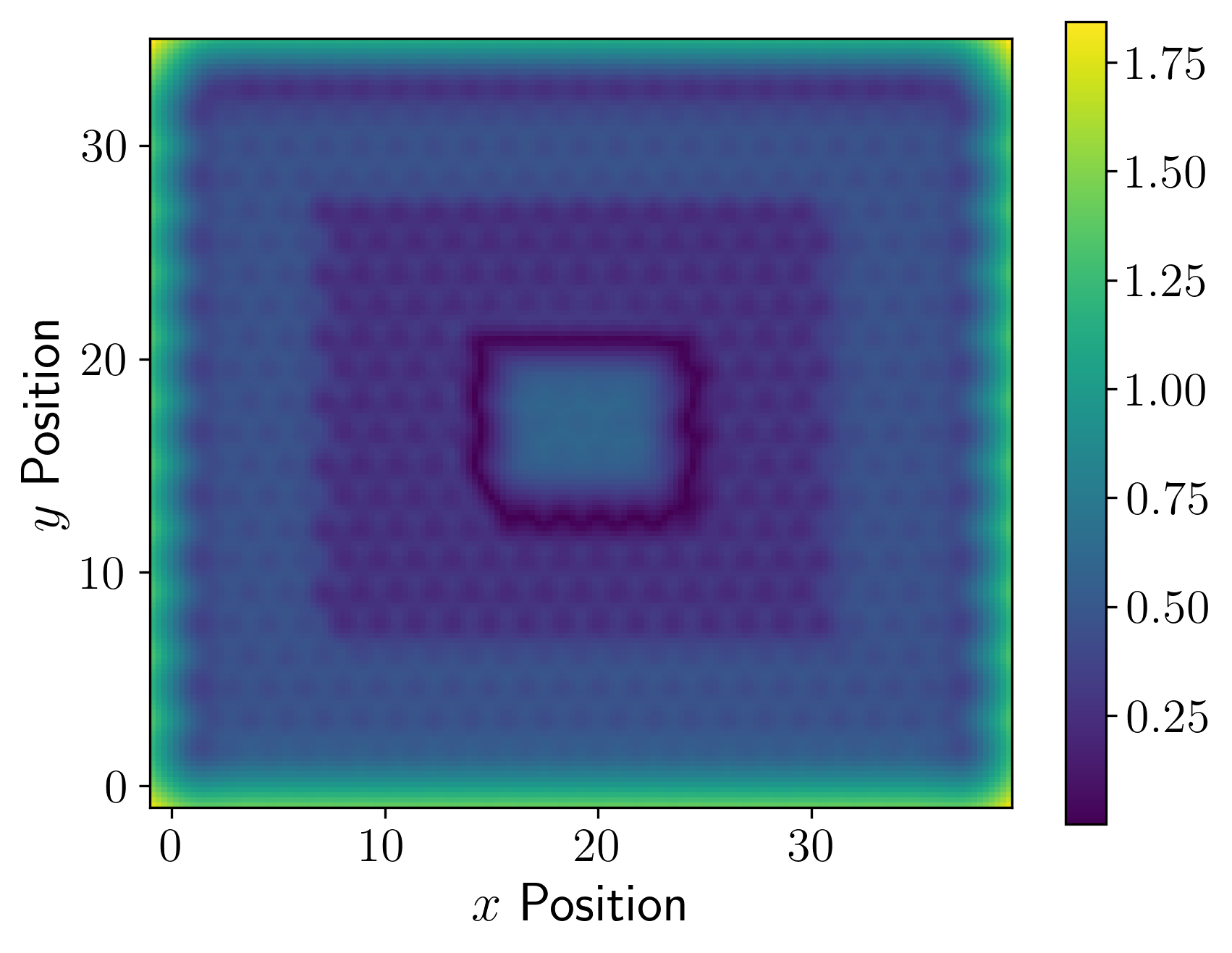}
    \end{minipage} &
    \begin{minipage}{0.3\textwidth}
    \captionof{figure}{\\$\mu_{(\kappa\bm{x},1)}^{\text{RQ}}\left(\kappa\bm{X},H\right)$}\label{fig:hhs_quadratic_gap_e_1}
    \includegraphics[width=\textwidth]{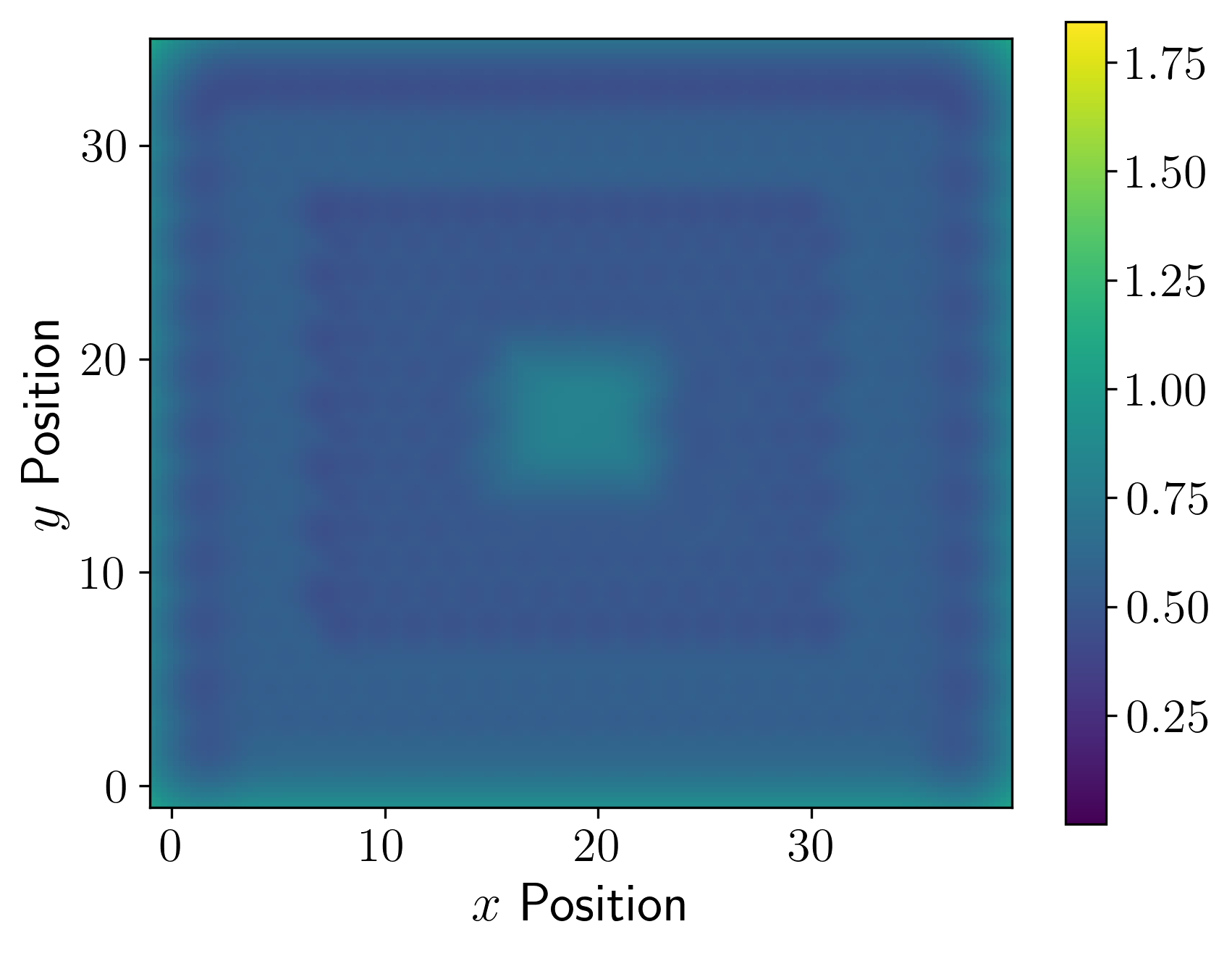}
    \end{minipage} &  \begin{minipage}{0.3\textwidth}
    \captionof{figure}{\\$\left\vert\bar{\mu}_{(\kappa\bm{x},1)}^{\text{C}}\left(\kappa\bm{X},H\right) - \mu_{(\bm{x},1)}^{\text{RQ}}\left(\kappa\bm{X},H\right)\right\vert$}\label{fig:hhs_gap_diff_e_1}
    \includegraphics[width=\textwidth]{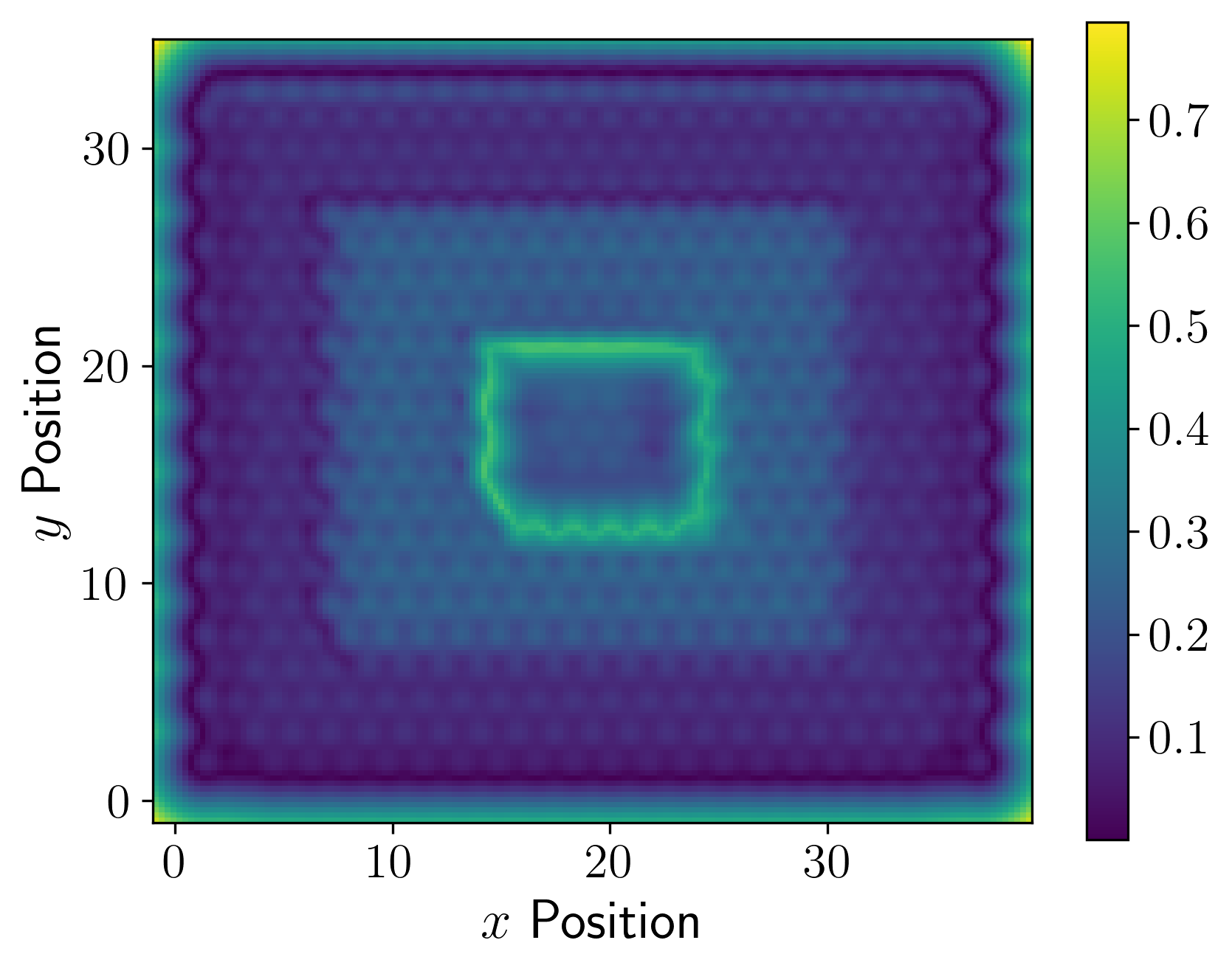}
    \end{minipage} \\
    \hline \hline

    \begin{minipage}{0.3\textwidth}
    \captionof{figure}{\\$\bar{\mu}_{(\kappa\bm{x},i)}^{\text{C}}\left(\kappa\bm{X},H\right)$}\label{fig:hhs_linear_gap_e_i}
    \includegraphics[width=\textwidth]{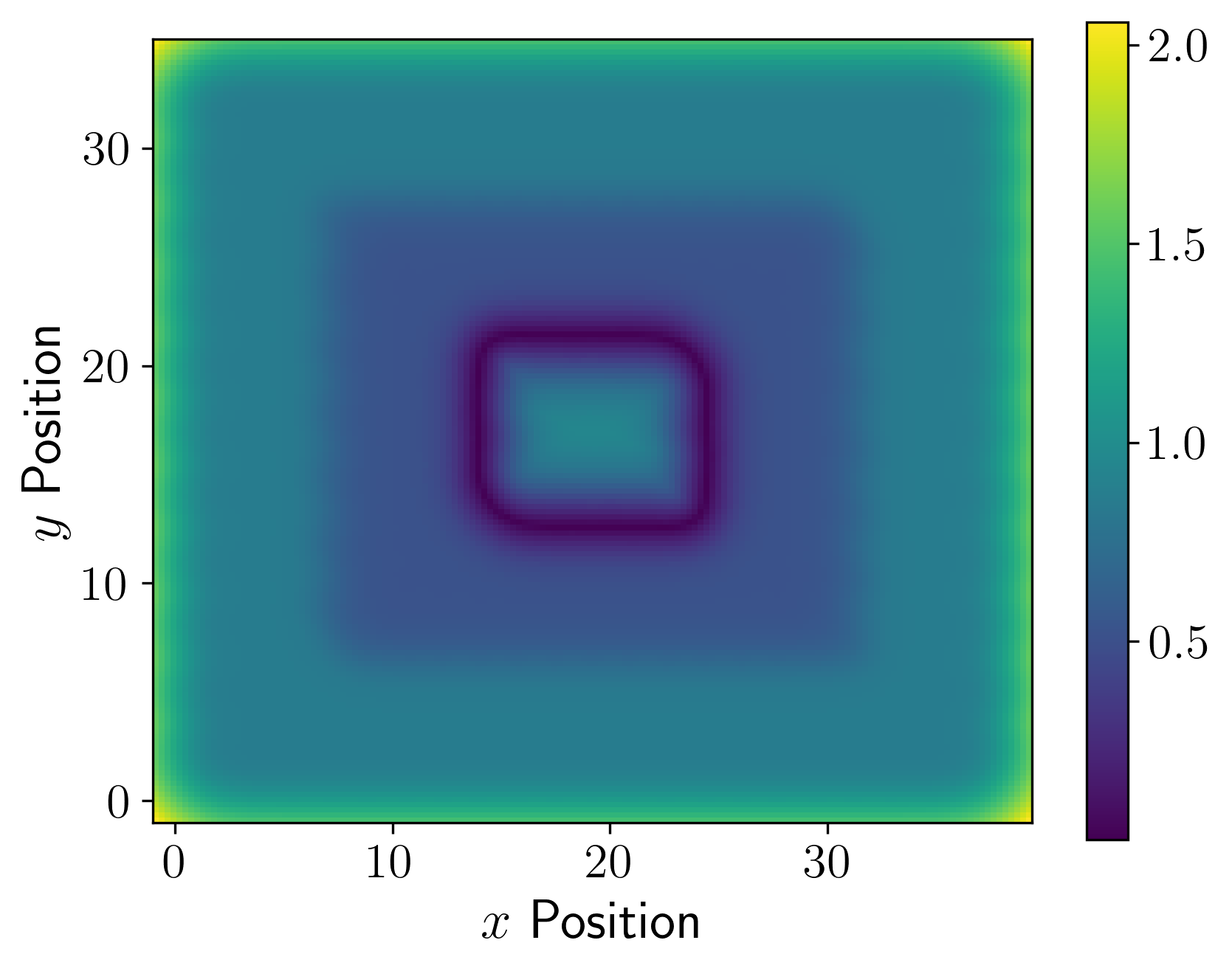}
    \end{minipage} &
    \begin{minipage}{0.3\textwidth}
    \captionof{figure}{\\$\mu_{(\kappa\bm{x},i)}^{\text{RQ}}\left(\kappa\bm{X},H\right)$}\label{fig:hhs_quadratic_gap_e_i}
    \includegraphics[width=\textwidth]{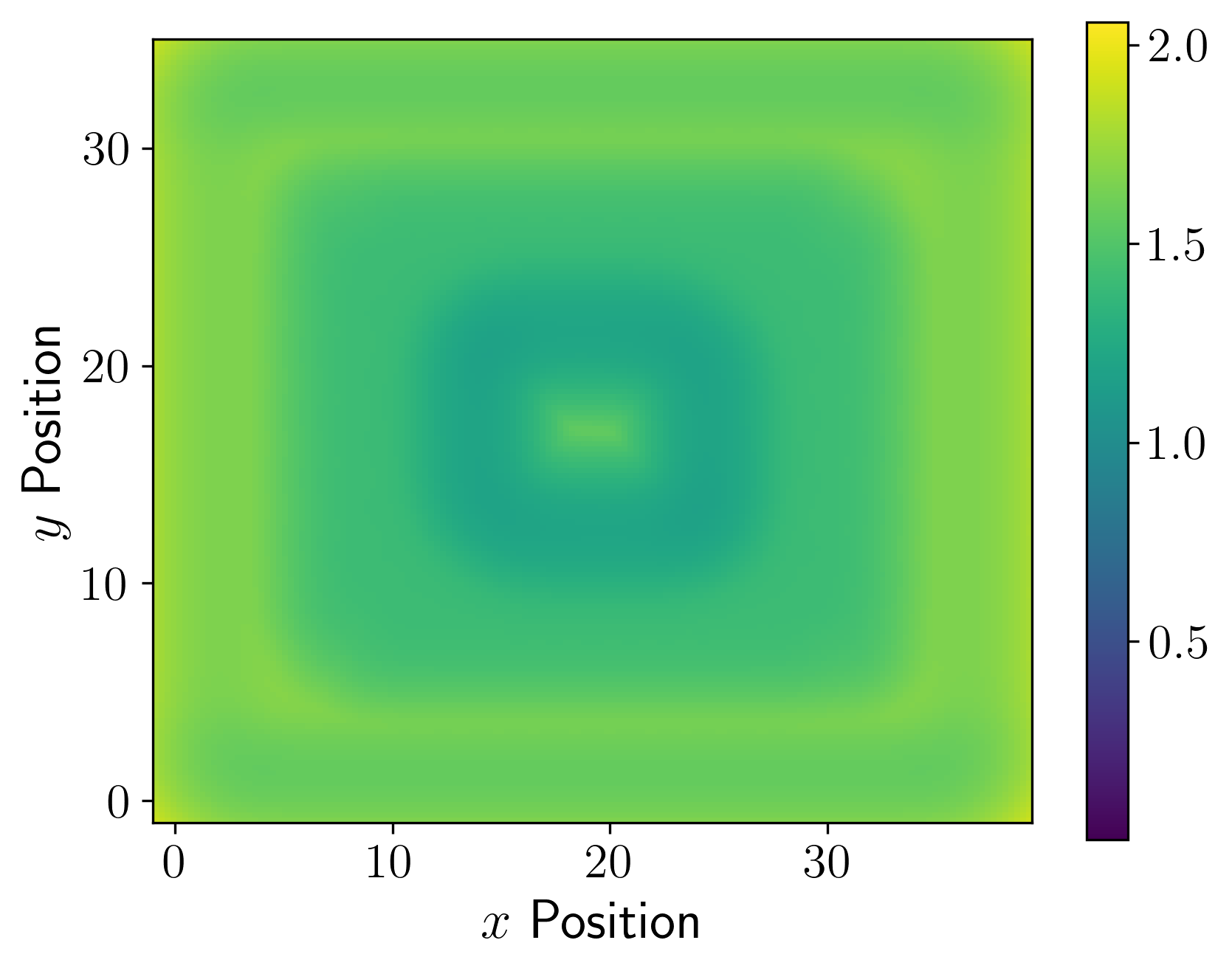}
    \end{minipage} &  \begin{minipage}{0.3\textwidth}
    \captionof{figure}{\\$\left\vert\bar{\mu}_{(\kappa\bm{x},i)}^{\text{C}}\left(\kappa\bm{X},H\right) - \mu_{(\bm{x},i)}^{\text{RQ}}\left(\kappa\bm{X},H\right)\right\vert$}\label{fig:hhs_gap_diff_e_i}
    \includegraphics[width=\textwidth]{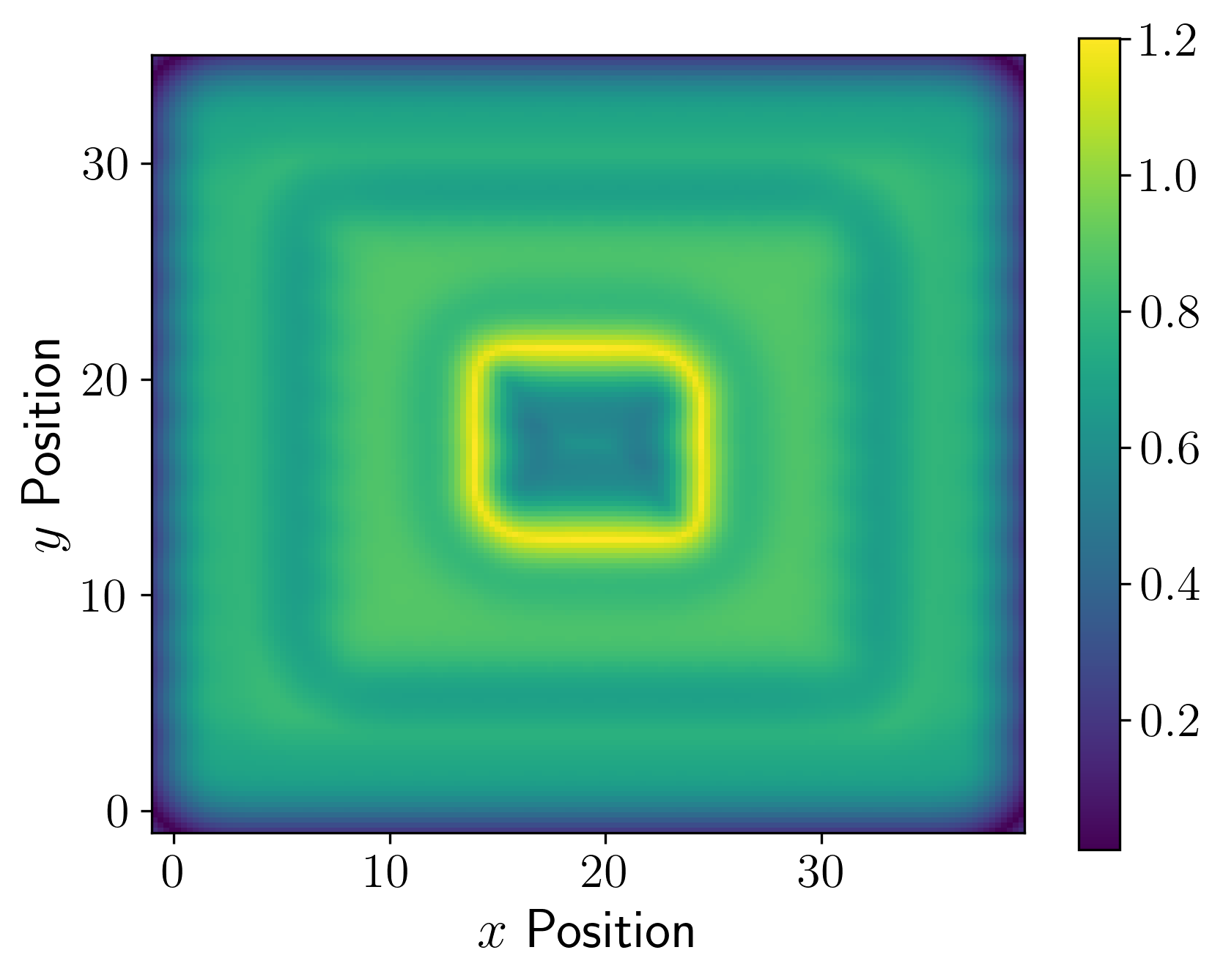}
    \end{minipage} \\ \hline \hline
    \end{tabular}
    \caption{Shown are the Clifford linear gap (Figure \ref{fig:hhs_linear_gap},\ref{fig:hhs_linear_gap_e_1}, and \ref{fig:hhs_linear_gap_e_i}), right quadratic gap (Figure \ref{fig:hhs_quadratic_gap},\ref{fig:hhs_quadratic_gap_e_1}, and \ref{fig:hhs_quadratic_gap_e_i}) as well as the difference between the gaps in Figure \ref{fig:hhs_gap_diff},\ref{fig:hhs_gap_diff_e_1},and \ref{fig:hhs_gap_diff_e_i} of the Haldane Heterostructure with $(\bm{x},E)=(x,y,E)$ being the probe site with varying energy prove site $E=0,1,i$ along the rows of the table. Note that $(\bm{X},H)=(X,Y,H)$ are the two Position operators and Hamiltonian.} \label{tbl:hhs_gaps_and_diff}
\end{table}

\section{Conclusion}
We have extended the theory of $\epsilon$-pseudospectrum \cite{cerjan_loring_vides_2023} to allow for the study of systems that have more than one non-Hermitian operator.
Along the way we have also proven some propositions about a special case of the non-Hermitian spectral localizer in \cite{cerjan_koekebier_Schulz_Baldes} which we call the non-Hermitian spectral localizer.
We have also shown that even with perturbations to non-Hermitian $B_i$ the quadratic gap is local.
With that said the benefits of the quadratic composite operator is that minimizing the corresponding gap function allows us to minimize eigen errors for various observables/matrices whithout increasing the computational complexity of the problem.

\section{Acknowledgements}
J.J.G.\ acknowledges support in part by an appointment to the Minority Educational Institution Student Partnership Program
(MEISPP), sponsored by the U.S. Department of Energy and administered by the Oak Ridge Institute for Science and Education, and the Center for Integrated Nanotechnologies Summer Research Initiative.
T.L.\ acknowledges support from the National Science Foundation, grant DMS-2349959. 
A.C.\ acknowledges support from the Laboratory Directed Research and Development program at Sandia National Laboratories.
This work was performed, in part, at the Center for Integrated Nanotechnologies, an Office of Science User Facility operated for the U.S. Department of Energy (DOE) Office of Science. Sandia National Laboratories is a multimission laboratory managed and operated by National Technology \& Engineering Solutions of Sandia, LLC, a wholly owned subsidiary of Honeywell International, Inc., for the U.S.\ DOE's National Nuclear Security Administration under contract DE-NA-0003525. The views expressed in the article do not necessarily represent the views of the U.S.\ DOE or the United States Government.

\printbibliography

@article{JHANJEE2017,
author = "SANGEETA JHANJEE",
title = "{Joint Spectral Theory Using Clifford Algebra}",
year = "2017",
month = "9",
url = "https://bridges.monash.edu/articles/thesis/Joint_Spectral_Theory_Using_Clifford_Algebra/5373127",
doi = "10.4225/03/59ae10e2b5ef5"
}

@article{kiorpelidis2024scaling,
title={Scaling of pseudospectra in exponentially sensitive lattices},
author={Kiorpelidis, Ioannis and Makris, Konstantinos G},
journal={arXiv preprint arXiv:2409.12036},
year={2024},
Abstract = {One of the important features of non-Hermitian Hamiltonians is the existence of a unique type of singularities, the so-called exceptional points. When the corresponding systems operate around such singularities, they exhibit ultrasensitive behavior that has no analog in conservative systems. An alternative way to realize such ultra-sensitivity relies on asymmetric couplings. Here we provide a comprehensive analysis based on pseudospectra, that shows the origin of exponential sensitivity, without relying on topological zero modes or the localization of all eigenstates (skin effect), but on the underlying extreme non-normality of the problem. In particular, we consider four different type of lattices (Hatano-Nelson, Sylvester-Kac, NH-SSH and NH-Random lattice) and identify the conditions for exponential sensitivity as a function of the lattice size. Complex and structured pseudospectra reveal the signatures of exponential sensitivity both on the eigenvalue spectra and on the underlying },
Author = {Kiorpelidis, Ioannis and Makris, Konstantinos G.},
Keywords = {Physics - Optics},
Title = {Scaling of pseudospectra in exponentially sensitive lattices.},
URL = {https://libproxy.unm.edu/login?url=https://search.ebscohost.com/login.aspx?direct=true&amp;db=edsarx&amp;AN=edsarx.2409.12036&amp;site=eds-live&amp;scope=site},
Year = {2024},
}

@article{hasan_kane_2010_RevModPhys.82.3045,
  title = {Colloquium: Topological insulators},
  author = {Hasan, M. Z. and Kane, C. L.},
  journal = {Rev. Mod. Phys.},
  volume = {82},
  issue = {4},
  pages = {3045--3067},
  numpages = {0},
  year = {2010},
  month = {11},
  publisher = {American Physical Society},
  doi = {10.1103/RevModPhys.82.3045},
  url = {https://link.aps.org/doi/10.1103/RevModPhys.82.3045}
}

@article{xiao_chang_niu_RevModPhys.82.1959,
  title = {Berry phase effects on electronic properties},
  author = {Xiao, Di and Chang, Ming-Che and Niu, Qian},
  journal = {Rev. Mod. Phys.},
  volume = {82},
  issue = {3},
  pages = {1959--2007},
  numpages = {0},
  year = {2010},
  month = {07},
  publisher = {American Physical Society},
  doi = {10.1103/RevModPhys.82.1959},
  url = {https://link.aps.org/doi/10.1103/RevModPhys.82.1959}
}

@article{bansil_lin_das_RevModPhys.88.021004,
  title = {Colloquium: Topological band theory},
  author = {Bansil, A. and Lin, Hsin and Das, Tanmoy},
  journal = {Rev. Mod. Phys.},
  volume = {88},
  issue = {2},
  pages = {021004},
  numpages = {37},
  year = {2016},
  month = {06},
  publisher = {American Physical Society},
  doi = {10.1103/RevModPhys.88.021004},
  url = {https://link.aps.org/doi/10.1103/RevModPhys.88.021004}
}

@article{ozawa_et_all_RevModPhys.91.015006,
  title = {Topological photonics},
  author = {Ozawa, Tomoki and Price, Hannah M. and Amo, Alberto and Goldman, Nathan and Hafezi, Mohammad and Lu, Ling and Rechtsman, Mikael C. and Schuster, David and Simon, Jonathan and Zilberberg, Oded and Carusotto, Iacopo},
  journal = {Rev. Mod. Phys.},
  volume = {91},
  issue = {1},
  pages = {015006},
  numpages = {76},
  year = {2019},
  month = {03},
  publisher = {American Physical Society},
  doi = {10.1103/RevModPhys.91.015006},
  url = {https://link.aps.org/doi/10.1103/RevModPhys.91.015006}
}

@article{trefethen_reddy_driscoll_1993_6dce06aa-3a1d-396a-ad07-b01838717a11,
 ISSN = {00368075, 10959203},
 URL = {http://www.jstor.org/stable/2882016},
 abstract = {Fluid flows that are smooth at low speeds become unstable and then turbulent at higher speeds. This phenomenon has traditionally been investigated by linearizing the equations of flow and testing for unstable eigenvalues of the linearized problem, but the results of such investigations agree poorly in many cases with experiments. Nevertheless, linear effects play a central role in hydrodynamic instability. A reconciliation of these findings with the traditional analysis is presented based on the "pseudospectra" of the linearized problem, which imply that small perturbations to the smooth flow may be amplified by factors on the order of 10$^5$ by a linear mechanism even though all the eigenmodes decay monotonically. The methods suggested here apply also to other problems in the mathematical sciences that involve nonorthogonal eigenfunctions.},
 author = {Lloyd N. Trefethen and Anne E. Trefethen and Satish C. Reddy and Tobin A. Driscoll},
 journal = {Science},
 number = {5121},
 pages = {578--584},
 publisher = {American Association for the Advancement of Science},
 title = {Hydrodynamic Stability Without Eigenvalues},
 urldate = {2024-07-10},
 volume = {261},
 year = {1993}
}

@article{trefethen_1997_edsjsr.213303719970901,
Abstract = {If a matrix or linear operator A is far from normal, its eigenvalues or, more generally, its spectrum may have little to do with its behavior as measured by quantities such as | A n | or | exp(tA) |. More may be learned by examining the sets in the complex plane known as the pseudospectra of A, defined by level curves of the norm of the resolvent, | (zI - A) -1 |. Five years ago, the author published a paper that presented computed pseudospectra of thirteen highly nonnormal matrices arising in various applications. Since that time, analogous computations have been carried out for differential and integral operators. This paper, a companion to the earlier one, presents ten examples, each chosen to illustrate one or more mathematical or physical principles.},
Author = {Trefethen, Lloyd N.},
ISSN = {00361445},
Journal = {SIAM Review},
Keywords = {spectrum, pseudospectra, resolvent, nonnormality, PII: S0036144595295284, Eigenvalues, Linear transformations, Laminar flow, Spectral theory, Matrices, Eigenfunctions, Eigenvectors, Geometric planes, Semigroups},
Number = {3},
Pages = {383 - 406},
Title = {Pseudospectra of Linear Operators.},
Volume = {39},
URL = {https://libproxy.unm.edu/login?url=https://search.ebscohost.com/login.aspx?direct=true&amp;db=edsjsr&amp;AN=edsjsr.2133037&amp;site=eds-live&amp;scope=site},
Year = {1997},
}

@book{trefethen_embree_2005_MR2155029,
    AUTHOR = {Trefethen, Lloyd N. and Embree, Mark},
     TITLE = {Spectra and pseudospectra},
      NOTE = {The behavior of nonnormal matrices and operators},
 PUBLISHER = {Princeton University Press, Princeton, NJ},
      YEAR = {2005},
     PAGES = {xviii+606},
      ISBN = {978-0-691-11946-5},
   MRCLASS = {15-02 (15A18 47A10 47A50)},
  MRNUMBER = {2155029},
MRREVIEWER = {David\ Scott\ Watkins},
}

@article{krejcirik_siegl_tater_viola_2015_10.1063/1.4934378,
    author = {Krejčiřík, D. and Siegl, P. and Tater, M. and Viola, J.},
    title = "{Pseudospectra in non-Hermitian quantum mechanics}",
    journal = {Journal of Mathematical Physics},
    volume = {56},
    number = {10},
    pages = {103513},
    year = {2015},
    month = {10},
    abstract = "{ We propose giving the mathematical concept of the pseudospectrum a central role in quantum mechanics with non-Hermitian operators. We relate pseudospectral properties to quasi-Hermiticity, similarity to self-adjoint operators, and basis properties of eigenfunctions. The abstract results are illustrated by unexpected wild properties of operators familiar from  PT -symmetric quantum mechanics. }",
    issn = {0022-2488},
    doi = {10.1063/1.4934378},
    url = {https://doi.org/10.1063/1.4934378},
    eprint = {https://pubs.aip.org/aip/jmp/article-pdf/doi/10.1063/1.4934378/13907006/103513\_1\_online.pdf},
}

@article{loring_schulz-baldes_2017_12504413820170101,
Abstract = {Odd index pairings of K1-group elements with Fredholm modules are of relevance in index theory, differential geometry and applications such as to topological insulators. For the concrete setting of operators on a Hilbert space over a lattice, it is shown how to calculate the resulting index as the signature of a suitably constructed finite-dimensional matrix, more precisely the finite volume restriction of what we call the spectral localizer. In presence of real symmetries, secondary Z2-invariants can be obtained as the sign of the Pfaffan of the spectral localizer. These results reconcile two complementary approaches to invariants of topological insulators. [ABSTRACT FROM AUTHOR]},
Author = {Loring, Terry A. and Schulz-Baldes, Hermann},
ISSN = {10769803},
Journal = {New York Journal of Mathematics},
Keywords = {K-theory, INVARIANTS (Mathematics), INDEX theory (Mathematics), FREDHOLM operators, TOPOLOGICAL insulators, spectral flow, topological insulator},
Pages = {1111 - 1140},
Title = {Finite volume calculation of K-theory invariants.},
Volume = {23},
URL = {https://libproxy.unm.edu/login?url=https://search.ebscohost.com/login.aspx?direct=true&amp;db=a9h&amp;AN=125044138&amp;site=eds-live&amp;scope=site},
Year = {2017},
}

@article{loring_schulz-baldes_2020_edsgcl.62678504020200301,
Abstract = {Even index pairings are integer-valued homotopy invariants combining an even Fredholm module with a [K.sub.0]-class specified by a projection. Numerous classical examples are known from differential and non-commutative geometry and physics. Here it is shown how to construct a finite-dimensional self-adjoint and invertible matrix, called the spectral localizer, such that half its signature is equal to the even index pairing. This makes the invariant numerically accessible. The index-theoretic proof heavily uses fuzzy spheres. Mathematics Subject Classification (2010). 46L80, 19K56, 58J28. Keywords. Index pairings, fuzzy spheres, numerical K-theory.},
Author = {Loring, Terry A. and Schulz-Baldes, Hermann},
ISSN = {1661-6952},
Journal = {Journal of Noncommutative Geometry},
Keywords = {Algebra},
Number = {1},
Pages = {1},
Title = {The spectral localizer for even index pairings.},
Volume = {14},
URL = {https://libproxy.unm.edu/login?url=https://search.ebscohost.com/login.aspx?direct=true&amp;db=edsgao&amp;AN=edsgcl.626785040&amp;site=eds-live&amp;scope=site},
Year = {2020},
}

@article{cerjan_loring_2024_127892,
title = {Even spheres as joint spectra of matrix models},
journal = {Journal of Mathematical Analysis and Applications},
volume = {531},
number = {1, Part 2},
pages = {127892},
year = {2024},
issn = {0022-247X},
doi = {https://doi.org/10.1016/j.jmaa.2023.127892},
url = {https://www.sciencedirect.com/science/article/pii/S0022247X23008958},
author = {Alexander Cerjan and Terry A. Loring},
keywords = {Spectrum, Noncommmutative, -theory},
abstract = {The Clifford spectrum is a form of joint spectrum for noncommuting matrices. This theory has been applied in photonics, condensed matter and string theory. In applications, the Clifford spectrum can be efficiently approximated using numerical methods, but this only is possible in low dimensional example. Here we examine the higher-dimensional spheres that can arise from theoretical examples. We also describe a constructive method to generate five real symmetric almost commuting matrices that have a K-theoretical obstruction to being close to commuting matrices. For this, we look to matrix models of topological electric circuits.}
}

@article{cerjan_prl_2024,
  title={Local markers for crystalline topology},
  author={Cerjan, Alexander and Loring, Terry A and Schulz-Baldes, Hermann},
  journal={Physical Review Letters},
  volume={132},
  number={7},
  pages={073803},
  year={2024},
  publisher={APS}
}

@article{kawabata_shiozaki_ueda_sato_PhysRevX.9.041015_2019,
  title = {Symmetry and Topology in Non-Hermitian Physics},
  author = {Kawabata, Kohei and Shiozaki, Ken and Ueda, Masahito and Sato, Masatoshi},
  journal = {Phys. Rev. X},
  volume = {9},
  issue = {4},
  pages = {041015},
  numpages = {52},
  year = {2019},
  month = {10},
  publisher = {American Physical Society},
  doi = {10.1103/PhysRevX.9.041015},
  url = {https://link.aps.org/doi/10.1103/PhysRevX.9.041015}
}

@article{cerjan_loring_vides_2023,
  author =       {Cerjan, Alexander and Loring, Terry and Vides, Fredy},
  title =        {Quadratic pseudospectrum for identifying localized states},
  journal =      {Journal of Mathematical Physics},
  volume =       {64},
  number =       {2},
  pages =        {023501},
  YEAR =         {2023},
  DOI =          {https://doi.org/10.1063/5.0098336},
  keywords =     {physics, mathematics}
}

@article{kahlil_cerjan_loring,
  title = {Classifying Topology in Photonic Heterostructures with Gapless Environments},
  author = {Dixon, Kahlil Y. and Loring, Terry A. and Cerjan, Alexander},
  journal = {Phys. Rev. Lett.},
  volume = {131},
  issue = {21},
  pages = {213801},
  numpages = {7},
  year = {2023},
  month = {11},
  publisher = {American Physical Society},
  doi = {10.1103/PhysRevLett.131.213801},
  url = {https://link.aps.org/doi/10.1103/PhysRevLett.131.213801}
}

@article{cerjan_loring_2022,
  title = {Local invariants identify topology in metals and gapless systems},
  author = {Cerjan, Alexander and Loring, Terry A.},
  journal = {Phys. Rev. B},
  pages = {064109},
  numpages = {10},
  year = {2022},
  month = {08},
  publisher = {American Physical Society},
  doi = {10.1103/PhysRevB.106.064109},
  url = {https://link.aps.org/doi/10.1103/PhysRevB.106.064109}
}

@article{cerjan_loring_pho_2022,
url = {https://doi.org/10.1515/nanoph-2022-0547},
title = {An operator-based approach to topological photonics},
author = {Alexander Cerjan and Terry A. Loring},
pages = {4765--4780},
volume = {11},
number = {21},
journal = {Nanophotonics},
doi = {doi:10.1515/nanoph-2022-0547},
year = {2022},
lastchecked = {2024-07-08}
}

@article{loring,
  author =       {Terry, Loring},
  title =        {K-theory and pseudospectra for topological insulators},
  journal =      {Annals of Physics},
  volume =       {356},
  pages =        {383-416},
  year =         {2015},
  DOI =          {https://doi.org/10.1016/j.aop.2015.02.031},
  keywords =     {physics, mathematics}
}

@article{cerjan_koekebier_Schulz_Baldes,
  author =       {Cerjan, Alexander and Koekenbier, Lars and Schulz-Baldes, Hermann},
  title =        {Spectral localizer for line-gapped non-hermitian systems},
  journal =      {Journal of Mathematical Physics},
  volume =       {64},
  number =       {8},
  pages =        {082102},
  year =         {2023},
  DOI =          {https://doi.org/10.1063/5.0150995},
  keywords =     {physics, mathematics}
}

@article{ruhe,
  author =       {Ruhe, Axel},
  title =        {On the Closeness of Elgenvalues and Singular Values for Almost Normal Matrices},
  journal =      {Linear Algebra and it's Applications},
  volume =       {11},
  pages =        {87--93},
  year =         {1975},
  DOI =          {https://doi.org/10.1016/0024-3795(75)90119-6.},
  keywords =     {mathematics, linear algebra}
}

@article{elsner_paardekooper,
  author =       {Elsner, L. and Paardekooper, M.H.C.},
  title =        {On Measures of Nonnormality of Matrices},
  journal =      {Linear Algebra and it's Applications},
  volume =       {92},
  pages =        {107--124},
  year =         {1987},
  DOI =          {https://doi.org/10.1016/0024-3795(87)90253-9},
  keywords =     {mathematics, linear algebra}
}

@article{kahan,
  author =       {W, Kahan},
  title =        {Spectra of Nearly Hermitian Matrices},
  journal =      {Proceedings of the American Mathematical Society},
  volume =       {48},
  number =       {1},
  pages =        {11--17},
  year =         {1975},
  DOI =          {https://doi.org/10.2307/2040683},
  keywords =     {mathematics, linear algebra}
}

@article{haldane,
  author =       {Haldane, Frederick D.},
  title =        {Model for a Quantum Hall Effect without Landau Levels: Condensed-Matter Realization of the ``Parity Anomaly''},
  journal =      {Physical Review Letters},
  volume =       {61},
  number =       {18},
  pages =        {2015--2018},
  year =         {1988},
  DOI =          {https://doi-org.libproxy.unm.edu/10.1103/PhysRevLett.61.2015},
  keywords =     {condensed matter, physics}
}

@article{tls_paper,
  author =       {Ramy, El-Ganainy and Konstantinos, G. Makris and Mercedeh, Khajavikhan and Ziad, H. Musslimani and Stefan, Rotter and Demetrios, N. Christodoulides},
  title =        {Non-Hermitian physics and PT symmetry},
  journal =      {Nature Physics},
  volume =       {14},
  pages =        {11-19},
  year =         {2018},
  DOI =          {https://doi.org/10.1038/nphys4323},
  keywords =     {physics, non-hermitian}
}

@book{hirvensalo_2001,
Author = {Hirvensalo, Mika},
ISBN = {3540667830},
Publisher = {Springer},
Series = {Natural computing series},
Title = {Quantum computing.},
URL = {https://libproxy.unm.edu/login?url=https://search.ebscohost.com/login.aspx?direct=true&amp;db=cat05987a&amp;AN=unm.46538501&amp;site=eds-live&amp;scope=site},
Year = {2001},
}

@book{golub_vanloan,
  title={Matrix Computations},
  author={Golub, Gene H. and Van Loan, Charles F.},
  isbn={9781421407944},
  series={Johns Hopkins Studies in the Mathematical Sciences, volume 3},
  year={2013},
  publisher={Johns Hopkins University Press},
  keywords = {Mathematics, Matrix Analysis}
}

@book{bhatia,
  title={Matrix Analysis},
  author={Rajendra, Bhatia},
  isbn={9780387948461},
  series={Graduate Texts in Mathematics 169},
  year={1997},
  publisher={Springer-Verlag New York},
  keywords = {Mathematics, Matrix Analysis}
}

@book{horn_johnson,
  title={Matrix Analysis},
  author={Roger A., Horn and Charles R. Johnson},
  isbn={9780521305860},
  year={1985},
  publisher={Cambridge University Press},
  keywords = {Mathematics, Matrix Analysis}
}

@article{Esaki_Sato_Hasebe_Kohmoto_PR_2011,
  title = {Edge states and topological phases in non-Hermitian systems},
  author = {Esaki, Kenta and Sato, Masatoshi and Hasebe, Kazuki and Kohmoto, Mahito},
  journal = {Phys. Rev. B},
  volume = {84},
  issue = {20},
  pages = {205128},
  numpages = {19},
  year = {2011},
  month = {11},
  publisher = {American Physical Society},
  doi = {10.1103/PhysRevB.84.205128},
  url = {https://link.aps.org/doi/10.1103/PhysRevB.84.205128}
}

@article{Shen_Zhen_Fu_PRL_2018,
  title = {Topological Band Theory for Non-Hermitian Hamiltonians},
  author = {Shen, Huitao and Zhen, Bo and Fu, Liang},
  journal = {Phys. Rev. Lett.},
  volume = {120},
  issue = {14},
  pages = {146402},
  numpages = {6},
  year = {2018},
  month = {04},
  publisher = {American Physical Society},
  doi = {10.1103/PhysRevLett.120.146402},
  url = {https://link.aps.org/doi/10.1103/PhysRevLett.120.146402}
}

@article{Kawabata_etal_nature_2019,
  title={Topological unification of time-reversal and particle-hole symmetries in non-Hermitian physics},
  author={Kawabata, Kohei and Higashikawa, Sho and Gong, Zongping and Ashida, Yuto and Ueda, Masahito},
  journal={Nature communications},
  volume={10},
  number={1},
  pages={297},
  year={2019},
  publisher={Nature Publishing Group UK London}
}

@article{purcell_spontaneous_1946,
	title = {Spontaneous emission probabilities at radio frequencies},
	volume = {69},
	journal = {Phys. Rev.},
	author = {Purcell, E. M.},
	year = {1946},
	pages = {681},
	file = {purcell_1946_spontaneous_emission_probabilities_at_radio_frequencies.pdf:C\:\\Users\\awcerja\\Library\\purcell_1946_spontaneous_emission_probabilities_at_radio_frequencies.pdf:application/pdf},
}

@article{Arnold1972,
  author = {Arnol\textquotesingle d, V. I.},
  title = {Modes and Quasimodes},
  journal = {Functional Analysis and Its Applications},
  volume = {6},
  number = {2},
  pages = {94-101},
  year = {1972},
  url = {https://doi.org/10.1007/BF01077511}
}

@article{landau_1976,
  author = {Landau, H. J.},
  title = {{Loss in unstable resonators.}},
  journal = {Journal of the Optical Society of America},
  volume = {66},
  pages = {529-525},
  year = {1976},
  url = {https://research.ebsco.com/linkprocessor/plink?id=4a1855b7-6ccd-3d4b-9df0-d12789ef43da}
}

@article{landau_1977,
  author = {Landau, Henry},
  publisher = {American Mathematical Society (AMS)},
  title = {The Notion of Approximate Eigenvalues Applied to an Integral
    Equation of Laser Theory},
  journal = {Quarterly of Applied Mathematics},
  volume = {35},
  pages = {172-165},
  year = {1977},
  url = {https://research.ebsco.com/linkprocessor/plink?id=2fa66e2d-e76a-36ff-af48-6cc007bdc8b4}
}

@article{Trefethen1992,
  author    = {N. Trefethen},
  title     = {Pseudospectra of matrices},
  journal   = {Numerical Analysis 1991},
  editor    = {D. F. Griffiths and G. A. Watson},
  pages     = {234--266},
  publisher = {Longman Scientific and Technical},
  address   = {Harlow, Essex, UK},
  year      = {1992},
  note      = {(46, 58, 71, 236, 371, 381, 451, 452)}
}

\end{document}